\newtheorem{theorem}{Theorem}
\newtheorem{lemma}{Lemma}
\newtheorem{corollary}{Corollary}
\theoremstyle{definition}
\newtheorem{definition}{Definition}
\newcommand{\vtr}[1]{\mathbf{#1}}
\newcommand{\norm}[1]{\lVert{#1}\rVert}
\newcommand{\dotprod}[2]{\langle{#1}, {#2}\rangle}
\newcommand{\dx}{d_\mathcal{X}}
\newcommand{\noise}{\boldsymbol{\eta}}
\begin{document}

\title{\texorpdfstring{$d_\mathcal{X}$}{dX}-Privacy for Text and the Curse of Dimensionality}{\thanks{This is the full version of the paper accepted for publication in the Proceedings on the 26th Privacy Enhancing Technologies Symposium (PoPETs) 2026.}}

%%
%% The "author" command and its associated commands are used to define
%% the authors and their affiliations.
%% Of note is the shared affiliation of the first two authors, and the
%% "authornote" and "authornotemark" commands
%% used to denote shared contribution to the research.
\author{Hassan Jameel Asghar}
\email{hassan.asghar@mq.edu.au}
\orcid{0000-0001-6168-6497}
\affiliation{%
	\institution{Macquarie University}
	\city{Sydney}
    \state{}
	\country{Australia}
}
\author{Robin Carpentier}
\email{robin.carpentier@mq.edu.au}
\orcid{0000-0003-1369-2248}
\affiliation{%
	\institution{Macquarie University}
	\city{Sydney}
    \state{}
	\country{Australia}
}
\author{Benjamin Zi Hao Zhao}
\email{ben\_zi.zhao@mq.edu.au}
\orcid{0000-0002-2774-2675}
\affiliation{%
	\institution{Macquarie University}
	\city{Sydney}
    \state{}
	\country{Australia}
}
\author{Dali Kaafar}
\email{dali.kaafar@mq.edu.au}
\orcid{0000-0003-2714-0276}
\affiliation{%
	\institution{Macquarie University}
	\city{Sydney}
    \state{}
	\country{Australia}
}

%%
%% By default, the full list of authors will be used in the page
%% headers. Often, this list is too long, and will overlap
%% other information printed in the page headers. This command allows
%% the author to define a more concise list
%% of authors' names for this purpose.
%\renewcommand{\shortauthors}{Trovato et al.}

%%
%% The abstract is a short summary of the work to be presented in the
%% article.
\begin{abstract}
A widely used method to ensure privacy of unstructured text data is the multidimensional Laplace mechanism for $\dx$-privacy, which is a relaxation of differential privacy for metric spaces. We identify an intriguing peculiarity of this mechanism. When applied on a word-by-word basis, the mechanism either outputs the original word, or completely dissimilar words, and very rarely outputs semantically similar words. We investigate this observation in detail, and tie it to the fact that the distance of the nearest neighbor of a word in any word embedding model (which are high-dimensional) is much larger than the relative difference in distances to any of its two consecutive neighbors. We also show that the dot product of the multidimensional Laplace noise vector with any word embedding plays a crucial role in designating the nearest neighbor. We derive the distribution, moments and tail bounds of this dot product. We further propose a fix as a post-processing step, which satisfactorily removes the above-mentioned issue. 
\end{abstract}

%%
%% The code below is generated by the tool at http://dl.acm.org/ccs.cfm.
%% Please copy and paste the code instead of the example below.
%%
% \begin{CCSXML}
% <ccs2012>
% <concept>
% <concept_id>10002978</concept_id>
% <concept_desc>Security and privacy</concept_desc>
% <concept_significance>500</concept_significance>
% </concept>
% <concept>
% <concept_id>10002978.10002991.10002995</concept_id>
% <concept_desc>Security and privacy~Privacy-preserving protocols</concept_desc>
% <concept_significance>500</concept_significance>
% </concept>
% </ccs2012>
% \end{CCSXML}

% \ccsdesc[500]{Security and privacy}
% \ccsdesc[500]{Security and privacy~Privacy-preserving protocols}

%%
%% Keywords. The author(s) should pick words that accurately describe
%% the work being presented. Separate the keywords with commas.
\keywords{Differential privacy, word embeddings, multidimensional Laplace mechanism}

%% A "teaser" image appears between the author and affiliation
%% information and the body of the document, and typically spans the
%% page.
% \begin{teaserfigure}
%   \includegraphics[width=\textwidth]{sampleteaser}
%   \caption{Seattle Mariners at Spring Training, 2010.}
%   \Description{Enjoying the baseball game from the third-base
%   seats. Ichiro Suzuki preparing to bat.}
%   \label{fig:teaser}
% \end{teaserfigure}

%\received{20 February 2007}
%\received[revised]{12 March 2009}
%\received[accepted]{5 June 2009}
%\renewcommand\footnotetextcopyrightpermission[1]{} % removes footnote with conference information in first column
%%
%% This command processes the author and affiliation and title
%% information and builds the first part of the formatted document.
\settopmatter{printfolios=true, printacmref=false}

\maketitle

\section{Introduction}
Unstructured text is the most common method of communication in the real-world, in the form of emails, sharing reports, and entering prompts to generative artificial intelligence (AI) models to quote a more contemporary example. In many scenarios, part of the text needs to be sanitized due to privacy reasons. Examples include authorship anonymization to protect identities of whistleblowers, redacting sensitive information before releasing documents as part of freedom of information requests, and ensuring that user submitted prompts to a third-party generative AI model do not contain sensitive company information. 

With the increasing use of differential privacy~\cite{dwork2006calibrating} as a principled way of releasing data with privacy in many real-world applications, it has also been proposed as an automated way to achieve privacy in the text domain. More specifically, a generalization of differential privacy to metric spaces, called $\dx$-privacy~\cite{chatzikokolakis2013broadening}, has been proposed as a method to sanitize sensitive text~\cite{fernandes2019generalised, feyisetan2020dxprivacy, qu2021bert, yue2021dxprivacy, li2023llm-prompt, mattern-limits-word-level-dp}.\footnote{$\dx$-privacy is also commonly known as \emph{metric privacy}.} Informally $\dx$-privacy ensures that it is harder to distinguish objects in a metric space that are closer to one another under the distance metric of the metric space than objects further away. This promises better utility than ordinary differential privacy as in many use cases it suffices to provide privacy up to a certain granularity. An analogy is location data; disclosing the city or postcode one resides in is less of a concern than the exact street address. 

There are a number of ways in which $\dx$-privacy can be used to santize text data. One common method is word-by-word sanitization through the \emph{word-level multidimensional Laplace} mechanism for $\dx$-privacy, applied in the following manner~\cite{fernandes2019generalised, feyisetan2020dxprivacy, qu2021bert, yue2021dxprivacy, li2023llm-prompt}. We assume a pre-trained machine learning model is available which vectorizes words, i.e., converts them into embeddings in a high-dimensional space. This pre-trained model is public information. The words form the vocabulary, and their corresponding embeddings define the embedding space. Given a word in a sentence to be sanitized, a noise vector is generated calibrated according to the privacy parameter $\epsilon$, and then added to the word embedding resulting in a vector in the embedding space. Almost always, this does not correspond to the embedding of any word in the vocabulary of the embedding model. We can instead find the nearest neighbor (e.g., with respect to the Euclidean distance) to this noisy embedding in the embedding space, and output the resulting word. This method can be applied independently to each word, and the complete sentence can then be used as the sanitized text.\footnote{Apart from this use case, this mechanism has also been used for author obfuscation~\cite{fernandes2019generalised}, and to provide privacy where each user's input is a set of one or more words~\cite{feyisetan2020dxprivacy}.} 

When sanitizing text through this method we encountered a confounding observation~\cite{robin2024santizellm}. 
If this method is applied several times on the same word, one expects to see the original word, followed by its closest neighbors as the most frequent words output by the mechanism, with the frequency dropping smoothly but exponentially as we move away from the original word. However, we observed that through the entire spectrum of values of $\epsilon$, the mechanism almost always outputs either the original word or words which are very far off in distance and semantic similarity to the original word. The nearest neighbors of the original word are seldom output by the mechanism~\cite{robin2024santizellm}. 

One may hasten to attribute this phenomenon to the high dimensional nature of the embedding space, since the Euclidean distance metric is known to suffer from the so-called \emph{curse of dimensionality}~\cite{beyer1999nn-useful, aggarwal2001surprising-behavior-distance}. However, this is not true in this instance, or not true in the way we may think, since the word embedding models are trained to ensure that the Euclidean distance is an effective metric to identify similar words in the embedding space~\cite{glove, word2vec}. See Section~\ref{sec:rw} for a detailed discussion on this topic. 

%The negative results related to Euclidean metric in higher dimensions in~\cite{beyer1999nn-useful}, for instance, apply to independent, identically distributed dimensions, and not to data with correlated dimensions~\cite{durrant2009nn-useful-converse}. The embedding space contains correlated clusters of words, which explains why the Euclidean metric behaves as expected despite the high-dimensional space.  

In this paper, we investigate this observation in detail and identify the post-processing step of the nearest neighbor search of the perturbed embedding in a high dimensional space as the culprit. Along the way, we unravel a number of other contributions which we believe will prove useful to construct better $\dx$-privacy mechanisms for text data. The multidimensional Laplace mechanism is fundamental to $\dx$-privacy due to its ease of implementation, and hence wide-spread use, akin to the Laplace mechanism~\cite{dwork2006calibrating} for ordinary differential privacy.

\begin{itemize}[leftmargin=*]
    \item We highlight the above-mentioned issue in a very commonly used $\dx$-privacy mechanism for text data~\cite{fernandes2019generalised, feyisetan2020dxprivacy, qu2021bert, yue2021dxprivacy, li2023llm-prompt, mattern-limits-word-level-dp}, whereby across different values of $\epsilon$ and different word embedding models, almost always either the word is not replaced, or is replaced by a completely dissimilar word. 
    \item While analyzing the above phenomenon it turns out that the dot product of the noise vector against any word embedding plays a crucial role. This distribution, which we call the \emph{noisy dot product} distribution, has a length component, which is known to follow the gamma distribution~\cite{feyisetan2020dxprivacy}, and an angular component. We derive the probability density function of the angular component and its moments. 
    \item We show that the angular component is sub-Gaussian with variance $1/n$, where $n$ is the number of dimensions. This means that the cosine of the angle of the noise vector with any embedding is within $\mathcal{O}(1/\sqrt{n})$, and hence the noise vector is increasingly orthogonal to any embedding regardless of the distribution of words in the embedding model. We further prove tail-bounds on the noisy dot product distribution showing that its mass is concentrated within $\mathcal{O}(\sqrt{n}/\epsilon)$.
    \item Through our analysis, we show that the aforementioned observation is related to the fact that in high dimensions the nearest neighbor of a word is more distant than the relative difference of the distances of its two nearest neighbors. We prove necessary conditions on the initial word to be output by the mechanism as opposed to its close neighbors by relating it to the noisy dot product distribution. 
    Previous works~\cite{carvalho2023tem, xu2020mahalanobis, chatzikokolakis2015elastic, biswas2023privic} have highlighted the vulnerability of outliers in $\dx$-privacy. However, their observation is related to the data distribution containing some isolated points. In contrast, our work shows that in high-dimensional settings the outlier issue is the norm rather than the exception.
    \item We propose a possible mitigation as a further post-processing step, and show that the resulting mechanism gives better utility and behaves as expected. The advantage of our proposed fix is that it does not amend the original mechanism, and only adds an extra step. We have released the code of our experiments, including this fix, to promote reproducibility.\footnote{\url{https://github.com/r-carpentier/dx-privacy-curse}\label{footnoteCodeRepo}}
\end{itemize}

\section{The Unusual Behavior of Word-Level \texorpdfstring{$\dx$}{dX}-Privacy}
\label{sec:unusual}
\subsubsection*{The Observation}
We first show an example of applying $\dx$-privacy on text data with the algorithm outlined in the introduction. Formal introduction to $\dx$-privacy and concrete details of this algorithm are presented in Section~\ref{subsec:dx-privacy}. Consider the following text:
%\robin{Example filled with epsilon = 10, Glove 100d}
\begin{quote}
    Maria Gonzalez, a patient at Riverside Clinic, was diagnosed with depression on March 5, 2023. She currently lives at 789 Oak Drive, San Francisco. Maria has been prescribed medication and is undergoing weekly therapy sessions.
\end{quote}
To sanitize this text we first choose an embedding model. Let us say we use the \verb+GloVe+ embedding model~\cite{glove} with $n = 100$ dimensions. We then take the first word of the text, pass it through the embedding model to obtain its embedding $\vtr{w}$. We then sample a noise vector $\noise$ according to a distribution scaled to $\epsilon$ and $n$. For this example, we choose $\epsilon = 10$. We thus obtain the noisy embedding $\vtr{w}^* = \vtr{w} + \noise$. This does not correspond to any word in the vocabulary of \verb+GloVe+. We therefore, search the nearest neighbor in the embedding space of \verb+GloVe+ to $\vtr{w}^*$ and output the corresponding word as the replacement to the original word. This process is repeated for each word\footnote{Commas and periods in the original text were excluded from the sanitization for readability} resulting in the following sanitized text: 
\begin{quote}
    maria carvalho, full patient raised bottomland clinic, was diagnosed with evanston on 8 4, 2028 . she represent lives ` 789 poplar drive, st. antonio. maria deeply were prescribed medication deadlock subject undergone quiz therapy approaches.
\end{quote}
In most cases, as above, the sentence is ill-structured and grammatically incorrect. This can be corrected by another post-processing step, for example by leveraging a generative AI model. The following text is obtained by asking \verb+ChatGPT 4o+\footnote{See \url{https://chatgpt.com}.} to correct the grammar of the previous text:
\begin{quote}
    Maria Carvalho, a full-time patient at Bottomland Clinic, was diagnosed in Evanston on August 4, 2028. She resides at 789 Poplar Drive, St. Antonio. Maria was prescribed medication and has undergone various therapeutic approaches, including cognitive therapy.
\end{quote}
As we can see after this step, the sentence is coherent with some of the exact names and dates replaced, which is desirable for privacy. 
Now, if we use smaller values of $\epsilon$ (more privacy) we expect the original words to be replaced by their distant neighbors, i.e., words that have little to no similarity with the original word. However, as we increase $\epsilon$ (less privacy), we would expect most words to either remain unchanged or replaced by synonyms. At least this is what we expect if we apply the usual Laplace mechanism~\cite{dwork2006calibrating} of differential privacy to one-dimensional data. Figure~\ref{fig:lap-1d} shows an example of one-dimensional data, in which we plot the result of applying the Laplace mechanism with different values of $\epsilon$ (assuming sensitivity one) with the universe of values confined to the set of positive integers. The true value is $a = 400,000$, and after adding noise we round it to the nearest integer. Any integer value within $a \pm 100$ is considered a close neighbor of $a$ and all other values considered distant neighbors. For each value of $\epsilon$ we sample $10,000$ noisy answers and report the proportion of times the original, close neighbors and distant neighbors are output by the mechanism. Initially with $\epsilon = 0.001$, distant neighbors are output more frequently (Figure~\ref{fig:lap-1d}, left). This trend is quickly flipped as we increase $\epsilon$. With $\epsilon > 1$ we see that it is either the original value or the close neighbors that are output by the mechanism and very rarely any distant neighbors (Figure~\ref{fig:lap-1d}, right). 

% \begin{figure}[!ht]
%     \centering
%     \begin{subfigure}[t]{0.49\linewidth}
%         \includegraphics[width=\linewidth]{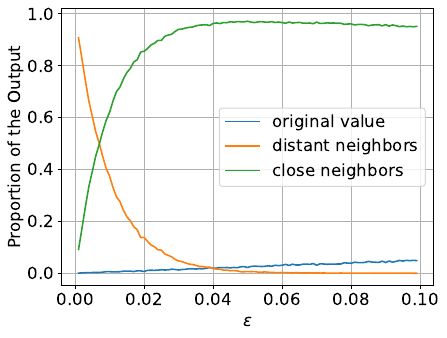}
%         \caption{Close-up perspective of $\epsilon$}
%         \label{subfig:lap-1d-close-up}
%     \end{subfigure}
%     % \hspace{10mm}%
%     \begin{subfigure}[t]{0.49\linewidth}
%         \includegraphics[width=\linewidth]{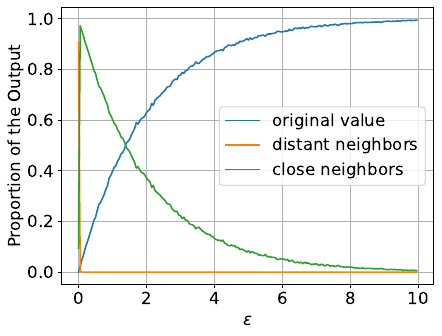}
%         \caption{Greater view of $\epsilon$}
%         \label{subfig:lap-1d}
%     \end{subfigure}
%     \caption{The proportion of times the original value, its close neighbors and distant neighbors are output by the Laplace mechanism of differential privacy. The original value is $a = 400,000$, close neighbors are all integers within $a \pm 100$, and all other integer values are distant neighbors.}
%     \label{fig:lap-1d}
% \end{figure}
\begin{figure}[!ht]
    \centering
    \includegraphics[width=\columnwidth]{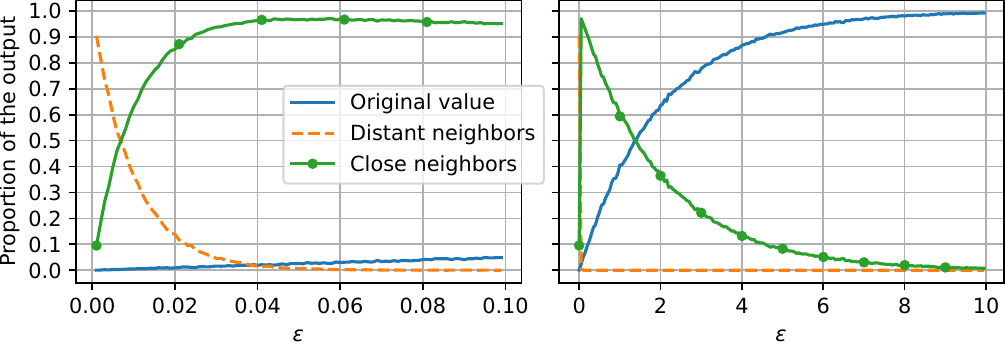}
    \caption{The proportion of times the original value, its close neighbors and distant neighbors are output by the Laplace mechanism of differential privacy. The original value is $a = 400,000$, close neighbors are all integers within $a \pm 100$, and all other integer values are distant neighbors.}
    \label{fig:lap-1d}
\end{figure}

However, we observed that through the text counterpart of this algorithm, i.e., the multidimensional Laplace mechanism, either the initial word is output by the mechanism, or very far off words, and the nearest neighbors are seldom encountered. This is visualized in Figure~\ref{fig:lap-md} where we show the proportion of times the original word is output by the mechanism, against its close neighbors, the first 100 nearest neighbors, and distant neighbors, which constitute the rest of the words in the vocabulary. These plots were obtained by randomly sampling 5,000 words. 

At lower values of $\epsilon$ only far off words are selected, but as we increase $\epsilon$, the original word dominates at the expense of its close neighbors. The end-result of course is bad privacy-utility tradeoff. The figure shows the pattern for the \verb+GloVe-Wiki+ and \verb+fastText+ embedding models, both with 300 dimensions. The result is similar for other embedding models which are detailed in Section~\ref{subsec:word-embeddings}. 
For a more detailed description of this observation please see~\cite{robin2024santizellm}.

These results are also consistent with the results shown in~\cite{feyisetan2020dxprivacy, qu2021bert}, where the authors show the rather high frequency of unmodified words but no commentary is provided as to why this may be the case other than attributing it to the behavior of the mechanism for higher values of $\epsilon$. Indeed, the trend is more peculiar at smaller values of $\epsilon$. 

\begin{figure}[!ht]
    \centering
    \includegraphics[width=\columnwidth]{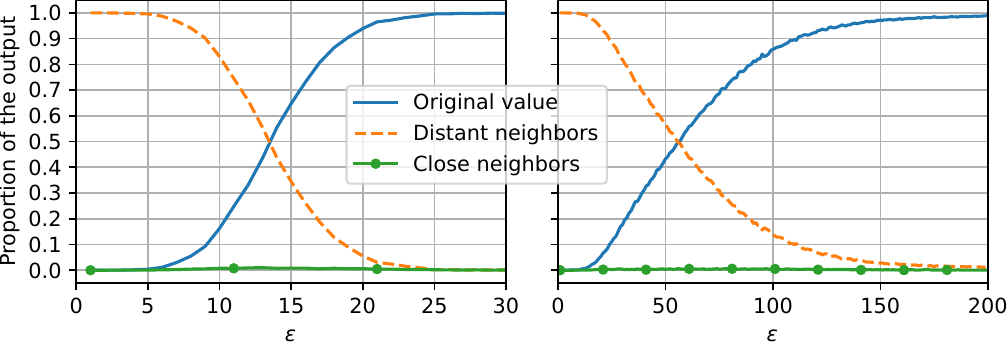}
    \caption{The proportion of times the original word, its close neighbors and distant neighbors are output by the multidimensional Laplace mechanism of $\dx$-privacy on the \texttt{Glove-Wiki} (left) and \texttt{Word2Vec} (right) word embedding models. Close neighbors are the first 100 nearest neighbors, and all other words are distant neighbors.}
    \label{fig:lap-md}
\end{figure}

\subsubsection*{An Illustrative Two-Dimensional Example}
\label{subsec:illustration}
The issues with $\dx$-privacy over high-dimensional word embeddings can be illustrated through an analogy of $\dx$-privacy for location data.\footnote{$\dx$-privacy for location data is studied in detail in~\cite{andres2013geo} where the authors term the notion, geo-indistinguishability.} Assume that we have a dataset containing the resident states of the inhabitants of the United States of America (USA). We exclude the state of Alaska in this example. We wish to release this dataset with privacy, meaning that an individual can plausibly deny that he/she resides in a particular state in the released dataset. To do so, we perturb each individual's location using $\dx$-privacy. Assume each person's location is given as a point on the real plane (latitude and longitude, if you like). One way to satisfy $\dx$-privacy is to sample a noise vector, add it to the original location, and then find the nearest state to the perturbed location. The last step is a post-processing step, and maintains $\dx$-privacy. The noise is a two-dimensional vector sampled from a particular distribution~\cite{andres2013geo} scaled by the privacy parameter $\epsilon$ and the distance metric is the Euclidean distance in practice which we describe in detail in Section~\ref{sec:bg}. This is exactly the mechanism commonly used to provide (word-level) $\dx$-privacy in text data~\cite{fernandes2019generalised, feyisetan2020dxprivacy, qu2021bert, yue2021dxprivacy, li2023llm-prompt}, except that location coordinates are replaced by word embeddings and states are replaced by words in the vocabulary.

Now consider a resident of the state of Hawaii. For values of $\epsilon$ within a certain range, the sampled perturbed location will fall in the North Pacific Ocean with overwhelming probability. Since our universe of locations is the set of states in USA, the nearest location is Hawaii again. Compare this to a resident of Kansas. For the same value of $\epsilon$, the noisy location in this case is likely to land on the neighboring states and beyond. Figure~\ref{fig:usa-map-illustrate} illustrates this point. This observation has both privacy and utility implications. 

\begin{figure}[h]
    \centering
    \includegraphics[width=1\linewidth]{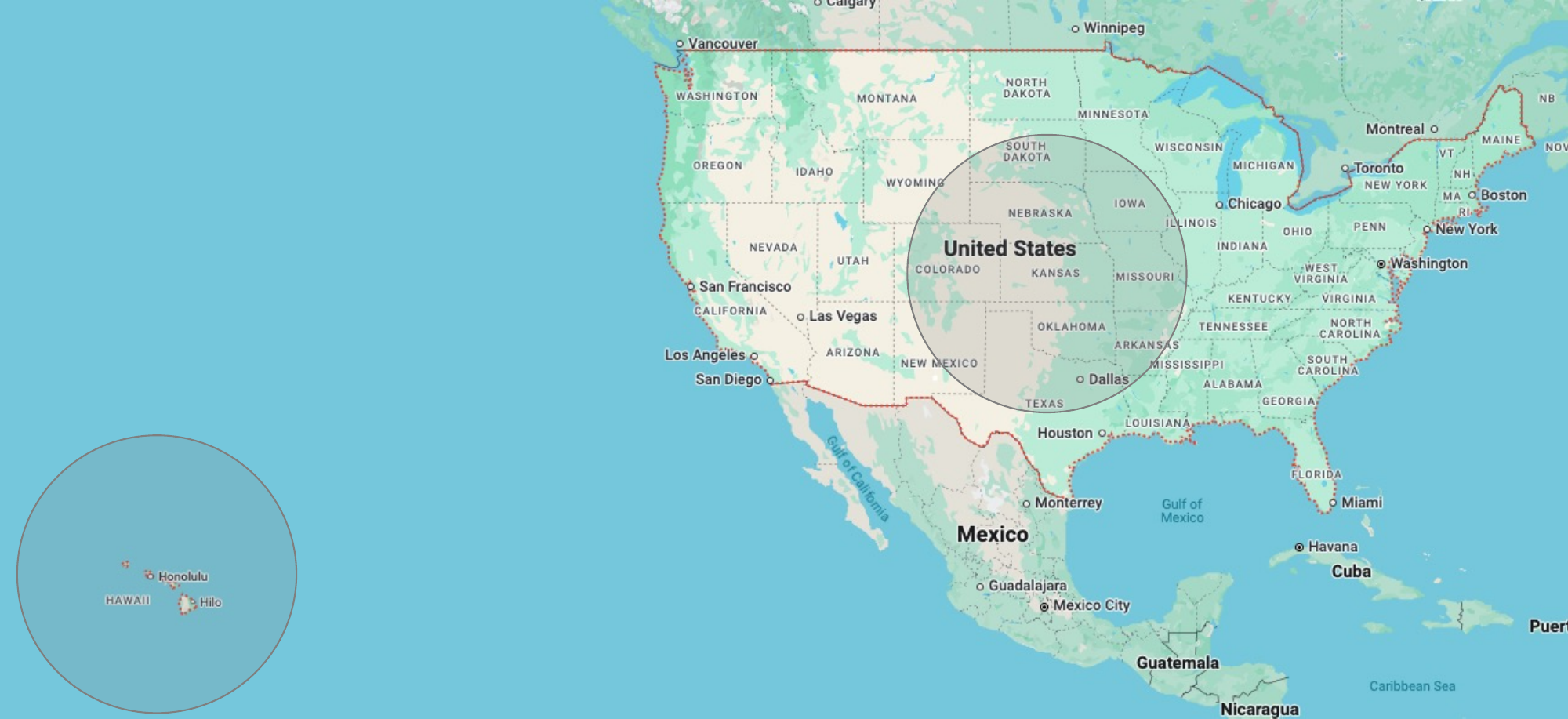}
    \caption{An analogy of the issues of applying $\dx$-privacy to high-dimensional word embeddings using location data. The nearest neighbor of the perturbed location of the residents of Hawaii tends to be Hawaii itself, unlike the states in mainland USA. The circles indicate the probability density of the noise vector. The map is extracted from Google Maps.}
    \label{fig:usa-map-illustrate}
\end{figure}

From the privacy point-of-view, the residents of mainland USA get more protection than the residents of Hawaii for the same value of $\epsilon$. In other words, the level of privacy provided depends on the structure of the dataset. Datasets with with irregularly scattered points will end up providing less privacy for isolated points. This issue has already been highlighted in previous works~\cite{carvalho2023tem, xu2020mahalanobis}. Here, we would like to point out that this privacy issue does not arise if we scale the noise according to the sensitivity of the distance function which makes the resulting privacy notion equivalent to ordinary $\epsilon$-differential privacy~\cite{dwork2006calibrating}. However, applications of this $\dx$-privacy mechanism to text domain do not use sensitivity to scale noise~\cite{fernandes2019generalised, feyisetan2020dxprivacy, qu2021bert, yue2021dxprivacy, li2023llm-prompt}; the idea being that we only need to make it hard to distinguish between nearby points. For location based applications, one specifies a radius, which serves as a proxy for sensitivity, and we say that the corresponding mechanism provides $\dx$-privacy for locations within this radius~\cite{andres2013geo, kamalaruban2020not}. 

One may also dismiss the above observation as an outlier; an issue that relates only to a few isolated points in the dataset. However, as we shall demonstrate, in higher dimensions the nearest neighbor of any word is at a considerable distance away from it, and the distance increases as we increase the size of dimension. Thus, in higher dimensions, this is more the norm than an anomaly.

The above observation also creates issues from a utility perspective. With higher values of $\epsilon$ (less privacy), the original word is returned by the nearest neighbor search most of the times. As we decrease $\epsilon$ to provide more privacy, one expects the frequency of the nearest neighbors of the original word to be selected more than distant words. However, since the variance of the noise is now larger, this happens less often than expected. The undesired result is that either the word is not changed at all, or if it is changed, it is replaced by a distant word with little to no semantic similarity with the original word. Again this issue is exacerbated in higher dimensions, as the distance between a word and its nearest neighbor is higher than the relative difference in the distances to its two successive neighbors.

\section{Background and Notation}
\label{sec:bg}
\subsubsection*{Notations} The $n$-dimensional real space is denoted by $\mathbb{R}^n$. A vector from $\mathbb{R}^n$ will be denoted in bold face, e.g., $\vtr{x}$. Let $\norm{\vtr{x}} = \sqrt{\sum_{i=1}^n |x^2_i|}$ denote the Euclidean norm of $\vtr{x}$, where $x_i$ is the $i$th element of $\vtr{x}$. The dot product between two vectors $\vtr{x}$ and $\vtr{y}$ is denoted as $\dotprod{\vtr{x}}{\vtr{y}}$. The following is an elementary fact:
\[
\dotprod{\vtr{x}}{\vtr{y}} = \norm{\vtr{x}} \norm{\vtr{y}} \cos \theta_{\vtr{x}, \vtr{y}},
\]
where $0 \leq \theta_{\vtr{x}, \vtr{y}} \leq \pi$, is the angle between $\vtr{x}$ and $\vtr{y}$.\footnote{Keeping $\vtr{x}$ fixed, if the direction of $\vtr{y}$ is chosen randomly in the plane containing the two vectors, then $\theta_{\vtr{x}, \vtr{y}}$ is the smallest of the two possible angles. This is well-defined. To see this let $\theta_1$ and $\theta_2$ denote the two possible angles, and let $\theta_1 > \pi$. Then $\theta_2 = 2\pi - \theta_1 < \pi$, and note that $\cos \theta_1 = \cos (2\pi - \theta_2) = \cos \theta_2$.} We can interpret $\norm{\vtr{y}} \cos \theta_{\vtr{x}, \vtr{y}}$ as the length of the projection of $\vtr{y}$ on $\vtr{x}$. For any two vectors $\vtr{x}, \vtr{y} \in \mathbb{R}^n$ the Euclidean distance between $\vtr{x}$ and $\vtr{y}$ is $\norm{\vtr{x} - \vtr{y}}$. The Euclidean distance is a metric as it satisfies the following properties of (1) (positivity) $\norm{\vtr{x} - \vtr{y}} \ge 0$ with equality if and only if $\vtr{x} = \vtr{y}$, (2) (symmetry) $\norm{\vtr{x} - \vtr{y}} = \norm{\vtr{y} - \vtr{z}}$, and (3) (triangle inequality) $\norm{\vtr{x} - \vtr{y}} \leq \norm{\vtr{x} - \vtr{z}} + \norm{\vtr{z} - \vtr{y}}$, for any $\vtr{x}, \vtr{y}, \vtr{z} \in \mathbb{R}^n$~\cite{searcoid-metric}.

\subsection{Vector Representation of Words}
\label{subsec:word-embeddings}
In recent years, a number of machine learning models have sprung up which produce vector representations of words, which we call word embeddings for short. These pre-trained word embeddings can be downloaded and used for natural language processing tasks. Notable examples include \verb+Word2Vec+~\cite{word2vec}, \verb+GloVe+~\cite{glove}, and \verb+fastText+~\cite{fasttext}. 
%\footnote{See \url{https://code.google.com/archive/p/word2vec/}}
These word embeddings lie on a high-dimensional real vector space with dimensions ranging from 50 to 300, and even higher. Some embeddings, such as \verb+GloVe+, come in different dimensions. Despite the high-dimensional space the word embeddings maintain distance-based semantic similarities. In other words, the Euclidean distance is an effective method to obtain nearest neighbors of a word in terms of semantic similarity to the target word.\footnote{See \url{https://nlp.stanford.edu/projects/glove/}.} Furthermore, these embeddings maintain linear relationships between words, e.g., king minus queen equals man minus woman~\cite{glove}. For a more detailed introduction to vector representation of words and models producing word embeddings, see~\cite[\S 15]{dive-deep-book}. The pre-trained word embeddings used in this paper are shown in Table~\ref{tab:vocabularies}. We will use the terms `embedding model' and `vocabulary' interchangeably. 

\begin{table}[h!]
\centering
% \resizebox{.9\line}{!}{
 \begin{tabular}{c|c|c} 
 \hline
Vocabulary & Dimensions & Words \\
 \hline\hline
\verb+GloVe-Twitter+ & 25, 50, 100, 200 & 1,193,514 \\
\verb+GloVe-Wiki+ & 50, 100, 200, 300 & 400,000\\
\verb+Word2Vec+ & 300 & 3,000,000  \\
\verb+fastText+ & 300 & 2,519,370 \\ \hline
\end{tabular}
% }
\caption{Word embedding models used in this work.}
\label{tab:vocabularies}
\end{table}

\subsection{\texorpdfstring{$\dx$}{dX}-Privacy and Applications to Text Data}
\label{subsec:dx-privacy}
Let $\mathcal{D} \subseteq \mathbb{R}^n$ be the input domain, which for our purpose is the embedding space. Let $\vtr{x} \in \mathcal{D}$ denote a word. 

\begin{definition}[Differential Privacy~\cite{dwork2006calibrating}]
\label{def:dp}
An algorithm $\mathcal{M}: \mathcal{D} \rightarrow R$ satisfies $\epsilon$-(local) differential privacy if for all words $\vtr{x}, \vtr{y} \in \mathcal{D}$ and for all possible subsets $S \subseteq R$ we have 
\[
\Pr[ \mathcal{M}(\vtr{x}) \in S] \leq e^{\epsilon} \Pr[ \mathcal{M}(\vtr{y}) \in S] 
\]
\end{definition}

\begin{definition}[$\dx$-privacy~\cite{chatzikokolakis2013broadening}]
Let $d$ be a metric on $\mathcal{D}$. An algorithm $\mathcal{M}: \mathcal{D} \rightarrow R$ satisfies $\epsilon \dx$-privacy for the metric $d$ if for all words $\vtr{x}, \vtr{y} \in \mathcal{D}$ and for all possible subsets $S \subseteq R$ we have 
\[
\Pr[ \mathcal{M}(\vtr{x}) \in S] \leq e^{\epsilon d(\vtr{x}, \vtr{y})} \Pr[ \mathcal{M}(\vtr{y}) \in S] 
\]
\end{definition}

Note that if $d$ is the Hamming distance $d_{\text{H}}$ then we recover the original definition of $\epsilon$-differential privacy, since $d_{\text{H}}(\vtr{x}, \vtr{y}) = 1$, whenever $\vtr{x} \neq \vtr{y}$. A key advantage of $\dx$-privacy over ordinary differential privacy is in terms of utility: the former treats all inputs equally, and indistinguishability is with respect to all inputs, whereas the latter provides more privacy with respect to similar inputs than far away inputs, when measured according to the distance metric. Like $\epsilon$-DP, $\dx$-privacy enjoys the properties of immunity to post-processing, and composition of privacy guarantees~\cite{chatzikokolakis2013broadening}. In particular, applying a $\dx$-privacy mechanism independently to each of the words in a sequence (sentence) of $m$ words, makes the resulting composition of these mechanisms $m\epsilon \dx$-private~\cite{kamalaruban2020not}. 

Finally, we would like to point out a particular (simplified) result from the original paper by Dwork et al~\cite{dwork2006calibrating} on $\epsilon$-DP, which relates to general metric spaces. Let $d$ be a metric on the domain $\mathcal{D}$, and let the sensitivity of the distance metric $d$ be defined as
$
\Delta d = \max_{\vtr{x}, \vtr{y} \in \mathcal{D}} d(\vtr{x}, \vtr{y})$,
then the mechanism $\mathcal{M}$ which on input $\vtr{x} \in \mathcal{D}$ outputs a $\vtr{y} \in \mathcal{D}$ with probability proportional to 
\[
\Pr[\mathcal{M}(\vtr{x}) = \vtr{y}] \propto \exp\left( - \frac{\epsilon d(\vtr{x}, \vtr{y})}{2 \Delta d}\right),
\]
is $\epsilon$-differentially private, provided such a probability density function exists~\cite[\S 3.3]{dwork2006calibrating}. We remark that the result in~\cite{dwork2006calibrating} does not contain the negative sign in the proportionality above, but it is easy to verify that the result still holds. As we shall see next, the $\dx$-private mechanism samples a word embedding proportional to above except that there is no scaling according to sensitivity. 

\subsubsection*{$\dx$-Private Noise} In order to add multidimensional Laplace noise to a word embedding $\vtr{w} \in \mathbb{R}^n$, the method is to add a noise vector $\noise \in \mathbb{R}^n$ from a distribution with probability density $ \propto \exp(-\epsilon \norm{\noise})$~\cite{feyisetan2020dxprivacy, dwork2006calibrating}.\footnote{As mentioned above, the result from~\cite{dwork2006calibrating}  is for ordinary $\epsilon$-differential privacy as opposed to $\dx$-privacy.} To sample from this distribution, one first samples $n$ zero-mean, unit-variance Gaussians to produce an $n$-dimentional vector $\vtr{u}$ whose resulting probability density is 
\begin{equation}
\label{eq:unit-gaussian}
\frac{1}{2\pi^{n/2}} e^{(-\frac{\norm{\vtr{u}}}{2})}    
\end{equation}
The vector is then normalized to produce the unit vector $\vtr{\hat{u}} = \frac{\vtr{u}}{\norm{\vtr{u}}}$. This means that $\vtr{\hat{u}}$ is distributed uniformly at random on the surface of a hypersphere of $n$-dimensions with unit radius (see for example~\cite[\S 2.5]{blum2020foundations-ds}). Next we find the ``length'' of the noise vector. The number of points on an $n$-dimensional hypersphere with radius $r$ is proportional to $r^{n-1}$, with each point having density proportional to $\exp(-\epsilon \norm{\noise})$ (the requirement above). 
Thus, we need a probability density function proportional to $r^{n-1} \exp(-\epsilon \norm{\noise})$. This means that we should sample the noise as $\noise = r \vtr{\hat{u}}$, where $r$ has the Gamma distribution~\cite{feyisetan2020dxprivacy}, with probability density
\begin{equation}
\label{eq:gamma-dist}
f_\text{G}(r) = \frac{1}{\Gamma(n)\epsilon^{-n}} r^{n-1}e^{-\epsilon r}.
\end{equation}
Note that $\norm{\noise} = \norm{r \vtr{\hat{u}}} = r$, and hence this is the required distribution. For completeness, we provide a proof in Appendix~\ref{app:sec:gamma}. Figure~\ref{fig:noise-r-2} illustrates the distribution for $n = 2$.  

\begin{figure}[!ht]
    \centering
    \begin{subfigure}[t]{0.18\textwidth}
        \includegraphics[width=\linewidth]{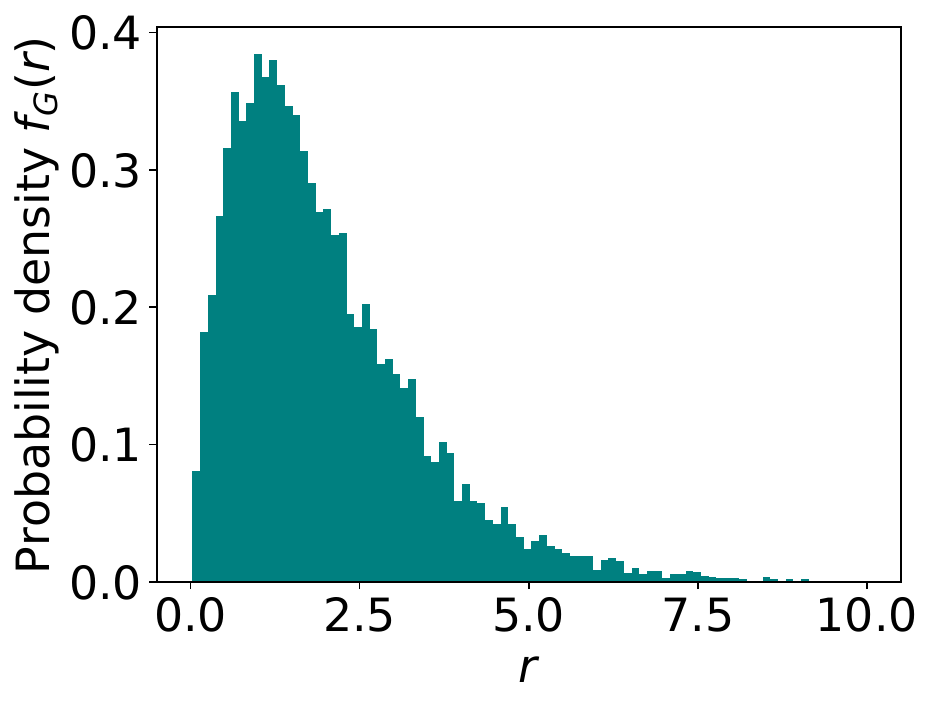}
        \caption{}
        \label{subfig:gamma-2d}
    \end{subfigure}
%    \qquad
    \begin{subfigure}[t]{0.14\textwidth}
        \includegraphics[width=\linewidth]{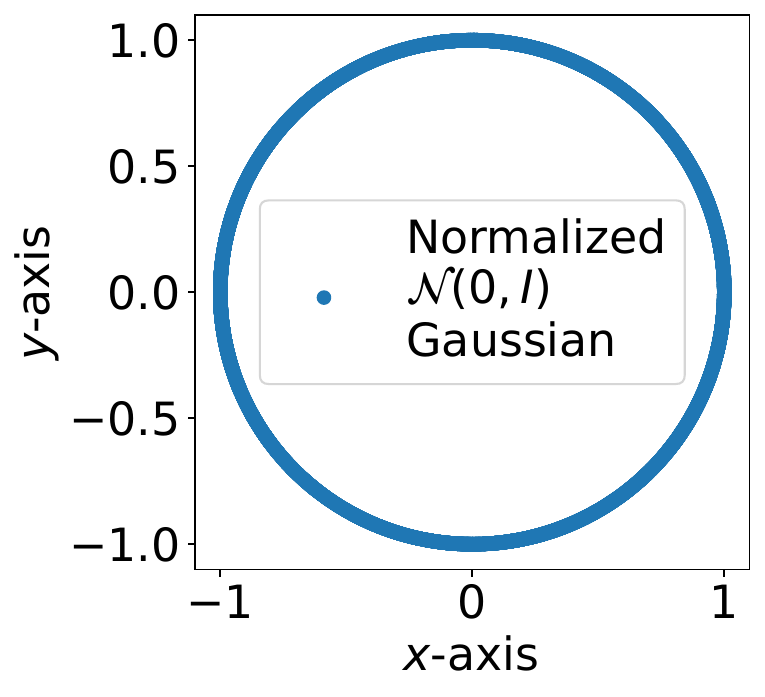}
        \caption{}
        \label{subfig:units-2d}
    \end{subfigure}
%    \qquad
    \begin{subfigure}[t]{0.14\textwidth}
        \includegraphics[width=\linewidth]{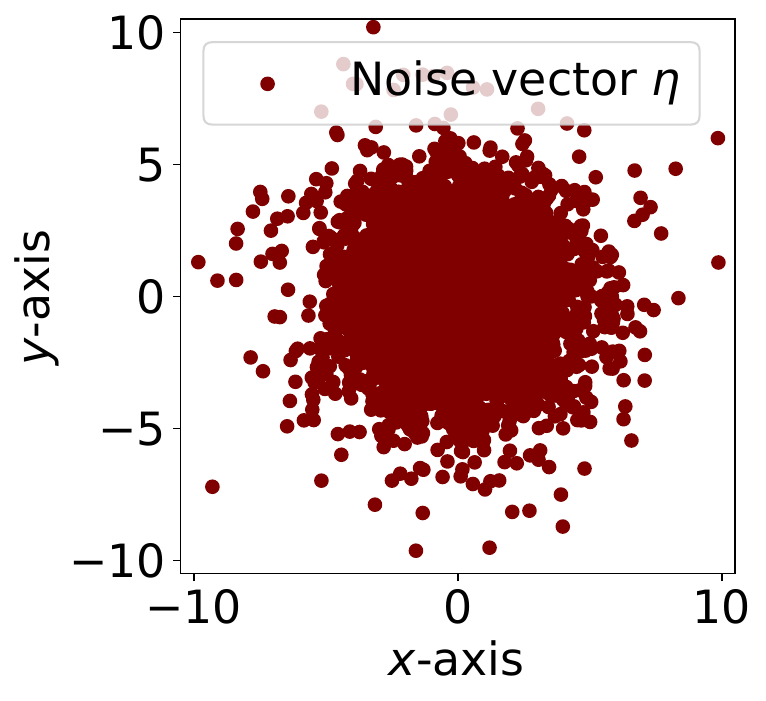}
        \caption{}
        \label{subfig:noise-2d}
    \end{subfigure}
    \caption{The noise distribution for $n=2$ with 10,000 sampled points. Subfigure~\subref{subfig:gamma-2d} shows the distribution of the length of noise vector, i.e., $r$ through Eq.~\eqref{eq:gamma-dist}. Subfigure~\subref{subfig:units-2d} is the unit vector produced by normalizing the 2D Gaussian $\mathcal{N}(\mathbf{0}, I)$ (Eq.~\eqref{eq:unit-gaussian}), where $I$ is the identity matrix. Subfigure~\subref{subfig:noise-2d} shows the distribution of the resulting noise vector $\noise$.}
    \label{fig:noise-r-2}
\end{figure}

\subsubsection*{Nearest Neighbor Search} Almost always, the perturbed embedding $\vtr{w}^* = \vtr{w} + \noise$ is not a member of $\mathcal{D}$. Thus, a nearest neighbor search is performed to find an embedding $\vtr{x}^* \in \mathcal{D}$ which is closest to $\vtr{w}^*$ in the Euclidean distance~\cite{feyisetan2020dxprivacy}. That is, we find the embedding: 
\begin{equation}
\label{eq:min-problem}
    \mathbf{x}^* = \arg \min_{\vtr{x} \in \mathcal{D}} \norm{\vtr{w}^* - \vtr{x}}.
\end{equation}
This vector is then the output of the $\dx$-private mechanism. 
%The hope is that this vector is semantically similar to the original vector $\vtr{x}$.  

\subsubsection*{Proof of Privacy} The above mechanism is $\dx$-private since the resulting noise word embedding $\vtr{w}^*$ is output from the distribution proportional to $\exp(-\epsilon \norm{\noise}) = \exp(-\epsilon d(\vtr{w}^*, \vtr{w}))$, where $d$ is the Euclidean distance. The nearest neighbor search is a post-processing step, and the result can be extended to a sequence (sentence) of multiple words by applying the mechanism independently on each word, and then combining the results, invoking the composition property of $\dx$-privacy. Feyisetan et al also provide a direct proof of this result in~\cite{feyisetan2020dxprivacy}. We note that, while most commonly employed, this is not the only method of applying differential privacy for text sanitization. Other methods include other word-level $\dx$-privacy mechanisms~\cite{feyisetan2019hyperbolic, carvalho2023tem}, sentence-level differential privacy mechanisms~\cite{mattern-limits-word-level-dp, igamberdiev2023dp-bart}, and mechanisms for more advanced natural language processing tasks~\cite{utpala2023document-level}. We describe them in more detail in Section~\ref{sec:rw}. 

\section{Conditions for Nearest Neighbor Selection and the Noisy Dot Product Distribution} 
\label{sec:char-noise}
We are first interested in the conditions when the perturbed embedding $\vtr{w}^*$ is closer to $\vtr{w}$, i.e., the original word, than any of the neighbors of $\vtr{w}$. When this is the case, the original word will be chosen as output by the mechanism. Formalizing these conditions is necessary to understand why the original word is overwhelmingly chosen over its close neighbors (Figure~\ref{fig:lap-md}). To this end, we have the following result.

\begin{theorem}
\label{theorem:pert-emb}
Let $\noise$ be a noise vector. Let $\vtr{w} \in \mathcal{D}$ be a word embedding, and let $\vtr{w}^* = \vtr{w} + \noise$ be the perturbed embedding. Let $\vtr{x} \in \mathcal{D}$ be any embedding different from $\vtr{w}$. Then $\vtr{w}^*$ is closer to $\vtr{w}$ than any of its neighbors if for all neighbors $\vtr{x}$ of $\vtr{w}$, we have 
\[
r \cos \theta_{\noise, \vtr{x} - \vtr{w}} < \frac{1}{2}\norm{\vtr{w} - \vtr{x}}
\]
\end{theorem}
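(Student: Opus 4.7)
The plan is to translate the nearest-neighbor condition $\norm{\vtr{w}^* - \vtr{w}} < \norm{\vtr{w}^* - \vtr{x}}$ into an algebraic inequality by squaring and expanding, then rewrite the resulting dot product using the identity $\dotprod{\vtr{x}}{\vtr{y}} = \norm{\vtr{x}}\norm{\vtr{y}}\cos\theta_{\vtr{x},\vtr{y}}$ recalled in Section~\ref{sec:bg}. This is a direct calculation, and I expect no real obstacle beyond bookkeeping the vectors correctly.

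Concretely, first I would note that $\vtr{w}^* - \vtr{w} = \noise$, so $\norm{\vtr{w}^* - \vtr{w}} = \norm{\noise} = r$. Next, I would expand
\[
\norm{\vtr{w}^* - \vtr{x}}^2 = \norm{\noise - (\vtr{x} - \vtr{w})}^2 = \norm{\noise}^2 - 2\dotprod{\noise}{\vtr{x} - \vtr{w}} + \norm{\vtr{x} - \vtr{w}}^2.
\]
Since $\norm{\noise}^2 = r^2 = \norm{\vtr{w}^* - \vtr{w}}^2$, the condition $\norm{\vtr{w}^* - \vtr{w}}^2 < \norm{\vtr{w}^* - \vtr{x}}^2$ reduces, after cancelling $r^2$, to
\[
2\dotprod{\noise}{\vtr{x} - \vtr{w}} < \norm{\vtr{x} - \vtr{w}}^2.
\]

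Then I would apply the dot-product identity to write $\dotprod{\noise}{\vtr{x} - \vtr{w}} = \norm{\noise}\,\norm{\vtr{x} - \vtr{w}}\,\cos\theta_{\noise,\vtr{x} - \vtr{w}} = r\,\norm{\vtr{x}-\vtr{w}}\,\cos\theta_{\noise,\vtr{x} - \vtr{w}}$. Substituting and dividing both sides by the nonzero factor $2\norm{\vtr{x} - \vtr{w}}$ (which is positive since $\vtr{x} \neq \vtr{w}$) yields exactly
\[
r\cos\theta_{\noise,\vtr{x} - \vtr{w}} < \tfrac{1}{2}\norm{\vtr{w} - \vtr{x}},
\]
as claimed. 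Quantifying over all neighbors $\vtr{x}$ of $\vtr{w}$ and using the equivalence of squared and unsquared Euclidean distances (both nonnegative) gives the stated sufficient condition. The only subtlety worth flagging is that the statement is phrased as a sufficient condition (``if''), so strict inequality must be preserved throughout the derivation, which it is, since squaring a strict inequality between nonnegative quantities is reversible.
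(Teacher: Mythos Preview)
Your proposal is correct and follows essentially the same approach as the paper: square the distance inequality, expand using bilinearity, cancel the common $\norm{\noise}^2$ term, apply the identity $\dotprod{\noise}{\vtr{x}-\vtr{w}} = r\,\norm{\vtr{x}-\vtr{w}}\cos\theta_{\noise,\vtr{x}-\vtr{w}}$, and divide by $2\norm{\vtr{x}-\vtr{w}}$. The only cosmetic difference is that the paper writes $\vtr{w}^* - \vtr{x} = (\vtr{w}-\vtr{x}) + \noise$ whereas you write it as $\noise - (\vtr{x}-\vtr{w})$, which of course yields the same expansion.
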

\begin{proof}
Let $\vtr{x} \in \mathcal{D}$ be any neighbor of $\vtr{w}$. We have:
\begin{align*}
    \norm{\vtr{w}^* - \vtr{w}}^2 &< \norm{\vtr{w}^* - \vtr{x}}^2\\
\Rightarrow    \norm{\noise}^2 &< \norm{\vtr{w} - \vtr{x} + \noise}^2\\
\Rightarrow   \norm{\noise}^2 &< \norm{\vtr{w} - \vtr{x}}^2 + \norm{\noise}^2 + 2\dotprod{\vtr{w} - \vtr{x}}{\noise}\\
\Rightarrow -2\dotprod{\vtr{w} - \vtr{x}}{\noise} &< \norm{\vtr{w} - \vtr{x}}^2 \\
\Rightarrow \dotprod{\noise}{\vtr{x} - \vtr{w}} &< \frac{1}{2}\norm{\vtr{w} - \vtr{x}}^2 \\
\Rightarrow r \norm{\vtr{x} - \vtr{w}}\cos \theta_{\noise, \vtr{x} - \vtr{w}} &< \frac{1}{2}\norm{\vtr{w} - \vtr{x}}^2 \\
\Rightarrow r \cos \theta_{\noise, \vtr{x} - \vtr{w}} &< \frac{1}{2}\norm{\vtr{w} - \vtr{x}},
\end{align*}
where in the second last step, we have used Theorem~\ref{theorem:dot-prod}. 
\end{proof}
An interpretation of the above is as follows: The noisy embedding will be closer to the original vector $\vtr{w}$ than its nearest neigbour $\vtr{x}$ if the length of the projection of the noise vector on $\vtr{x} - \vtr{w}$ is less than half the length of $\vtr{x} - \vtr{w}$. Figure~\ref{subfig:orig-vector} illustrates Theorem~\ref{theorem:pert-emb}. This can also be explained easily via a nice geometric illustration. Suppose $0 \leq \theta_{\noise, \vtr{x} - \vtr{w}} \leq \frac{\pi}{2}$, as shown in Figure~\ref{subfig:geo-illustrate}, where we denote the vectors by their end-points in the real space. It is easy to see that if $A < B$, then $C^2 = A^2 - E^2 < B^2 - E^2 = D^2$, and when $C < D$ then $A^2 = C^2 - E^2 < D^2 - E^2 = B^2$. The line segment $C$ is precisely the length of the projection of the noise vector on $\vtr{x} - \vtr{w}$. If $\frac{\pi}{2} \leq \theta_{\noise, \vtr{x} - \vtr{w}} \leq \pi$ then clearly $A < B$, which is supported by the fact that the projection of the noise vector on $\vtr{x} - \vtr{w}$ is negative. 

%Finally if $\pi \leq \theta_{\noise, \vtr{x} - \vtr{w}} \leq 2\pi$, then using the fact that $\cos \theta_{\noise, \vtr{x} - \vtr{w}} = \cos (\pi + \theta_{\noise, \vtr{w} - \vtr{x}}) = -\cos \theta_{\noise, \vtr{w} - \vtr{x}}$, where $0 \leq \theta_{\noise, \vtr{w} - \vtr{x}} \leq \pi$, the same argument applies but this time with the projection of the noise vector on $\vtr{w} - \vtr{x}$.

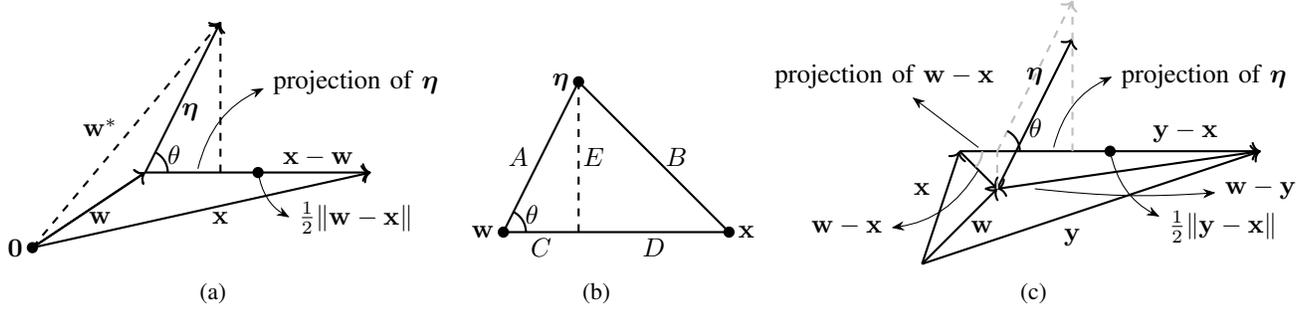
\begin{figure*}[!ht]
\centering
%%%%%%%%%%%%%%%%%%%%%%%%%%%%%
%%%%%%% First figure %%%%%%%%
%%%%%%%%%%%%%%%%%%%%%%%%%%%%%
\begin{subfigure}[t]{0.32\textwidth}
\centering
\begin{tikzpicture}
\draw[thick,->] (0,0) -- (3,0) coordinate[](xwline); % horizontal
\draw[thick, dashed, -] (1,0) -- (1,2);
\draw[thick,->] (0,0) -- (1,2) coordinate[](etaline);
\draw[thick,->] (-1.5,-1) -- (0,0);
\draw[thick,->] (-1.5,-1) -- (0,0);
\draw[thick, dashed, ->] (-1.5,-1) -- (1,2);
\draw[thick, ->] (-1.5,-1) -- (3,0);

\tkzDefPoint(0,0){origin};

\tkzDefPoint(1.5,0){xw};
\node at (xw)[circle,fill,inner sep=1.5pt]{};

\tkzDefPoint(-1.5,-1){o};
\tkzLabelPoint[left](o){$\vtr{0}$};
\node at (o)[circle,fill,inner sep=1.5pt]{};

\node at (2.3,0.2) (xw-vec){${\vtr{x} - \vtr{w}}$};

\node at (-0.6,0.6) (wstar){$\vtr{w}^*$};

\node at (-0.6,-0.6) (w){$\vtr{w}$};

\node at (0.6,0.8) (eta){$\noise$};

\node at (1,-0.6) (x){$\vtr{x}$};

\node at (2.8,1.2) (proj) {projection of $\noise$};

\draw [-{Stealth}] (0.7, 0.05) to [bend left=25] (proj);

\node at (2.8, -0.6) (midpoint) {$\frac{1}{2}\norm{\vtr{w} - \vtr{x}}$};

\draw [-{Stealth}] (1.5, 0.01) to [bend right=25] (midpoint);

\pic ["$\theta$", draw, thick, -, angle radius = 0.3cm, angle eccentricity=1.5] {angle = xwline--origin--etaline};

\end{tikzpicture}
% \caption{Theorem~\ref{theorem:pert-emb} Illustration}
\caption{}
\label{subfig:orig-vector}
\end{subfigure}
%%%%%%%%%%%%%%%%%%%%%%%%%%%%%%%%%%%%
%%%%%%% End of First figure %%%%%%%%
%%%%%%%%%%%%%%%%%%%%%%%%%%%%%%%%%%%%
\quad
%%%%%%%%%%%%%%%%%%%%%%%%%%%%%
%%%%%%% Second figure %%%%%%%
%%%%%%%%%%%%%%%%%%%%%%%%%%%%%
\begin{subfigure}[t]{0.2\textwidth}
\centering
\begin{tikzpicture}
\draw[thick,-] (0,0) -- (3,0) coordinate[](xline); % horizontal
\draw[thick, dashed, -] (1,0) -- (1,2);
\draw[thick,-] (0,0) -- (1,2) coordinate[](eline);
\draw[thick,-] (3,0) -- (1,2);

\tkzDefPoint(0,0){w};
\tkzLabelPoint[left](w){$\vtr{w}$}
\node at (w)[circle,fill,inner sep=1.5pt]{};

\tkzDefPoint(1,2){eta};
\tkzLabelPoint[left](eta){$\noise$}
\node at (eta)[circle,fill,inner sep=1.5pt]{};

\tkzDefPoint(3,0){x};
\tkzLabelPoint[right](x){$\vtr{x}$}
\node at (x)[circle,fill,inner sep=1.5pt]{};

\node at (0.2,1) (A){$A$};
\node at (1.2,1) (E){$E$};
\node at (2.3,1) (B){$B$};
\node at (0.5,-0.2) (C){$C$};
\node at (2,-0.2) (D){$D$};

\pic ["$\theta$", draw, thick, -, angle radius = 0.3cm, angle eccentricity=1.5] {angle = xline--w--eline};

%\draw[thick] (0.3,0) arc (0:41:0.5);

\end{tikzpicture}
% \caption{geometric interpretation of the result of Theorem~\ref{theorem:pert-emb}}
\caption{}
\label{subfig:geo-illustrate}
\end{subfigure}
%%%%%%%%%%%%%%%%%%%%%%%%%%%%%%%%%%%%%
%%%%%%% End of Second figure %%%%%%%%
%%%%%%%%%%%%%%%%%%%%%%%%%%%%%%%%%%%%%
\quad
%%%%%%%%%%%%%%%%%%%%%%%%%%%%%
%%%%%%% Third figure %%%%%%%%
%%%%%%%%%%%%%%%%%%%%%%%%%%%%%
\begin{subfigure}[t]{0.4\textwidth}
\centering
\begin{tikzpicture}
\draw[thick,->] (0,0) -- (4,0) coordinate[](xyline); % horizontal
\draw[thick,dashed, lightgray, ->] (0.5,0) -- (1.5,2) coordinate[](petaline);
\draw[thick,->] (0.5,-0.5) -- (1.5,1.5) coordinate[](etaline); 

\tkzDefPoint(-0.5,-1.5){o};
\draw[thick,->] (o) -- (0.5,-0.5) coordinate[] (w); % w
\draw[thick,->] (o) -- (0,0) coordinate[] (x); % x
\draw[thick,->] (0, 0) -- (w); % wx
\draw[thick,->] (o) -- (xyline); % y
\draw[thick,->] (xyline) -- (w); % wy

\tkzDefPoint(0.5,0){meet};

\draw[thick,dashed, lightgray, -] (w) -- (meet);
\draw[thick,dashed, lightgray, -] (1.5, 2) -- (1.5,0);

%\draw[very thick,-] (meet) -- (1.5,0); 

\tkzDefPoint(2,0){xy};
\node at (xy)[circle,fill,inner sep=1.5pt]{};

\node at (-0.5, -0.5) {$\vtr{x}$};
\node at (0.3, -1) {$\vtr{w}$};
\node at (1.5, -1.15) {$\vtr{y}$};
\node at (3, 0.25) {$\vtr{y} - \vtr{x}$};
\node at (1, 1) {$\noise$};

%\tkzDefPoint(2.8,1.2){proj};
\node at (3.25,1) (projn) {projection of $\noise$};
\draw [-{Stealth}] (1.25, 0.05) to [bend left=25] (projn);

\node at (-1,1) (projwx) {projection of $\vtr{w} - \vtr{x}$};
\draw [-{Stealth}] (0.25, 0.05) to (projwx);

\node at (3.5,-1) (midpoint) {$\frac{1}{2} \norm{\vtr{y} - \vtr{x}}$};
\draw [-{Stealth}] (xy) to [bend right=25] (midpoint);

\node at (-1.5,-1) (xwlabel) {${\vtr{w} - \vtr{x}}$};
\draw [-{Stealth}] (0.25,-0.25) to [bend left=25] (xwlabel);

\node at (4,-0.5) (wylabel) {${\vtr{w} - \vtr{y}}$};
\draw [-{Stealth}] (1, -0.5) to [bend right=5] (wylabel);

\pic ["$\theta$", draw, thick, -, angle radius = 0.3cm, angle eccentricity=2] {angle = xyline--meet--petaline};

\pic [draw, thick, lightgray, -, angle radius = 0.3cm, angle eccentricity=1] {angle = w--x--xyline};

\end{tikzpicture}
% \caption{Theorem~\ref{theorem:dist-flip} Illustration}
\caption{}
\label{subfig:xvsy}
\end{subfigure}
%%%%%%%%%%%%%%%%%%%%%%%%%%%%%%%%%%%%
%%%%%%% End of Third figure %%%%%%%%
%%%%%%%%%%%%%%%%%%%%%%%%%%%%%%%%%%%%
\caption{Illustrations of Theorems~\ref{theorem:pert-emb} and \ref{theorem:dist-flip}. Subfigure~\subref{subfig:orig-vector} illustrates Theorem~\ref{theorem:pert-emb}, Subfigure~\subref{subfig:geo-illustrate} shows a geometric interpretation of the result of Theorem~\ref{theorem:pert-emb}, and Subfigure~\subref{subfig:xvsy} illustrates Theorem~\ref{theorem:dist-flip}.}
\end{figure*}

Next we are interested in knowing when the nearest neighbor $\vtr{x}$ of the original word $\vtr{w}$ is closer to the noisy embedding $\vtr{w}^*$ than any other neighbor $\vtr{y}$ of $\vtr{w}$. 

\begin{theorem}
\label{theorem:dist-flip}
Let $\noise$ be a noise vector. Let $\vtr{w} \in \mathcal{D}$ be a word embedding, and let $\vtr{w}^* = \vtr{w} + \noise$ be the perturbed embedding. Let $\vtr{x} \in \mathcal{D}$ be the nearest neighbor of $\vtr{w}$. Let $\vtr{y} \in \mathcal{D}$ be any other embedding different from $\vtr{w}$ and $\vtr{x}$. Then $\vtr{w}^*$ is closer to $\vtr{x}$ than $\vtr{y}$ if 
\[
\norm{\vtr{w} - \vtr{x}} \cos \theta_{\vtr{w} - \vtr{x}, \vtr{y} + \vtr{x}} + r \cos \theta_{\vtr{\noise, \vtr{y} - \vtr{x}}} < \frac{1}{2} \norm{\vtr{y} - \vtr{x}}
\]
\end{theorem}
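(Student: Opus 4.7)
The plan is to mimic the proof of Theorem~\ref{theorem:pert-emb} but, because we are now comparing distances from $\vtr{w}^*$ to two non-central points $\vtr{x}$ and $\vtr{y}$ (rather than to $\vtr{w}$ itself), an extra cross-term involving $\vtr{w} - \vtr{x}$ will appear. Starting from the inequality $\norm{\vtr{w}^* - \vtr{x}}^2 < \norm{\vtr{w}^* - \vtr{y}}^2$, substitute $\vtr{w}^* = \vtr{w} + \noise$ on both sides and expand; the $\norm{\noise}^2$ terms cancel, leaving
\[
\norm{\vtr{w} - \vtr{x}}^2 - \norm{\vtr{w} - \vtr{y}}^2 + 2 \dotprod{\noise}{\vtr{y} - \vtr{x}} < 0.
\]

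The key algebraic trick is to rewrite $\vtr{w} - \vtr{y}$ in a way that surfaces the vector $\vtr{y} - \vtr{x}$, which is the direction along which both the noise and the offset of $\vtr{x}$ from $\vtr{w}$ need to be projected. Specifically I would use $\vtr{w} - \vtr{y} = (\vtr{w} - \vtr{x}) - (\vtr{y} - \vtr{x})$ and expand the norm to obtain
\[
\norm{\vtr{w} - \vtr{y}}^2 = \norm{\vtr{w} - \vtr{x}}^2 - 2 \dotprod{\vtr{w} - \vtr{x}}{\vtr{y} - \vtr{x}} + \norm{\vtr{y} - \vtr{x}}^2.
\]
Substituting this into the previous display cancels $\norm{\vtr{w} - \vtr{x}}^2$ and yields
\[
\dotprod{\vtr{w} - \vtr{x}}{\vtr{y} - \vtr{x}} + \dotprod{\noise}{\vtr{y} - \vtr{x}} < \tfrac{1}{2} \norm{\vtr{y} - \vtr{x}}^2.
\]

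Finally I would invoke the identity $\dotprod{\vtr{u}}{\vtr{v}} = \norm{\vtr{u}} \norm{\vtr{v}} \cos \theta_{\vtr{u}, \vtr{v}}$ on both dot products, noting that $\norm{\noise} = r$, and then divide both sides by $\norm{\vtr{y} - \vtr{x}} > 0$ (which is nonzero since $\vtr{y} \neq \vtr{x}$) to recover the claimed inequality (with $\theta_{\vtr{w} - \vtr{x}, \vtr{y} - \vtr{x}}$ in place of the $\vtr{y} + \vtr{x}$ appearing in the statement, which I read as a typo). I do not foresee any real obstacle: the only nonroutine choice is the decomposition of $\vtr{w} - \vtr{y}$ along the $\vtr{y} - \vtr{x}$ direction, which is guided by the geometric picture in Figure~\ref{subfig:xvsy}. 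The extra term $\norm{\vtr{w} - \vtr{x}} \cos \theta_{\vtr{w} - \vtr{x}, \vtr{y} - \vtr{x}}$, relative to the cleaner bound of Theorem~\ref{theorem:pert-emb}, is exactly the length of the projection of the offset $\vtr{w} - \vtr{x}$ onto the line joining the two candidate neighbors, which is the price paid for comparing distances from a shifted center.
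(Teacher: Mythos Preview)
Your proposal is correct and follows essentially the same route as the paper: expand both squared distances, cancel $\norm{\noise}^2$, rewrite $\vtr{w}-\vtr{y}=(\vtr{w}-\vtr{x})-(\vtr{y}-\vtr{x})$ to make the $\norm{\vtr{w}-\vtr{x}}^2$ terms cancel, then convert the two remaining dot products to norm--cosine form and divide by $\norm{\vtr{y}-\vtr{x}}$. You are also right that the $\vtr{y}+\vtr{x}$ in the statement is a typo for $\vtr{y}-\vtr{x}$; the paper's own proof in the appendix uses $\theta_{\vtr{w}-\vtr{x},\,\vtr{y}-\vtr{x}}$.
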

\begin{proof}
See Appendix~\ref{app:proof}.
\end{proof}
An interpretation of the above is as follows: If the sum of projections of the noise vector and that of $\vtr{w} - \vtr{x}$ on $\vtr{y} - \vtr{x}$ is less than half its length, then $\vtr{x}$ will be closer to the noisy embedding than $\vtr{y}$. Figure~\ref{subfig:xvsy} shows a graphical illustration of this result. The result of the theorem can also be explained in a manner similar to Figure~\ref{subfig:geo-illustrate}. We will revisit these results in Section~\ref{sec:loss-function} to see how often they are true for word embeddings. For now we notice that both the results of Theorem~\ref{theorem:pert-emb} and \ref{theorem:dist-flip} involve the product of the length $r$ of the noise vector and the cosine of its angle with a vector in $\mathbb{R}^n$. The following theorem characterizes this distribution. 

\begin{theorem}
\label{theorem:dot-prod}
Let $n \geq 2$. Let $\noise$ be a noise vector. Let $\vtr{w} \in \mathbb{R}^n$ be a non-zero vector. Then 
\begin{equation}
    \dotprod{\vtr{\noise}}{\vtr{w}} = r \norm{\vtr{w}}\cos \theta_{\noise, \vtr{w}},
\end{equation}
where $r \sim f_\text{G}$ given in Eq.~\eqref{eq:gamma-dist} and $k = \cos \theta_{\noise, \vtr{w}} \sim f_\text{B}$ given as
\begin{equation}
\label{eq:beta-dist}
f_{\text{B}}(k) = \frac{1}{B(\frac{n-1}{2}, \frac{1}{2})} (1 - k^2)^{\frac{n-1}{2} - 1}, \quad k \in [-1, 1],
\end{equation}
where $B(\cdot, \cdot)$ is the beta function defined as:
\[
B(a, b) = \frac{\Gamma(a) \Gamma(b)}{\Gamma(a+b)},
\]
for all real numbers $a, b > 0$.
\end{theorem}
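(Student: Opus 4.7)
The plan is to decouple the dot product into its length and angular components and handle each part separately. From the elementary identity recalled in the notations, $\dotprod{\noise}{\vtr{w}} = \norm{\noise}\norm{\vtr{w}}\cos\theta_{\noise,\vtr{w}}$; substituting $r = \norm{\noise}$ gives the factorization claimed in the theorem, and the Gamma law $r \sim f_\text{G}$ is already established via Eq.~\eqref{eq:gamma-dist}. It therefore remains to determine the distribution of $K := \cos\theta_{\noise,\vtr{w}}$.

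First I would observe that $K = \dotprod{\vtr{\hat{u}}}{\vtr{w}/\norm{\vtr{w}}}$, where $\vtr{\hat{u}} = \noise/r$. By the construction of $\noise$ reviewed in Section~\ref{subsec:dx-privacy}, $\vtr{\hat{u}}$ is uniformly distributed on the unit sphere $S^{n-1}$ and is independent of $r$. Since the uniform distribution on $S^{n-1}$ is rotationally invariant, the law of $K$ depends on $\vtr{w}$ only through the fact that $\vtr{w} \neq \vtr{0}$. Without loss of generality I may rotate coordinates so that $\vtr{w}/\norm{\vtr{w}} = (1,0,\ldots,0)$, reducing the problem to computing the density of $\hat{u}_1$, the first coordinate of a uniformly random unit vector in $\mathbb{R}^n$.

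To do so I would exploit the Gaussian construction of $\vtr{\hat{u}}$ already used in Section~\ref{subsec:dx-privacy}: write $\vtr{\hat{u}} = \vtr{G}/\norm{\vtr{G}}$ with $\vtr{G} = (G_1,\ldots,G_n) \sim \mathcal{N}(\vtr{0}, I)$. Then
\[
K^2 \;=\; \frac{G_1^2}{G_1^2 + G_2^2 + \cdots + G_n^2}.
\]
The numerator is $\chi^2_1$-distributed and the sum $G_2^2 + \cdots + G_n^2$ is an independent $\chi^2_{n-1}$, hence $K^2$ follows a $\mathrm{Beta}(\tfrac{1}{2}, \tfrac{n-1}{2})$ law with density $y^{-1/2}(1-y)^{(n-1)/2-1}/B(\tfrac{1}{2}, \tfrac{n-1}{2})$ on $[0,1]$. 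Because $G_1$ is symmetric about $0$, the variable $K$ is symmetric on $[-1,1]$, so the change of variables $f_K(k) = \lvert k \rvert\, f_{K^2}(k^2)$ yields
\[
f_K(k) \;=\; \frac{1}{B(\tfrac{n-1}{2}, \tfrac{1}{2})}(1-k^2)^{(n-1)/2 - 1}, \qquad k \in [-1,1],
\]
which is exactly $f_\text{B}$, using the symmetry $B(a,b) = B(b,a)$ of the beta function.

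The main technical step is the chi-squared identification that casts $K^2$ as a Beta variable, together with the symmetry argument for transferring this back to $K$; everything else is routine and the normalization matches automatically. An alternative direct route would parameterize $S^{n-1}$ as $(u_1, \vtr{\hat{v}}) \mapsto (u_1, \sqrt{1-u_1^2}\,\vtr{\hat{v}})$ with $\vtr{\hat{v}} \in S^{n-2}$ and compute the surface-area Jacobian, which directly produces the factor $(1-u_1^2)^{(n-3)/2}$ and the same normalization via $\int_{-1}^{1}(1-k^2)^{(n-3)/2}\,dk = B(\tfrac{1}{2},\tfrac{n-1}{2})$; I expect the geometric bookkeeping in the Jacobian to be the only place where a careless computation could go astray in this second route.
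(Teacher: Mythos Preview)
Your proposal is correct. Both you and the paper reduce, via rotational invariance and the Gaussian construction of the uniform sphere distribution, to finding the law of $\hat u_1 = G_1/\norm{\vtr{G}}$, then invoke the independence of $G_1^2 \sim \chi^2_1$ and $G_2^2+\cdots+G_n^2 \sim \chi^2_{n-1}$. The routes diverge only in bookkeeping: the paper conditions on the sign of $G_1$, rewrites $(n-1)G_1^2/(G_2^2+\cdots+G_n^2)$ as an $F_{1,n-1}$ variable, pulls in the regularized incomplete beta function as its CDF, and differentiates; you instead recognize $K^2$ directly as $\mathrm{Beta}(\tfrac12,\tfrac{n-1}{2})$ and transfer to $K$ by the change of variables $f_K(k)=|k|\,f_{K^2}(k^2)$ together with symmetry. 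Your path is shorter and avoids the CDF detour, at the cost of quoting the $\chi^2$-ratio-to-Beta fact as a black box; the paper's version is more self-contained but longer. The alternative surface-area Jacobian argument you sketch at the end would also work and is the most geometric of the three.
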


\begin{proof}
From the definition of the dot product $\dotprod{\vtr{\noise}}{\vtr{w}}$ equals
\[
\norm{\noise} \norm{\vtr{w}} \cos \theta_{\noise, \vtr{w}} = \norm{r \hat{\vtr{u}}} \norm{\vtr{w}} \cos \theta_{\noise, \vtr{w}} = r \norm{\vtr{w}} \cos \theta_{\noise, \vtr{w}}.
\]
Assuming $\norm{\vtr{w}} \neq 0$, expanding the left hand side, we have:
\begin{align*}
    \sum_{i=1}^n \eta_i w_i &= r \norm{\vtr{w}} \cos \theta_{\noise, \vtr{w}} \\
    \sum_{i=1}^n r \hat{u}_i w_i &= r \norm{\vtr{w}} \cos \theta_{\noise, \vtr{w}} \\
    \sum_{i=1}^n \hat{u}_i \frac{w_i}{\norm{\vtr{w}}} &= \cos \theta_{\noise, \vtr{w}}\\
    \dotprod{\hat{\vtr{u}}}{\hat{\vtr{w}}} &= \cos \theta_{\noise, \vtr{w}},
\end{align*}
where $\hat{\vtr{w}}$ is a unit vector obtained from $\vtr{w}$ by dividing it by its norm. Being unit vectors, the two vectors are on the surface of the $n$-dimensional hypersphere of radius one. Furthermore, $\hat{\vtr{u}}$ is uniformly distributed on the surface of this hypersphere by construction. Since the dot product is rotationally invariant, we can align ${\hat{\vtr{w}}}$ to align with the unit vector $\hat{\vtr{e}}_1$ whose first coordinate is $1$ and all other coordinates are 0. Since $\hat{\vtr{u}}$ is uniformly distributed on the surface of the hypersphere, the rotated vector is still uniformly distributed. Thus, without fear of ambiguity, let us also call this vector $\hat{\vtr{u}}$. Then we see that:
\begin{equation}
\label{eq:unit-dist}
\cos \theta_{\noise, \vtr{w}} = \dotprod{\hat{\vtr{u}}}{\hat{\vtr{e}}_1} = \hat{u}_1 = \frac{u_1}{\norm{\vtr{u}}},   
\end{equation}
where $u_1 \sim \mathcal{N}(0, 1)$ and $\norm{\vtr{u}}$ is the norm of an $n$-dimensional vector each element of which is independently distributed as $\mathcal{N}(0, 1)$. We now find the PDF of the distribution in Eq.~\eqref{eq:unit-dist}.\footnote{The derivation is taken from \url{https://math.stackexchange.com/questions/185298/random-point-uniform-on-a-sphere}. We reproduce it here to add missing details.} 

Let $U_i$ be random variables distributed as $\mathcal{N}(0, 1)$, for $1 \leq i \leq n$. We are interested in the distribution of
\[
K = \frac{U_1}{\sqrt{U^2_1 + U^2_2 + \cdots + U^2_n}}
\]
The range of this variable is in the interval [-1, 1]. Let $-1 < k < 0$. Then we see that
\begin{align}
    &\Pr\left[ \frac{U_1}{\sqrt{U^2_1 + U^2_2 + \cdots + U^2_n}} \leq k \right] \nonumber\\
    &= \Pr\left[ {U^2_1} \geq k^2 ({U^2_1 + U^2_2 + \cdots + U^2_n}) \mid U_1 < 0\right] \nonumber\\
    & = \Pr\left[ \frac{U^2_1}{U^2_2 + \cdots + U^2_n} \geq \frac{k^2}{1 - k^2} \mid U_1 < 0\right] \nonumber\\
    & = \Pr\left[ \frac{(n-1)U^2_1}{U^2_2 + \cdots + U^2_n} \geq \frac{(n-1)k^2}{1 - k^2} \mid U_1 < 0\right] \nonumber\\
    & = \frac{1}{2}\Pr\left[ \frac{(n-1)U^2_1}{U^2_2 + \cdots + U^2_n} \geq \frac{(n-1)k^2}{1 - k^2} \right] \label{eq:f-dist-inter},
\end{align}
where in the last step we have used the fact that $U_1$ is symmetric. Now $U_1^2$ is a chi-squared variable with 1 degree of freedom and $U^2_2 + \cdots + U^2_n$ is a chi-squared variable with $n - 1$ degrees of freedom~\cite[\S 4.3]{mood1950introduction}. Thus, the ratio $\frac{(n-1)U^2_1}{U^2_2 + \cdots + U^2_n}$ is an $F$-distributed random variable with degrees of freedom 1 and $n-1$~\cite[\S 4.4]{mood1950introduction}. The CDF of the $F$-distributed random variable $X$ with $1$ and $n-1$ degrees of freedom is given by:\footnote{See for example: \url{https://mathworld.wolfram.com/F-Distribution.html}}
\begin{equation}
\label{eq:f-dist-cdf}
    F_X\left(x; 1, n-1\right) = I_{\frac{x}{n-1 + x}}\left(\frac{1}{2}, \frac{n-1}{2}\right),
\end{equation}
where $I_y(a, b)$ is the regularized incomplete beta function given as
\[
I_y(a, b) = \frac{B_y(a, b)}{B(a, b)} = \frac{1}{B(a, b)} \int^{y}_{0} t^{a-1} (1-t)^{b-1} \; dt,  
\]
for $a, b > 0$. The function satisfies the relation~\cite[\S 6.4]{1988numerical-recipes}: 
\begin{equation}
\label{eq:I_y}
I_{1-y}(b, a) = 1 - I_y(a, b).
\end{equation}
Now, substituting $x = ((n-1)k^2)/(1 - k^2)$ in $x/(n-1 + x)$ we get 
\[
\frac{x}{n-1 + x} = \frac{(n-1)k^2}{1 - k^2} \frac{1 - k^2}{(n-1)(1 - k^2) + (n-1)k^2} = k^2
\]
Thus, combining this result and using Eqs.~\eqref{eq:f-dist-cdf} and \eqref{eq:I_y}, we get for $-1 < k < 0$
\begin{align}
    \Pr\left[ K \leq k \right] &= \Pr\left[ \frac{U_1}{\sqrt{U^2_1 + U^2_2 + \cdots + U^2_n}} \leq k \right] \nonumber\\
    &= \frac{1}{2}\Pr\left[ \frac{(n-1)U^2_1}{U^2_2 + \cdots + U^2_n} \geq \frac{(n-1)k^2}{1 - k^2} \right] \nonumber\\
    &=  \frac{1}{2}\left(1 - \Pr\left[ \frac{(n-1)U^2_1}{U^2_2 + \cdots + U^2_n} \leq \frac{(n-1)k^2}{1 - k^2} \right]\right) \nonumber\\
    &= \frac{1}{2}\left(1 - F_X\left(x; 1, n-1\right) \right) 
    \nonumber \\
    &=  \frac{1}{2}\left(1 - I_{k^2}\left(\frac{1}{2}, \frac{n-1}{2} \right)\right) \nonumber \\
    &=  \frac{1}{2} I_{1-k^2}\left(\frac{n-1}{2}, \frac{1}{2} \right) \label{eq:f-dist-inter-2}
\end{align}
Now taking the derivative of the integrand in $I_{1-k^2}((n-1)/2, 1/2)$ with respect to $k$ with $-1 < k < 0$, we get:
\begin{align*}
    &\frac{d}{dk} \int^{1-k^2}_{0} t^{(n-1)/2-1} (1-t)^{1/2-1} \; dt \\
    &=  (1 - k^2)^{\frac{n-1}{2} - 1} (1 - (1-k^2))^{1/2} (-2k) \\
    & = - 2 (1 - k^2)^{\frac{n-1}{2} - 1} \frac{k}{\sqrt{k^2}} \\
    & = - 2  (1 - k^2)^{\frac{n-1}{2} - 1} \frac{k}{|k|} \\
    & = 2  (1 - k^2)^{\frac{n-1}{2} - 1} 
\end{align*}
Thus, from Eq.~\eqref{eq:f-dist-inter-2}, if we denote the PDF of $K$ by $f_\text{B}$, we get for $-1 < k < 0$,
\[
    f_B(k) =  \frac{1}{2} \frac{2 (1 - k^2)^{\frac{n-1}{2} - 1} }{B(\frac{n-1}{2}, \frac{1}{2})} = \frac{1}{B(\frac{n-1}{2}, \frac{1}{2})} (1 - k^2)^{\frac{n-1}{2} - 1}.
\]
Finally, since $K$ is symmetric, the above is the PDF for $k \in [-1, 1]$.
\end{proof}

\subsubsection*{Remark}
We call the above, the \emph{noisy dot product} distribution. In light of the above theorem, we call the random variable $R$ distributed as $f_\text{G}$ the \emph{length component} of this distribution and the random variable $K = \cos \theta_{\noise, \vtr{w}}$ distributed as $f_\text{B}$ with respect to \emph{any} word embedding or vector in $\mathbb{R}^n$ as the \emph{angular component} of the distribution, with $Z = RK$ denoting the overall distribution. 

\section{Moments and Tail Bounds of the Noisy Dot Product Distribution}
\label{sec:moments}
In order to rule out any unusual behavior of the noisy dot product distribution $Z = RK$ in higher dimensions, we explore its properties in detail. These properties include its probability density function (PDF), cumulative distribution function (CDF), expectation, variance, and tail bounds of its components, i.e., $R$ and $K$. In the process, we also find an expression for all moments of the component $K$.  

\subsection{CDF and PDF of the Distribution}
\label{subsec:cdf-pdf}
%In both theorems, we deal with the product distribution of the length of the noise vector and the cosine of its angle with an embedding vector. From Theorem~\ref{theorem:dot-prod}, we see that the cosine of the angle follows a certain distribution. Let $R$ denote the random variable denoting the length $r$ of the noise vector, and let $K$ denote the random variable denoting the cosine of the noise vector $\noise$ with any embedding. Let $Z = RK$. Then, 
We are interested in:
\begin{equation*}
    F_Z(z) = \Pr(Z \leq z) = \Pr(RK \leq z) 
\end{equation*}
Now, $R$ and $K$ are independent. Also, the density function of $R$ is non-zero on positive values of $r$ and that of $K$ is non-zero on $-1 \leq k \leq 1$. Furthermore, if $K = z/r$, then $r \geq |z|$, for $K$ to be less than or equal to $1$. Therefore, we get:
\begin{align}
    F_Z(z) &= \Pr[RK \leq z] \nonumber \\
    &=  \int^{\infty}_{-\infty} \Pr[K \leq z/R \mid R = r] f_G (r) dr \nonumber\\
    &=\int^{\infty}_{|z|} \Pr[K \leq z/R \mid R = r] f_G (r) dr \nonumber\\
    &=\int^{\infty}_{|z|} \Pr[K \leq z/r] f_G (r) dr \nonumber\\
    &=\int^{\infty}_{|z|} \left( \int^{z/r}_{-1} f_B(k) dk \right) f_G (r) dr \nonumber\\
    &=\int^{\infty}_{|z|} f_G (r) \left( \int^{z/r}_{-1} f_B(k) dk \right)  dr \nonumber\\
 &= \frac{1}{\Gamma(n)\epsilon^{-n} B(\frac{n-1}{2}, \frac{1}{2})}  \int^{\infty}_{|z|} r^{n-1}e^{-\epsilon r} \nonumber\\
 &\quad\times \left( \int^{z/r}_{-1}  \left(1 - k^2\right)^{\frac{n-1}{2} - 1} dk \right) dr \label{eq:cdf-z}
\end{align}
Taking the derivative of the above with respect to $z$ using the fundamental theorem of calculus gives us the PDF of this distribution:
\begin{align}
    f_Z(z) &= \frac{d}{dz} 
     \int^{\infty}_{|z|} f_G (r) \left( \int^{z/r}_{-1} f_B(k) dk \right)  dr \nonumber\\
     &= \frac{d}{dz} 
     \int^{\infty}_{0} f_G (r) \left( \int^{z/r}_{-1} f_B(k) dk \right)  dr \nonumber\\
     &=  
     \int^{\infty}_{0} f_G (r) f_B(z/r) \frac{1}{r}  dr \label{eq:prod-ind}\\
     &=  
     \int^{\infty}_{|z|} f_G (r) f_B(z/r) \frac{1}{r}  dr \label{eq:pdf-z}
\end{align}
The last equality follows since $f_\text{B}(z/r) = 0$ for $r < |z|$. Now, for any $\delta > 0$, we have that
\begin{align*}
    f_Z(\delta) &= \int^{\infty}_{|\delta|} f_G (r) f_B(\delta/r) \frac{1}{r}  dr \\
    &= \int^{\infty}_{|-\delta|} f_G (r) f_B(-\delta/r) \frac{1}{r}  dr  = f_Z(-\delta),
\end{align*}
where the second step follows since $f_\text{B}$ is symmetric around 0. Thus, the distribution is symmetric around 0. As we shall see in Section~\ref{sec:moments-k}, the expected value of $Z$ is 0. Thus, the distribution is symmetric around its mean. 
% Similarly, it is straightforward to show that 
% \begin{equation}
% \label{eq:1-cdf-z}
%     \Pr[RK \geq z] 
%     = \frac{1}{\Gamma(n)\epsilon^{-n} B(\frac{n-1}{2}, \frac{1}{2})}  \int^{\infty}_{|z|} r^{n-1}e^{-\epsilon r} \left( \int^{1}_{z/r}  \left(1 - k^2\right)^{\frac{n-1}{2} - 1} dk \right) dr  
% \end{equation}
Unfortunately, the integral above does not have an easy analytical solution. We can, however, numerically evaluate it or through Monte Carlo simulations by repeatedly sampling the noise vector.  

\subsection{Moments of the Angular Component}
\label{sec:moments-k}
Let $K$ denote the random variable distributed as Eq.~\eqref{eq:beta-dist}. We are interested in the moment $\mathbb{E}[K^j]$ of this distribution with $j \geq 0$. We parameterize this distribution by using $K_n$ to denote the random variable $K$ with a given value of $n \geq 2$.

\begin{theorem}
\label{theorem:beta-moments}
Let $K \sim f_\text{B}$ as defined in Eq.~\eqref{eq:beta-dist}. Let $K_n$ denote $K$ for a particular value of $n$. Let $\mu(j, n)$ denote the $j$th moment of $K_n$. Then, for all $n \geq 2$
\begin{equation}
\label{eq:moments-beta}
    \mu(j, n) = \begin{cases}
        1, \; &\text{ if } j = 0,\\
        0, \; &\text{ if } j \text{ is odd},\\
        \frac{(n-2)!! (j-1)!!}{(n-2+j)!!}, \; &\text{ if } j \text{ is even}
    \end{cases}
\end{equation}
In particular, $\mathbb{E}[K] = 0$ and $\text{Var}[K] = \frac{1}{n}$. 
\end{theorem}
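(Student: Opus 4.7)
\textbf{Proof plan for Theorem~\ref{theorem:beta-moments}.} The strategy is to split on the value of $j$ and, for the only non-trivial case ($j$ even), reduce the moment integral to a beta function, then rewrite the resulting ratio of gamma values as a ratio of double factorials.

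First I would dispose of the easy cases. For $j=0$, $\mu(0,n)$ is simply $\int_{-1}^{1} f_\text{B}(k)\,dk = 1$ because $f_\text{B}$ is a probability density (the normalizing constant $B(\tfrac{n-1}{2},\tfrac{1}{2})$ in Eq.~\eqref{eq:beta-dist} is defined exactly to make this so). For $j$ odd, the integrand $k^j(1-k^2)^{(n-1)/2-1}$ is an odd function of $k$ on the symmetric interval $[-1,1]$, so $\mu(j,n)=0$. Together these already yield $\mathbb{E}[K] = \mu(1,n) = 0$.

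For even $j$ I would use the symmetry of the integrand and the substitution $u = k^2$ (so $dk = du/(2\sqrt{u})$) to rewrite
\begin{align*}
\mu(j,n) &= \frac{2}{B(\frac{n-1}{2},\frac{1}{2})}\int_0^1 k^j(1-k^2)^{\frac{n-1}{2}-1}\,dk \\
&= \frac{1}{B(\frac{n-1}{2},\frac{1}{2})}\int_0^1 u^{\frac{j+1}{2}-1}(1-u)^{\frac{n-1}{2}-1}\,du = \frac{B(\frac{j+1}{2},\frac{n-1}{2})}{B(\frac{n-1}{2},\frac{1}{2})}.
\end{align*}
Writing both beta functions as ratios of gamma functions and canceling the common $\Gamma(\tfrac{n-1}{2})$ gives
\[
\mu(j,n) = \frac{\Gamma(\tfrac{j+1}{2})\,\Gamma(\tfrac{n}{2})}{\Gamma(\tfrac{1}{2})\,\Gamma(\tfrac{n+j}{2})}.
\]
Setting $j=2m$, the identity $\Gamma(m+\tfrac{1}{2}) = \tfrac{(2m-1)!!}{2^m}\sqrt{\pi}$ handles the $\Gamma(\tfrac{j+1}{2})/\Gamma(\tfrac{1}{2})$ factor cleanly. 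For the remaining factor, I would use the telescoping product
\[
\frac{\Gamma(\tfrac{n}{2}+m)}{\Gamma(\tfrac{n}{2})} = \prod_{i=0}^{m-1}\left(\frac{n}{2}+i\right) = \frac{1}{2^m}\,n(n+2)(n+4)\cdots(n+2m-2).
\]
The factors of $2^m$ then cancel, yielding $\mu(2m,n) = (2m-1)!!\,/\,[n(n+2)\cdots(n+2m-2)]$.

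The final step, which is the most delicate, is converting the product $n(n+2)\cdots(n+2m-2)$ into the compact double-factorial form $(n+2m-2)!!/(n-2)!!$. I would verify this identity by checking parities: when $n$ is even both $(n-2)!!$ and $(n+2m-2)!!$ telescope down to $2$, and when $n$ is odd they both telescope to $1$, so the ratio collapses to the desired product in either case (using the convention $0!!=1$ to cover $n=2$). Substituting $2m=j$ produces the claimed formula $\mu(j,n) = (n-2)!!(j-1)!!/(n+j-2)!!$. The variance statement follows immediately: $\mathbb{E}[K]=0$ from the odd-moment case, and $\mathbb{E}[K^2]=\mu(2,n) = (n-2)!!/n!! = 1/n$ since $n!! = n\cdot(n-2)!!$.
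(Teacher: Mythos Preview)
Your proof is correct and is considerably more streamlined than the paper's own argument. The paper proceeds by establishing a recurrence in $n$: it first computes $\mu(j,2)$ via the trigonometric substitution $k=\sin t$ followed by integration by parts (yielding a recurrence in $j$), then computes $\mu(j,3)$ by direct integration, and finally derives the relation $\mu(j,n) = \frac{n-2}{n+j-2}\,\mu(j,n-2)$ for $n\geq 4$ by integration by parts, together with a preliminary identity $B_{n-2}/B_n = (n-2)/(n-3)$. The closed form is then assembled by unrolling this recurrence separately for even and odd $n$ back to the base cases.

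Your route---the substitution $u=k^2$ reducing the even-moment integral directly to a beta integral, followed by gamma-function and double-factorial identities---bypasses all the case analysis and recurrences, and treats every $n\geq 2$ uniformly. What the paper's approach buys is self-containment (no appeal to $\Gamma(m+\tfrac12)=(2m-1)!!\sqrt{\pi}/2^m$ or the beta integral in full generality), but at the cost of length and the need to track base cases and parities of $n$ separately. Your handling of the odd-$j$ case via symmetry is also cleaner than the paper's, which derives the vanishing of odd moments from the recurrence itself.
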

\begin{proof}
See Appendix~\ref{app:proof}.
\end{proof}

\subsection{Tail Bounds, Expectation and Variance} 
\label{subsec:tail-bounds}

The following definition and  the follow-up theorem are taken from~\cite[\S 2]{wainwright-hds}. 
\begin{definition}[Sub-Gaussian Random Variable]
\label{def:sg}
A random variable $X$ with mean $\mu = \mathbb{E}[X]$ is sub-Gaussian if there is a positive number $\sigma$ such that 
\[
\mathbb{E}\left[ e^{\lambda (X - \mu)} \right] \leq e^{\sigma^2\lambda^2/2}, \; \text{ for all } \lambda \in \mathbb{R}.
\]
\end{definition}
Here $\sigma$ is called the sub-Gaussian parameter.

\begin{theorem}[Sub-Gaussian Tail Bound]
\label{theorem:sg-tail}
A sub-Gaussian random variable $X$ with mean $\mu = \mathbb{E}[X]$ and sub-Gaussian parameter $\sigma$ satisfies
\[
\Pr \left[X - \mu \geq t \right] \leq e^{-\frac{t^2}{2\sigma^2}}, \text{ and } \Pr \left[X - \mu \leq -t \right] \leq e^{-\frac{t^2}{2\sigma^2}},
\]
and combining the two
\[
\Pr \left[|X - \mu| \geq t \right] \leq 2 e^{-\frac{t^2}{2\sigma^2}}, \; \text{ for all } t \in \mathbb{R}.
\]
\end{theorem}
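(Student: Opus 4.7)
The plan is to apply the classical Chernoff method: bound the tail probability by first inflating it into an exponential moment, then minimize the resulting bound over the free parameter $\lambda$. For the upper tail, I would start with Markov's inequality applied to the random variable $e^{\lambda (X-\mu)}$, which is nonnegative for all $\lambda \in \mathbb{R}$. For any $\lambda > 0$ and $t \geq 0$, this gives
\[
\Pr[X - \mu \geq t] = \Pr\bigl[e^{\lambda(X-\mu)} \geq e^{\lambda t}\bigr] \leq \frac{\mathbb{E}[e^{\lambda(X-\mu)}]}{e^{\lambda t}}.
\]
Invoking the sub-Gaussian hypothesis from Definition~\ref{def:sg} then yields the upper bound $\exp(\sigma^2 \lambda^2/2 - \lambda t)$.

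The next step is to optimize this bound over $\lambda > 0$. Differentiating the exponent $\sigma^2 \lambda^2/2 - \lambda t$ with respect to $\lambda$ and setting it to zero gives $\lambda^* = t/\sigma^2$, which is strictly positive for $t > 0$, so it is admissible. Substituting back produces exactly $\exp(-t^2/(2\sigma^2))$, which is the claimed upper-tail bound. For $t < 0$ the inequality $\Pr[X-\mu \geq t] \leq 1 \leq e^{-t^2/(2\sigma^2)}$ is trivial since the right-hand side exceeds $1$ only when the bound is meaningless, but for $t \leq 0$ the left-hand side is at most $1$ as well; the interesting regime is $t \geq 0$, and the statement as written holds for all real $t$ once we note the trivial case.

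For the lower tail $\Pr[X - \mu \leq -t]$, I would apply the same argument to the random variable $Y = -X$. Observe that $Y$ has mean $-\mu$, and for any $\lambda \in \mathbb{R}$,
\[
\mathbb{E}[e^{\lambda (Y - (-\mu))}] = \mathbb{E}[e^{-\lambda (X - \mu)}] \leq e^{\sigma^2 \lambda^2/2},
\]
where the inequality uses Definition~\ref{def:sg} with $-\lambda$ in place of $\lambda$ (which is permitted since the definition quantifies over all real $\lambda$). Hence $Y$ is itself sub-Gaussian with the same parameter $\sigma$, and the upper-tail bound just established applied to $Y$ gives $\Pr[Y - (-\mu) \geq t] = \Pr[X - \mu \leq -t] \leq e^{-t^2/(2\sigma^2)}$.

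Finally, the two-sided bound follows from a union bound: $\Pr[|X-\mu| \geq t] \leq \Pr[X-\mu \geq t] + \Pr[X - \mu \leq -t] \leq 2 e^{-t^2/(2\sigma^2)}$. There is no genuinely difficult step here; the only subtlety worth flagging is verifying that the optimizer $\lambda^* = t/\sigma^2$ lies in the valid range, and that the definition's quantification over all $\lambda \in \mathbb{R}$ (not just $\lambda > 0$) is what makes the lower-tail argument go through without additional assumptions.
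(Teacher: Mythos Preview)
The paper does not supply its own proof of this theorem; it is quoted verbatim from Wainwright's textbook (see the sentence ``The following definition and the follow-up theorem are taken from~[\S 2, Wainwright]''). Your Chernoff argument is exactly the standard derivation one finds in that reference, and for $t \ge 0$ it is correct in every step: Markov on $e^{\lambda(X-\mu)}$, invoke Definition~\ref{def:sg}, optimize $\lambda$, then symmetrize via $Y=-X$ and finish with a union bound.

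One small wrinkle worth tidying: your handling of the case $t<0$ is garbled. You write ``$\Pr[X-\mu \geq t] \leq 1 \leq e^{-t^2/(2\sigma^2)}$'', but $e^{-t^2/(2\sigma^2)} \le 1$ for every real $t$, so that chain of inequalities is false whenever $t\neq 0$. In fact the one-sided bounds as stated simply do \emph{not} hold for $t<0$ (take $t$ very negative: the left side is near $1$, the right side is near $0$). The intended reading of the theorem is $t\ge 0$; the trailing ``for all $t\in\mathbb{R}$'' in the paper's display is attached only to the two-sided bound, and even there it is vacuous (or false) for $t<0$. This is an imprecision in the quoted statement rather than a flaw in your method, but you should not claim the $t<0$ case is ``trivial'' --- just restrict to $t\ge 0$, which is all the paper ever uses (Corollary~\ref{cor:beta-tail-bound}).
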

We prove the following result to be used for the next theorem.
\begin{lemma}
\label{lemma:lin-exp}
Let $K \sim f_B$. Then for any $\lambda \in \mathbb{R}$ we have
\[
\mathbb{E}\left[
    \sum_{j = 0}^\infty \frac{(\lambda K)^j}{j !} \right] = 
    \sum_{j = 0}^\infty \frac{\lambda^j \mathbb{E}[K^j]}{j !} 
\]
\end{lemma}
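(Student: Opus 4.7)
The plan is to justify the interchange of the infinite sum with the expectation, which is the whole content of the lemma. Since $K$ takes values in $[-1, 1]$ almost surely (as $f_\text{B}$ is supported on $[-1,1]$), the series inside the expectation is, term by term, dominated by a deterministic convergent series, which is exactly the hypothesis one needs for Fubini--Tonelli (or, equivalently, dominated convergence applied to the partial sums).

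First I would record the key bound: for every $j \ge 0$ and every realization of $K$,
\[
\left| \frac{(\lambda K)^j}{j!} \right| \le \frac{|\lambda|^j}{j!},
\]
because $|K| \le 1$. Summing this deterministic bound gives $\sum_{j=0}^{\infty} |\lambda|^j / j! = e^{|\lambda|} < \infty$. In particular, $\mathbb{E}\bigl[\sum_{j \ge 0} |\lambda K|^j / j!\bigr] \le e^{|\lambda|} < \infty$, so the non-negative version of Fubini's theorem (Tonelli) applies to the product measure of the probability measure for $K$ and the counting measure on $\mathbb{N}$.

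Next I would apply Fubini to the signed series. The finiteness of the double sum of absolute values, shown above, licenses the exchange
\[
\mathbb{E}\!\left[\sum_{j=0}^{\infty} \frac{(\lambda K)^j}{j!}\right] = \sum_{j=0}^{\infty} \mathbb{E}\!\left[\frac{(\lambda K)^j}{j!}\right] = \sum_{j=0}^{\infty} \frac{\lambda^j \, \mathbb{E}[K^j]}{j!},
\]
where the last equality is linearity of expectation, using that $\lambda^j / j!$ is a constant. An alternative presentation would define the partial sum $S_N = \sum_{j=0}^{N} (\lambda K)^j / j!$, note $|S_N| \le e^{|\lambda|}$ uniformly in $N$ and in $K$, observe $S_N \to e^{\lambda K}$ pointwise, and conclude by the dominated convergence theorem that $\mathbb{E}[S_N] \to \mathbb{E}[e^{\lambda K}]$, which gives the same identity.

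There is essentially no obstacle here beyond checking the dominating-function hypothesis; the boundedness of $K$ on $[-1,1]$ makes the result nearly immediate. The only subtle point worth flagging is that Theorem~\ref{theorem:beta-moments} already gives finite values for every $\mathbb{E}[K^j]$, so the right-hand side is a well-defined absolutely convergent series (it is bounded in absolute value by $e^{|\lambda|}$), ensuring that the equation is an equality of finite quantities and not merely a formal manipulation.
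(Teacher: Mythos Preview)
Your argument is correct and rests on the same key observation as the paper's proof: since $K$ is supported on $[-1,1]$, each term is dominated by $|\lambda|^j/j!$, whose sum is $e^{|\lambda|}$. The paper packages this slightly differently: it looks at the partial sums $S_m=\sum_{j=0}^{m}\frac{\lambda^j}{j!}\,\mathbb{E}[|K|^j]$, argues they are monotone and bounded by $e^{|\lambda|}$, and then invokes a linearity-of-expectation result for absolutely convergent series. Your route via Fubini--Tonelli on the product of the law of $K$ with counting measure (or, equivalently, dominated convergence on the partial sums) is the standard measure-theoretic formulation of the same idea and is arguably cleaner; in particular it sidesteps any worry about whether the sequence $S_m$ is truly monotone when $\lambda<0$ (where the terms $\lambda^j\,\mathbb{E}[|K|^j]$ alternate in sign), which your uniform bound $|S_N|\le e^{|\lambda|}$ handles automatically.
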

\begin{proof}
Since $-1 \leq K \leq 1$, we have $|K| \leq 1$. Therefore $\mathbb{E}[|K|^j] \leq 1 $. We have, for any $j \geq 0$:
\[
\mathbb{E}\left[
   \frac{\lambda^j |K|^j}{j !} \right] = \frac{\lambda^j}{j!} \mathbb{E}[|K|^j] \leq  \frac{\lambda^j}{j!}.
\] 
For any integer $m$, let
\[
S_m = \mathbb{E}\left[ \sum_{j = 0}^m
   \frac{\lambda^j |K|^j}{j !} \right] = \sum_{j = 0}^m \frac{\lambda^j}{j!} \mathbb{E}[|K|^j] \leq \sum_{j = 0}^m \frac{\lambda^j}{j!},
\]
where the last inequality follows from the result above. Furthermore, the sequence $S_m$ is monotonically increasing if $\lambda \geq 0$ or monotonically decreasing if $\lambda < 0$, since $\mathbb{E}[|K|^j] \geq 0$ as $|K| \geq 0$. Thus, $S_m$ is a monotone sequence. From the Taylor series expansion of the exponential function, we have for any $\lambda \in \mathbb{R}$
\begin{align*}
    e^{|\lambda|} = \sum_{j = 0}^\infty \frac{|\lambda|^j}{j!} \geq \sum_{j = 0}^m \frac{|\lambda|^j}{j!} \geq \sum_{j = 0}^m \frac{\lambda^j}{j!} \geq S_m.
\end{align*}
Thus, $S_m$ is bounded. From the monotone convergence theorem~\cite[\S 2.4]{abbott-analysis}, $S_m$ converges. Therefore, the statement of the theorem follows as expectation is linear in this case~\cite[\S 2.1.1]{mitzenmacher2017probability}.   
\end{proof}

\begin{theorem}
\label{theorem:beta-sg}
Let $K \sim f_B$ where $f_B$ is as defined in Eq.~\ref{eq:beta-dist}. Then $K$ is sub-Gaussian with parameter $\sigma = \frac{1}{\sqrt{n}}$. 
\end{theorem}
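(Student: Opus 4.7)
The plan is to verify the sub-Gaussian inequality from Definition~\ref{def:sg} directly using the moment formula from Theorem~\ref{theorem:beta-moments}. Since that theorem gives $\mu = \mathbb{E}[K] = 0$, the goal reduces to showing
\[
\mathbb{E}[e^{\lambda K}] \leq e^{\lambda^2/(2n)} \quad \text{for all } \lambda \in \mathbb{R}.
\]

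First I would expand the exponential as a Taylor series and invoke Lemma~\ref{lemma:lin-exp} to exchange the expectation with the infinite sum, giving $\mathbb{E}[e^{\lambda K}] = \sum_{j \geq 0} \lambda^j \mathbb{E}[K^j]/j!$. Next, I would use Theorem~\ref{theorem:beta-moments} to drop all odd-order terms (since the odd moments vanish) and substitute the even-moment formula $\mathbb{E}[K^{2i}] = (n-2)!!\,(2i-1)!!/(n-2+2i)!!$. The identities $(2i)! = (2i)!!\,(2i-1)!!$ and $(2i)!! = 2^i\, i!$ then collapse the moment generating function into the clean form
\[
\mathbb{E}[e^{\lambda K}] = \sum_{i=0}^{\infty} \frac{\lambda^{2i}}{2^i\, i!} \cdot \frac{(n-2)!!}{(n-2+2i)!!}.
\]

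Comparing this term-by-term with the Taylor expansion $e^{\lambda^2/(2n)} = \sum_{i \geq 0} \lambda^{2i}/(2^i n^i i!)$ reduces the problem to the purely combinatorial inequality
\[
\frac{(n-2)!!}{(n-2+2i)!!} \leq \frac{1}{n^i}, \qquad i \geq 0.
\]
This follows by telescoping the double factorial ratio: $(n-2+2i)!!/(n-2)!! = \prod_{k=0}^{i-1}(n+2k) \geq n^i$, since each of the $i$ factors is at least $n$.

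The only real obstacle is ensuring the manipulation of double factorials is valid for both parities of $n$ and that the interchange of expectation and series is rigorously justified, but the former reduces to a single telescoping product and the latter is precisely what Lemma~\ref{lemma:lin-exp} provides. Putting these pieces together yields the bound term-by-term and hence the sub-Gaussian property with parameter $\sigma = 1/\sqrt{n}$.
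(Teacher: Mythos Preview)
Your proposal is correct and follows essentially the same argument as the paper's proof: Taylor-expand the MGF, apply Lemma~\ref{lemma:lin-exp} to interchange sum and expectation, drop odd moments, use the double-factorial identity $(2i)!/(2i-1)!! = 2^i i!$ to simplify, and then bound $(n-2)!!/(n-2+2i)!! = \prod_{k=0}^{i-1}(n+2k)^{-1} \leq n^{-i}$ term-by-term against the series for $e^{\lambda^2/(2n)}$. The only cosmetic difference is that you state the telescoping product bound explicitly, whereas the paper writes out the denominator factors before applying the same inequality.
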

\begin{proof}
See Appendix~\ref{app:proof}. 
\end{proof}

\begin{corollary}
\label{cor:beta-tail-bound}
Let $K \sim f_B$. Then for any $c \in \mathbb{R}$,
\[
\Pr\left[|K| \geq \frac{c}{\sqrt{n}}\right] \leq 2 e^{-\frac{c^2}{2}}.
\]
\end{corollary}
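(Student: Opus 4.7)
The plan is to deduce the corollary directly from the sub-Gaussian machinery already in place: Theorem~\ref{theorem:beta-sg} establishes that $K \sim f_B$ is sub-Gaussian with parameter $\sigma = 1/\sqrt{n}$, and Theorem~\ref{theorem:beta-moments} gives $\mathbb{E}[K] = 0$. With these two facts, the corollary is essentially a substitution into the generic sub-Gaussian tail bound of Theorem~\ref{theorem:sg-tail}.

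Concretely, I would first invoke Theorem~\ref{theorem:sg-tail} in its two-sided form applied to $K$. Since $\mu = \mathbb{E}[K] = 0$, this yields
\[
\Pr\bigl[\,|K| \geq t\,\bigr] \leq 2 \exp\!\left(-\frac{t^2}{2\sigma^2}\right) = 2 \exp\!\left(-\frac{n t^2}{2}\right),
\]
valid for every $t \geq 0$. Then I would set $t = c/\sqrt{n}$ (taking $c \geq 0$, since for $c \leq 0$ the event $|K| \geq c/\sqrt{n}$ is trivial and the bound only need be interpreted in the meaningful regime). Substituting gives $n t^2 / 2 = n \cdot (c^2/n)/2 = c^2/2$, so that
\[
\Pr\!\left[\,|K| \geq \frac{c}{\sqrt{n}}\,\right] \leq 2 e^{-c^2/2},
\]
which is exactly the claim.

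There is really no obstacle here: all the non-trivial work has already been done in Theorem~\ref{theorem:beta-moments} (to compute the mean) and Theorem~\ref{theorem:beta-sg} (to identify the sub-Gaussian parameter). The only thing to be careful about is matching the normalizing constants correctly, namely that $1/(2\sigma^2) = n/2$ when $\sigma = 1/\sqrt{n}$, and that the symmetry of $f_B$ around $0$ (already noted after Theorem~\ref{theorem:beta-moments}) guarantees the two-sided tail bound has the stated factor of $2$. Thus the proof should be a short, two- or three-line derivation.
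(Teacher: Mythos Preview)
Your proposal is correct and matches the paper's intent exactly: the corollary is stated without proof as an immediate consequence of Theorem~\ref{theorem:beta-sg} (sub-Gaussianity with $\sigma = 1/\sqrt{n}$) and Theorem~\ref{theorem:sg-tail} (the two-sided sub-Gaussian tail bound), with $\mathbb{E}[K]=0$ from Theorem~\ref{theorem:beta-moments}. Your substitution $t = c/\sqrt{n}$ is precisely the one-line derivation the paper leaves implicit.
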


Next we prove generic lower and upper tail bounds for the gamma distributed random variable $R$.\footnote{These results are generalizations of the result for the upper tail bound with $c = 2$ in \url{https://math.hawaii.edu/~grw/Classes/2013-2014/2014Spring/Math472_1/Solutions01.pdf}.}

\begin{theorem}
\label{theorem:gamma-tail-bound}
Let $R \sim f_G$. Then for any real number $c > 1$,
\[
\Pr\left[R \geq \frac{cn}{\epsilon}\right] \leq \left( \frac{c}{e^{c-1}} \right)^n,
%\]    
\text{ and } 
%\[
\Pr\left[R \leq \frac{n}{c \epsilon}\right] \leq \frac{1}{(ce^{(1-c)/c})^n}
\]  
\end{theorem}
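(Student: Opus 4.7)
The plan is to prove both tail bounds via a standard Chernoff-style argument using the moment generating function (MGF) of the gamma distribution $f_G$, which is well known to be $\mathbb{E}[e^{\lambda R}] = (\epsilon/(\epsilon - \lambda))^n$ for $\lambda < \epsilon$. The only non-routine part is choosing the right value of $\lambda$ so that the bound evaluates to the clean expressions in the theorem; these optimal values turn out to be explicit in $c$ and $\epsilon$.

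For the upper tail, I would start from Markov's inequality applied to $e^{\lambda R}$ with $\lambda > 0$:
\[
\Pr\!\left[R \geq \tfrac{cn}{\epsilon}\right] \;\leq\; e^{-\lambda cn/\epsilon}\,\mathbb{E}[e^{\lambda R}] \;=\; e^{-\lambda cn/\epsilon}\left(\frac{\epsilon}{\epsilon - \lambda}\right)^{n},
\]
valid for $0 < \lambda < \epsilon$. Differentiating the log of the right-hand side with respect to $\lambda$ gives the optimizer $\lambda^\ast = \epsilon(c-1)/c$, which lies in $(0, \epsilon)$ precisely because $c > 1$. Plugging in yields $e^{-\lambda^\ast cn/\epsilon} = e^{-n(c-1)}$ and $(\epsilon/(\epsilon - \lambda^\ast))^n = c^n$, so the bound collapses exactly to $(c/e^{c-1})^n$.

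For the lower tail, I would use Markov on $e^{-\lambda R}$ with $\lambda > 0$:
\[
\Pr\!\left[R \leq \tfrac{n}{c\epsilon}\right] \;=\; \Pr\!\left[e^{-\lambda R} \geq e^{-\lambda n/(c\epsilon)}\right] \;\leq\; e^{\lambda n/(c\epsilon)} \left(\frac{\epsilon}{\epsilon + \lambda}\right)^{n},
\]
since $\mathbb{E}[e^{-\lambda R}] = (\epsilon/(\epsilon + \lambda))^n$ holds for all $\lambda > -\epsilon$ by a direct gamma-integral computation. Optimizing in $\lambda > 0$ gives $\lambda^\ast = \epsilon(c-1)$, which is positive because $c > 1$. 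Substituting back, $e^{\lambda^\ast n/(c\epsilon)} = e^{n(c-1)/c}$ and $(\epsilon/(\epsilon + \lambda^\ast))^n = c^{-n}$, which rearranges to the stated $1/(c e^{(1-c)/c})^n$.

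There is no real obstacle here; the work is essentially (i) verifying the MGF by a gamma integral, (ii) the one-variable optimization in $\lambda$, and (iii) simplification. The only thing worth being careful about is ensuring that the optimal $\lambda$ lies in the domain where the MGF is finite (in both cases this requires $c > 1$, which is given). If desired, the MGF computation can be stated as a one-line lemma rather than proved inline, since $R/\epsilon$ follows a standard Gamma$(n, 1)$ law whose MGF is a textbook identity.
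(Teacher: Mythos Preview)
Your proposal is correct and follows essentially the same Chernoff/MGF argument as the paper: apply Markov's inequality to $e^{tR}$, use the gamma MGF $(1 - t/\epsilon)^{-n}$, and optimize in $t$ to obtain $t^\ast = \epsilon(c-1)/c$ for the upper tail and $t^\ast = (1-c)\epsilon$ for the lower tail (your $\lambda^\ast = \epsilon(c-1)$ with the sign flipped). The paper additionally verifies the second-derivative sign at the optimizer, but otherwise the arguments coincide.
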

\begin{proof}
See Appendix~\ref{app:proof}.
\end{proof}

Combining the result from Corollary~\ref{cor:beta-tail-bound} and Theorem~\ref{theorem:gamma-tail-bound}, we have the following bound on the overall noise distribution.
\begin{theorem}
\label{theorem:tail-bound-z}
Let $Z = RK$. Then, for all $c_1, c_2 \in \mathbb{R}$, where $c_2 > 1$, we have
\[
\Pr\left[ |Z| \leq \frac{c_1c_2 \sqrt{n}}{\epsilon}\right] \geq 2\left(1 - e^{\frac{-c_1^2}{2}}\right) \left(1 - \left( \frac{c_2}{e^{c_2-1}} \right)^n\right) - 1
\]
\end{theorem}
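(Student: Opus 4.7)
The plan is to reduce the two-dimensional tail event on $Z = RK$ to separate tail events on the angular component $K$ and the length component $R$, and then exploit their independence (as established in Theorem~\ref{theorem:dot-prod}) to factorize the resulting probability.

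First, I would define the ``good'' events
\[
A = \left\{ |K| \leq \tfrac{c_1}{\sqrt{n}} \right\}, \qquad B = \left\{ R \leq \tfrac{c_2 n}{\epsilon} \right\}.
\]
Since $R \geq 0$ almost surely (gamma support), on $A \cap B$ we have
\[
|Z| \;=\; R\,|K| \;\leq\; \frac{c_2 n}{\epsilon} \cdot \frac{c_1}{\sqrt{n}} \;=\; \frac{c_1 c_2 \sqrt{n}}{\epsilon},
\]
so $\Pr[|Z| \leq c_1 c_2 \sqrt{n}/\epsilon] \geq \Pr[A \cap B]$. Because $R$ and $K$ are independent (they are constructed from a gamma-distributed radius and a uniform direction on the sphere, respectively), this factors as $\Pr[A \cap B] = \Pr[A]\,\Pr[B]$.

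Next I would plug in the two previously proved tail inequalities: Corollary~\ref{cor:beta-tail-bound} yields
\[
\Pr[A] \;\geq\; 1 - 2 e^{-c_1^2/2} \;=\; 2\bigl(1 - e^{-c_1^2/2}\bigr) - 1,
\]
and the upper tail bound in Theorem~\ref{theorem:gamma-tail-bound} yields $\Pr[B] \geq 1 - (c_2/e^{c_2-1})^n$ (which is valid because $c_2 > 1$ is in the hypothesis). Writing $p = 1 - e^{-c_1^2/2}$ and $q = 1 - (c_2/e^{c_2-1})^n$, the product bound becomes
\[
\Pr[A]\,\Pr[B] \;\geq\; (2p - 1)\,q \;=\; 2pq - q.
\]
Finally, since $q \leq 1$, we have $2pq - q \geq 2pq - 1$, which is exactly the claimed expression once $p$ and $q$ are re-expanded.

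I expect no serious obstacle here; the only subtlety is choosing the right form of the bound on $\Pr[A]$, namely rewriting $1 - 2 e^{-c_1^2/2}$ as $2(1 - e^{-c_1^2/2}) - 1$ so that the product with $q$ produces an expression of the shape $2pq - 1$. The mild loss in passing from $(2p-1)q$ to $2pq - 1$ (using $q \leq 1$) is what lets the final answer be written in the clean symmetric form stated in the theorem.
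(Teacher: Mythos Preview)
Your argument is correct, and it reaches the stated bound by a slightly different route than the paper. The paper first bounds the one-sided event $\Pr[Z \leq c_1 c_2 \sqrt{n}/\epsilon]$ using the one-sided sub-Gaussian tail $\Pr[K \leq c_1/\sqrt{n}] \geq 1 - e^{-c_1^2/2}$ together with the gamma tail bound, obtaining $\Pr[Z \leq c_1 c_2 \sqrt{n}/\epsilon] \geq pq$; it then invokes the symmetry of $Z$ about $0$ to get the same lower bound on $\Pr[Z \geq -c_1 c_2 \sqrt{n}/\epsilon]$, and finally applies Bonferroni's inequality to conclude $\Pr[|Z| \leq c_1 c_2 \sqrt{n}/\epsilon] \geq 2pq - 1$. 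You instead work directly with $|Z| = R|K|$, use the two-sided bound on $|K|$ from Corollary~\ref{cor:beta-tail-bound}, and factor via independence to obtain the intermediate bound $(2p-1)q$, which you then relax to $2pq - 1$ using $q \leq 1$. Your approach is a bit more direct since it avoids both the symmetry argument and Bonferroni, and your intermediate bound $(2p-1)q = 2pq - q$ is in fact at least as sharp as the paper's $2pq - 1$; the final relaxation is only needed to match the stated form.
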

\begin{proof}
We have
\begin{align*}
    \Pr\left[ Z \leq \frac{c_1c_2 \sqrt{n}}{\epsilon}\right] &= \Pr\left[ RK \leq \frac{c_1c_2 \sqrt{n}}{\epsilon}\right] \\
    &= \Pr\left[ RK \leq \frac{c_1}{\sqrt{n}}\frac{c_2 n}{\epsilon}\right] \\
    &\geq \Pr\left[ K \leq \frac{c_1}{\sqrt{n}}\right] \Pr\left[R \leq  \frac{c_2 n}{\epsilon}\right] \\
    &\geq \left(1 - e^{\frac{-c_1^2}{2}}\right) \left(1 - \left( \frac{c_2}{e^{c_2-1}} \right)^n\right), 
\end{align*}
where the last inequality follows from Corollary~\ref{cor:beta-tail-bound} and Theorem~\ref{theorem:gamma-tail-bound}. From Section~\ref{subsec:cdf-pdf}, $Z$ is symmetric, and hence the above bound is also true for $\Pr [ Z \geq - c_1 c_2 \sqrt{n}/\epsilon]$. Through Bonferroni's inequality
\begin{align*}
    &\Pr\left[ |Z| \leq \frac{c_1c_2 \sqrt{n}}{\epsilon}\right] \\
    &= \Pr\left[ Z \leq \frac{c_1c_2 \sqrt{n}}{\epsilon} \text{ and } Z \geq -\frac{c_1c_2 \sqrt{n}}{\epsilon}\right] \\
    &\geq \Pr\left[ Z \leq \frac{c_1c_2 \sqrt{n}}{\epsilon} \right] + \Pr\left[  Z \geq -\frac{c_1c_2 \sqrt{n}}{\epsilon}\right] - 1\\
    &\geq 2\left(1 - e^{\frac{-c_1^2}{2}}\right) \left(1 - \left( \frac{c_2}{e^{c_2-1}} \right)^n\right) - 1,
\end{align*}
as required.
\end{proof}
As an illustration of this inequality, with $n = 100$ and $\epsilon = 10$, more than 99 percent of the probability mass of $Z$ lies within the interval $\pm 4.8\sqrt{n}/\epsilon = 4.8$ (with $c_1 \approx 3.46$ and $c+2 \approx 1.39$). On the other hand, with $n = 10$, more than 99 percent of the probability mass of $Z$ lies within the interval $\pm 8.74\sqrt{n}/\epsilon \approx 2.76$ (put for example $c_1 \approx 3.46$ and $c_2 \approx 2.53$).\footnote{These values were obtained by fixing the bound from Theorem~\ref{theorem:tail-bound-z} at $0.99$ and numerically finding the constants $c_1$ and $c_2$ by keeping the two terms in the product equal to each other.} Thus, the majority of the mass of $Z$ is concentrated within $\mathcal{O}(\sqrt{n}/\epsilon)$.

Finally we have the following theorem on the expected value and variance of $Z$, together with the convergence to the expected value if we sample a large number of instances of $Z$. 
\begin{theorem}
\label{theorem:exp-z}
Let $Z = RK$. Then $\mathbb{E}[Z] = 0$ and $\text{Var}[Z] = \frac{n+1}{\epsilon^2}$. 
\end{theorem}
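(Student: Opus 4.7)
The plan is to exploit the independence of $R$ and $K$ together with the moment formulas already established for each component, so no integration against $f_Z$ itself will be required. Recall that $Z = RK$ where $R \sim f_G$ is the length component and $K \sim f_B$ is the angular component, and these are sampled independently in the construction of the noise vector.

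First I would dispatch the mean. By independence, $\mathbb{E}[Z] = \mathbb{E}[R]\mathbb{E}[K]$. Theorem~\ref{theorem:beta-moments} gives $\mathbb{E}[K] = \mu(1,n) = 0$ (since $j=1$ is odd), so $\mathbb{E}[Z] = 0$ immediately, regardless of the distribution of $R$. This also confirms the symmetry observation from Section~\ref{subsec:cdf-pdf} that $Z$ is centered at its mean.

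For the variance, since $\mathbb{E}[Z] = 0$ we have $\mathrm{Var}[Z] = \mathbb{E}[Z^2] = \mathbb{E}[R^2 K^2] = \mathbb{E}[R^2]\,\mathbb{E}[K^2]$, again by independence. Theorem~\ref{theorem:beta-moments} states $\mathrm{Var}[K] = 1/n$ and combined with $\mathbb{E}[K] = 0$ this gives $\mathbb{E}[K^2] = 1/n$. For $\mathbb{E}[R^2]$, I would use the standard moments of the Gamma distribution whose density is given in Eq.~\eqref{eq:gamma-dist}: namely $\mathbb{E}[R] = n/\epsilon$ and $\mathrm{Var}[R] = n/\epsilon^2$, yielding
\[
\mathbb{E}[R^2] = \mathrm{Var}[R] + \mathbb{E}[R]^2 = \frac{n}{\epsilon^2} + \frac{n^2}{\epsilon^2} = \frac{n(n+1)}{\epsilon^2}.
\]
These Gamma moments can be verified in one line by evaluating $\int_0^\infty r^{n+1} e^{-\epsilon r}\,dr = \Gamma(n+2)/\epsilon^{n+2}$ and dividing by the normalizing constant $\Gamma(n)/\epsilon^n$, giving $\Gamma(n+2)/(\Gamma(n)\epsilon^2) = n(n+1)/\epsilon^2$. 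Multiplying, $\mathrm{Var}[Z] = (n(n+1)/\epsilon^2)\cdot(1/n) = (n+1)/\epsilon^2$.

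There is essentially no obstacle here; the entire proof is a two-line application of independence plus moment formulas that are either proved earlier (Theorem~\ref{theorem:beta-moments}) or that follow from a direct integration against $f_G$. The only thing worth being a bit careful about is to cite the independence of $R$ and $K$ explicitly — this is justified by the sampling procedure in Section~\ref{subsec:dx-privacy}, where $r$ is drawn from $f_G$ and the direction $\hat{\vtr{u}}$ is drawn independently from the uniform distribution on the unit sphere (so the cosine $K$ with any fixed $\vtr{w}$ depends only on $\hat{\vtr{u}}$ and is independent of $r$).
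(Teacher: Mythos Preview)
Your proposal is correct and follows essentially the same route as the paper's proof: both use independence of $R$ and $K$ to factor $\mathbb{E}[Z]=\mathbb{E}[R]\mathbb{E}[K]=0$ and $\mathrm{Var}[Z]=\mathbb{E}[R^2]\mathbb{E}[K^2]=(n/\epsilon^2+n^2/\epsilon^2)(1/n)=(n+1)/\epsilon^2$. Your version is slightly more detailed in that you explicitly justify the Gamma second moment via $\Gamma(n+2)/(\Gamma(n)\epsilon^2)$ and spell out why $R$ and $K$ are independent, but the argument is otherwise identical.
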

\begin{proof}
We know that the gamma distributed random variable $R$ has $\mathbb{E}[R] = n/\epsilon$ and $\text{Var}[R] = n/\epsilon^2$
Since $R$ and $K$ are independent, we have $\mathbb{E}[Z] = \mathbb{E}[RK]= \mathbb{E}[R] \mathbb{E}[K] = \frac{n}{\epsilon} \cdot 0 = 0$. Now using this result, and again because $R$ and $K$ are independent, we have:
\begin{align*}
    \text{Var}[Z] &= \mathbb{E}[Z^2] - (\mathbb{E}[Z])^2 = \mathbb{E}[Z^2]\\
    &= \mathbb{E}[R^2 K^2]
    = \mathbb{E}[R^2] \mathbb{E}[K^2]\\
    &= ( \text{Var}[R] + (\mathbb{E}[R])^2) \text{Var}[K]\\
    &= \left( \frac{n}{\epsilon^2} + \frac{n^2}{\epsilon^2} \right)\left( \frac{1}{n}\right) = \frac{n+1}{\epsilon^2}
\end{align*}
\end{proof}

\begin{corollary}
\label{cor:z-chebyshev}
Let $Z_1, Z_2, \ldots, Z_m$ be $m$ independent samples of the random variable $Z = RK$. Let $\bar{Z} = \frac{1}{m}\sum_{i=1}^m Z_i $. Then for any $\delta > 0$
\begin{equation}
\label{eq:z-chebyshev}
\Pr\left[ |\bar{Z}| > \delta \right] \leq \frac{n+1}{m\epsilon^2 \delta^2}
\end{equation}
\end{corollary}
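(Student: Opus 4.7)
The plan is to use Chebyshev's inequality applied to the empirical mean $\bar{Z}$, leveraging Theorem~\ref{theorem:exp-z} which already provides the expectation and variance of a single instance of $Z$. Since $\bar{Z}$ is a linear combination of i.i.d.\ copies of $Z$, its mean and variance are immediate by the usual rules for independent sums.

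First, I would compute $\mathbb{E}[\bar{Z}]$. By linearity of expectation and Theorem~\ref{theorem:exp-z}, $\mathbb{E}[\bar{Z}] = \frac{1}{m}\sum_{i=1}^m \mathbb{E}[Z_i] = \mathbb{E}[Z] = 0$. Next I would compute the variance. Because the $Z_i$ are independent, the variance of the sum is the sum of variances, so $\text{Var}[\bar{Z}] = \frac{1}{m^2}\sum_{i=1}^m \text{Var}[Z_i] = \frac{1}{m}\text{Var}[Z] = \frac{n+1}{m\epsilon^2}$, again invoking Theorem~\ref{theorem:exp-z}.

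Finally, I would apply Chebyshev's inequality in the form $\Pr[|X - \mathbb{E}[X]| > \delta] \leq \text{Var}[X]/\delta^2$ to the random variable $\bar{Z}$. Since $\mathbb{E}[\bar{Z}] = 0$, the event $|\bar{Z} - \mathbb{E}[\bar{Z}]| > \delta$ coincides with $|\bar{Z}| > \delta$, and plugging in the variance computed above immediately yields the claimed bound $\frac{n+1}{m\epsilon^2\delta^2}$.

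There is no real obstacle here: this is a textbook concentration-of-the-mean argument, and the only subtlety worth flagging is the i.i.d.\ assumption on the $Z_i$, which is needed to eliminate covariance terms when summing the variances. The value of the corollary lies not in the difficulty of its proof but in the practical message it carries, namely that numerical or Monte Carlo approximations to the mean of $Z$ (and hence to the PDF/CDF integrals from Section~\ref{subsec:cdf-pdf} that lack closed forms) converge at rate $\mathcal{O}(\sqrt{n}/(\epsilon\sqrt{m}))$.
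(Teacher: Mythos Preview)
Your proposal is correct and mirrors the paper's own proof, which simply states that the result follows by inserting the expectation and variance of $Z$ from Theorem~\ref{theorem:exp-z} into Chebyshev's inequality. You have merely spelled out the standard i.i.d.\ mean-and-variance computation that the paper leaves implicit.
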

\begin{proof}
The result is obtained by putting the expected value and variance of $Z$ in Chebyshev's inequality~\cite[\S 2]{wainwright-hds}. 
\end{proof}
The above result shows that the higher the dimension and/or the lower the values of $\epsilon$ the slower will be the convergence of the average of $Z$ to $0$, the expectation. Thus, the behaviour of the noisy dot product distribution is as we would expect: it is symmetric with mean 0, and it converges to this expectation inversely proportional to $\epsilon$, the privacy parameter. 

\section{The Loss Function and Consequences}
\label{sec:loss-function}
Equipped with the results of the last section, we can now explore in depth how the nearest neighbor is obtained via the post-processing step. Recall the objective function from Eq.~\eqref{eq:min-problem}. Taking its square, we get
\begin{align*}
    \norm{\vtr{w}^* - \vtr{x}}^2 &= \norm{\vtr{w} + \noise - \vtr{x}}^2 \\
    &= \norm{\vtr{w} - \vtr{x}}^2 + \norm{\noise}^2 + 2 \dotprod{\vtr{w} - \vtr{x}}{\noise}\\
    &= \norm{\vtr{w} - \vtr{x}}^2 + r^2 + 2 \dotprod{\vtr{w} - \vtr{x}}{\noise} \\
    &= \norm{\vtr{w} - \vtr{x}}^2 + r^2 + 2 \dotprod{\vtr{w}}{\noise} - 2 \dotprod{\vtr{x}}{\noise}
\end{align*}
Now, the terms $\dotprod{\vtr{w}}{\noise}$ and $r^2$ are the same for all $\vtr{x} \in \mathcal{D}$, and therefore we can ignore them when finding the minimum. Let us define the \emph{loss function} containing the remaining terms for all $\vtr{x} \in \mathcal{D}$ as
\begin{equation}
\label{eq:loss-function}
L(\vtr{x}) = \norm{\vtr{w} - \vtr{x}}^2 - 2 \dotprod{\vtr{x}}{\noise} = \norm{\vtr{w} - \vtr{x}}^2 + 2 \norm{\vtr{x}} r \cos \theta_{\noise, \vtr{x}}
\end{equation}
Indeed, using partial derivatives we can see that the solution that minimizes the loss function $L$ is $\vtr{x} = \vtr{w} + \noise = \vtr{w}^*$, but we know that with overwhelming probability this vector is not part of the embedding space. We have the following result.
\begin{theorem}
\label{theorem:loss-function}
Let $L$ be defined as in Eq.~\eqref{eq:loss-function}. Then for any $\vtr{x} \in \mathcal{D}$
\[
\mathbb{E}[ L(\vtr{x}) ] = \norm{\vtr{w} - \vtr{x}}^2.
\]
In particular, $\mathbb{E}[ L(\vtr{w}) ] = 0 $, and $\mathbb{E}[ L(\vtr{x}) ] > 0$ iff $\vtr{x} \neq \vtr{w}$. Furthermore, for any $\vtr{x}, \vtr{y} \in \mathcal{D}$ with $\vtr{x} \neq \vtr{y}$ and $\norm{\vtr{w} - \vtr{x}} < \norm{\vtr{w} - \vtr{y}}$, we have 
\[
\Pr\left[ L(\vtr{x}) < L(\vtr{y}) \right] > 1/2.
\]
\end{theorem}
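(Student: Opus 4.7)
My plan is to handle the first two assertions via linearity of expectation combined with the fact that the angular component of the noisy dot product has zero mean, and to handle the inequality in the last assertion by rewriting $L(\vtr{x}) - L(\vtr{y})$ as an affine function of a single noisy dot product variable and invoking the symmetry established in Section~\ref{subsec:cdf-pdf}.

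First I would compute $\mathbb{E}[L(\vtr{x})]$ by linearity: expanding the definition yields $\mathbb{E}[L(\vtr{x})] = \norm{\vtr{w} - \vtr{x}}^2 - 2 \mathbb{E}[\dotprod{\vtr{x}}{\noise}]$. Theorem~\ref{theorem:dot-prod} expresses $\dotprod{\vtr{x}}{\noise}$ as $\norm{\vtr{x}}\,R\,K$ with $R$ and $K$ independent, and Theorem~\ref{theorem:beta-moments} gives $\mathbb{E}[K] = 0$, so the cross term vanishes and the first identity holds. The second claim is then immediate from the fact that $\norm{\vtr{w} - \vtr{x}}^2 > 0$ iff $\vtr{x} \neq \vtr{w}$.

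For the strict probabilistic inequality, I would write $L(\vtr{x}) - L(\vtr{y}) = a - 2 \dotprod{\vtr{x} - \vtr{y}}{\noise}$ with $a := \norm{\vtr{w} - \vtr{x}}^2 - \norm{\vtr{w} - \vtr{y}}^2 < 0$ by hypothesis (squaring preserves the strict ordering of non-negative norms), so that the event $\{L(\vtr{x}) < L(\vtr{y})\}$ coincides with $\{\dotprod{\vtr{x} - \vtr{y}}{\noise} > a/2\}$. By Theorem~\ref{theorem:dot-prod}, the random variable $\dotprod{\vtr{x} - \vtr{y}}{\noise}$ equals $\norm{\vtr{x} - \vtr{y}}\,Z$ for a noisy dot product $Z = RK$, whose density $f_Z$ was shown in Section~\ref{subsec:cdf-pdf} to be symmetric about $0$. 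Hence $\Pr[Z > 0] = 1/2$ and $\Pr[Z > a/2] \geq 1/2$ because $a/2 < 0$.

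The delicate step, and in my view the main obstacle, is upgrading ``$\geq 1/2$'' to ``$> 1/2$'', which requires that $Z$ place strictly positive mass on the open interval $(a/2, 0)$. I would argue this from the explicit density in Equation~\eqref{eq:pdf-z}: for any $z \in (a/2, 0)$, any choice of $r > |z|$ gives $z/r \in (-1, 1)$, a region on which both $f_G$ and $f_B$ are strictly positive, so that $f_Z(z) > 0$ throughout this interval. Scaling by $\norm{\vtr{x} - \vtr{y}} > 0$ preserves this strict positivity, so $\Pr[\dotprod{\vtr{x} - \vtr{y}}{\noise} > a/2] > 1/2$, closing the proof.
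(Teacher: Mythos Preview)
Your proposal is correct and follows essentially the same route as the paper: compute the expectation via $\mathbb{E}[Z]=0$ (the paper cites Theorem~\ref{theorem:exp-z} directly rather than decomposing through $\mathbb{E}[K]=0$, but these are equivalent), then recast $\{L(\vtr{x})<L(\vtr{y})\}$ as $\{Z>\text{negative threshold}\}$ and use the symmetry of $Z$ to conclude the probability exceeds $\tfrac12$. Your explicit verification that $f_Z>0$ on the relevant interval is more careful than the paper, which simply writes the strict inequality without justification; one minor slip is that after identifying $\dotprod{\vtr{x}-\vtr{y}}{\noise}=\norm{\vtr{x}-\vtr{y}}\,Z$, the threshold for $Z$ itself is $a/(2\norm{\vtr{x}-\vtr{y}})$ rather than $a/2$, though this is immaterial since both are negative.
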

\begin{proof}
From Eq.~\eqref{eq:loss-function} and Theorem~\ref{theorem:exp-z}, we see that 
\begin{align*}
\mathbb{E}[ L(\vtr{x}) ] &= \mathbb{E}\left[ \norm{\vtr{w} - \vtr{x}}^2 + 2 \norm{\vtr{x}} r \cos \theta_{\noise, \vtr{x}} \right]  \\
&= \norm{\vtr{w} - \vtr{x}}^2 + 2 \norm{\vtr{x}} \mathbb{E} [r \cos \theta_{\noise, \vtr{x}}] \\
&= \norm{\vtr{w} - \vtr{x}}^2,
\end{align*}
from which it follows that the expectation is $0$ only if $\vtr{x} = \vtr{w}$, and otherwise it is greater than 0. For the second part of the theorem, we see that
\begin{align*}
     L(\vtr{y}) - L(\vtr{x}) &= \norm{\vtr{w} - \vtr{y}}^2 - 2 \dotprod{\vtr{y}}{\noise} - \norm{\vtr{w} - \vtr{x}}^2 + 2 \dotprod{\vtr{x}}{\noise} \\
     &=\norm{\vtr{w} - \vtr{y}}^2 - \norm{\vtr{w} - \vtr{x}}^2 + 2 \dotprod{\vtr{x} - \vtr{y}}{\noise}\\
     &= \norm{\vtr{w} - \vtr{y}}^2 - \norm{\vtr{w} - \vtr{x}}^2 + 2 \norm{\vtr{x} - \vtr{y}} r \cos \theta_{\noise, \vtr{x} - \vtr{y}}
\end{align*}
Define the random variable $S$ as 
\[
S = \norm{\vtr{w} - \vtr{y}}^2 - \norm{\vtr{w} - \vtr{x}}^2 + 2 \norm{\vtr{x} - \vtr{y}} Z,
\]
which follows from the fact that $r \cos \theta_{\noise, \vtr{x} - \vtr{y}}$ is distributed as $Z$. Note that from the analysis above, we have $\mathbb{E}[S] = \norm{\vtr{w} - \vtr{y}}^2 - \norm{\vtr{w} - \vtr{x}}^2$. Now, since $\norm{\vtr{w} - \vtr{x}} < \norm{\vtr{w} - \vtr{y}}$, from Eq.~\eqref{eq:cdf-z}, we have that
\begin{align}
    &\Pr[L(\vtr{x}) < L(\vtr{y}) ] = \Pr[S > 0] \nonumber\\
    &= \Pr \left[ Z > \frac{\norm{\vtr{w} - \vtr{x}}^2 - \norm{\vtr{w} - \vtr{y}}^2  }{2 \norm{\vtr{x} - \vtr{y}}}\right] \label{eq:condition}\\
    &> \Pr[Z \geq 0] \nonumber\\
    &= \frac{1}{\Gamma(n)\epsilon^{-n} B(\frac{n-1}{2}, \frac{1}{2})}  \int^{\infty}_{|0|} r^{n-1}e^{-\epsilon r} \nonumber \\
    &\times \left( \int^{0/r}_{-1}  \left(1 - k^2\right)^{\frac{n-1}{2} - 1} dk \right) dr \nonumber\\
    &= \frac{1}{2} \frac{1}{\Gamma(n)\epsilon^{-n}}  \int^{\infty}_{0} r^{n-1}e^{-\epsilon r} dr  = \frac{1}{2} \nonumber,
\end{align}
as required. The second-to-last equality follows because the distribution $f_\text{B}$ is symmetric around 0. 
\end{proof}
This theorem shows that, ignoring the actual difference in probabilities, there is nothing unusual in the nearest neighbor search for any value of $\epsilon$ and $n$: the original word is expected to be output more often, followed by its nearest neighbor, and so on. However, the issue is with the exact values of these probabilities, as we shall see next. 

Eq.~\eqref{eq:condition} gives us another insight. Suppose $\vtr{x}_1$ and $\vtr{x}_2$ are the first two neighbors of $\vtr{w}$. Then first we see that for $i = 1, 2$
\begin{align}
    \Pr[L(\vtr{w}) < L(\vtr{x}_i) ] &=
    \Pr \left[ Z > \frac{\norm{\vtr{w} - \vtr{w}}^2 - \norm{\vtr{w} - \vtr{x}_i}^2  }{2 \norm{\vtr{w} - \vtr{x}_i}}\right]\nonumber\\
    &= \Pr\left[ Z > - \frac{\norm{\vtr{w} - \vtr{x}_i}}{2}\right] \nonumber\\
    &= \Pr\left[ Z \leq \frac{\norm{\vtr{w} - \vtr{x}_i}}{2}\right]
    \label{eq:w-condition},
\end{align}
where the last equality follows from the fact that $Z$ is symmetric around 0 (see Section~\ref{subsec:cdf-pdf}). Compare this to the following where we also use symmetry of the distribution:
\begin{align}
    \Pr[L(\vtr{x}_1) < L(\vtr{x}_2) ] &= \Pr \left[ Z > \frac{\norm{\vtr{w} - \vtr{x}_1}^2 - \norm{\vtr{w} - \vtr{x}_2}^2  }{2 \norm{\vtr{x}_1 - \vtr{x}_2}}\right]  \nonumber\\
    &= \Pr \left[ Z \leq \frac{\norm{\vtr{w} - \vtr{x}_2}^2 - \norm{\vtr{w} - \vtr{x}_1}^2  }{2 \norm{\vtr{x}_1 - \vtr{x}_2}}\right]    
    \label{eq:x1-x2-condition}
\end{align}
Eqs~\eqref{eq:w-condition} and \eqref{eq:x1-x2-condition} are the same equations we derived in Theorem~\ref{theorem:pert-emb} and Eq.~\eqref{eq:cond-2} in Theorem~\ref{theorem:dist-flip}, but this time with the loss function formulation. 

\subsection{Probabilistic Interpretation} 
Eqs~\eqref{eq:w-condition} and \eqref{eq:x1-x2-condition} can be interpreted as upper bounds on the probability that a particular word will be chosen as the nearest neighbor. To see this let $N$ denote the size of the vocabulary. Given a word $\vtr{w}$, we index the vocabulary, as follows: $\vtr{x}_0 = \vtr{w}$ and $\vtr{x}_i$ denotes the $i$th nearest neighbor of $\vtr{w}$ where ties are broken arbitrarily. Let $C_{i, j}$ be the event that word embedding $\vtr{x}_i$ is closer to the noisy embedding $\vtr{w}^*$ than word embedding $\vtr{x}_j$. Furthermore, let $C_{i}$ be the event that word embedding $\vtr{x}_i$ is the nearest neighbor of $\vtr{w}^*$. From Eq.~\eqref{eq:w-condition}, we have the following condition:
\begin{align}
    \Pr[C_i] &= 1 - \Pr[\bar{C}_i] \nonumber\\
    &= 1 - 
    \Pr\left[ \bigcup_{j=0, j \neq i}^N \bar{C}_{i, j}\right] \nonumber\\
%    &\geq 1 - \sum_{j=0, j \neq i}^N \Pr[C_{i, j}]\\
    &\leq 1 - \Pr[\bar{C}_{i, 0}]\nonumber\\
    &= 1 - \Pr[L(\vtr{w}) < L(\vtr{x}_i) ]\nonumber\\
    &= 1 - \Pr\left[ Z \leq 
    \frac{\norm{\vtr{w} - \vtr{x}_i}}{2}\right]\nonumber\\
    &\leq 1 - \Pr\left[ Z \leq \frac{\norm{\vtr{w} - \vtr{x}_1}} {2}\right], \label{eq:condition-w}
\end{align}
where we have used the fact that $\bar{C}_{i, 0} \subseteq \cup_{j=0, j \neq i}^N \bar{C}_{i, j}$, and therefore the probability of former is less than or equal to the probability of the latter. Likewise, we have
\begin{align}
    \Pr[C_i] &\leq 1 - \Pr[\bar{C}_{i, j}]
            = \Pr[{C}_{i, j}] = \Pr[L(\vtr{x}_i) < L(\vtr{x}_j)]\nonumber\\
             &= \Pr \left[ Z \leq \frac{\norm{\vtr{w} - \vtr{x}_j}^2 - \norm{\vtr{w} - \vtr{x}_i}^2  }{2 \norm{\vtr{x}_i - \vtr{x}_j}}\right]  \label{eq:condition-xi},
\end{align}
for $i \neq 0$. The probabilities in Eq.~\eqref{eq:condition-w} and \eqref{eq:condition-xi} are upper bounds on the necessary condition for a word $\vtr{x}_i$ to be chosen as the nearest neighbor to the noisy embedding. 

\subsection{Results on Word Embeddings}
To show how likely the original word $\vtr{w}$ is chosen over its nearest neighbor $\vtr{x}_1$, and contrast it with how likely the nearest neighbor $\vtr{x}_1$ is chosen over the second nearest neighbor $\vtr{x}_2$, we take the right hand quantities bounding the probability mass of the random variable $Z$ in Eqs~\eqref{eq:w-condition} and \eqref{eq:x1-x2-condition}. To ease notation, we denote these quantities by $z_{\vtr{w}, \vtr{x}_1}$ and $z_{\vtr{x}_1, \vtr{x}_2}$, respectively. For each vocabulary, i.e., embedding model, we take the average of these quantities by taking 5,000 random words. The results are shown in Table~\ref{tab:rhs-quantities}. The quantity $z_{\vtr{x}_1, \vtr{x}_{101}}$ is the right hand quantity in Eq~\eqref{eq:x1-x2-condition} for the nearest neighbor and 101th neighbor of $\vtr{w}$.

\begin{table}[!ht]
\centering
% \resizebox{.9\line}{!}{
 \begin{tabular}{c|c|c|c|c} 
 \hline
Vocabulary & $n$ & $z_{\vtr{w}, \vtr{x}_1}$ & $z_{\vtr{x}_1, \vtr{x}_2}$ & $z_{\vtr{x}_1, \vtr{x}_{101}}$\\
 \hline\hline
\verb+GloVe-Twitter+ & 25 & 1.078 & 0.132 & 0.692\\
& 50 & 1.571 & 0.169 & 0.747\\
& 100 & 2.166 & 0.197 & 0.764\\
& 200 & 2.708 & 0.220 & 0.683 \\
\hline
\verb+GloVe-Wiki+ & 50 & 1.369 & 0.172 & 0.757 \\
& 100 & 1.403 & 0.152 & 0.616 \\
& 200 & 2.15 & 0.243 & 0.823 \\
& 300 & 2.896 & 0.282 & 0.999\\
\hline 
\verb+Word2Vec+ & 300 & 0.763 & 0.058 & 0.253\\
\hline
\verb+fastText+ & 300 & 1.493 & 0.239 & 0.912\\ \hline
\end{tabular}
% }
\caption{The right hand side terms  $z_{\vtr{w}, \vtr{x}_1}$,  $z_{\vtr{x}_1, \vtr{x}_2}$ and $z_{\vtr{x}_1, \vtr{x}_{101}}$ as they appear in Eqs.~\eqref{eq:w-condition} and \eqref{eq:x1-x2-condition} for different vocabularies.}
\label{tab:rhs-quantities}
\end{table}

The first thing to notice is that for any given dimension $z_{\vtr{w}, \vtr{x}_1}$ is considerably greater than $z_{\vtr{x}_1, \vtr{x}_2}$. This is because the distance of any word to its first neighbor in high dimensions is considerably higher than the (relative) difference in distances to its first two neighbors. The other observation is more of an illusion. The value of $z_{\vtr{w}, \vtr{x}_1}$ is increasing as we increase the dimensions, by looking at the \verb+GloVe+ vocabularies. Thus it may appear that the problem is exacerbated as we increase the dimensions from say 25 to 200. However, this is not exactly correct. Recall from Theorem~\ref{theorem:tail-bound-z} that the bulk of the mass of $Z$ is within $\mathcal{O}(\sqrt{n}/\epsilon)$. Thus, under two different dimensions $n_1$ and $n_2$, we expect $F_Z(z_1) \approx F_Z(z_2)$ if $\frac{z_1}{z_2} \approx \sqrt{\frac{n_1}{n_2}}$. By looking at the table, we see that this ratio across the two types of \verb+GloVe+ is more or less obeyed by the quantities 
$z_{\vtr{w}, \vtr{x}_1}$ as we cycle through the dimensions, and slightly less so by the quantities 
$z_{\vtr{x}_1, \vtr{x}_2}$. Thus, we might see the problem as being slightly more worsened in higher dimensions but not by much. 

This is more obvious as we plot the probabilities $F_Z(z_{\vtr{w}, \vtr{x}_1})$ and $F_Z(z_{\vtr{x}_1, \vtr{x}_2})$ in Figure~\ref{fig:vocab-cdf}. Although the CDF can be obtained by numerically integrating the integral in Eq.~\eqref{eq:cdf-z}, however, we found the integration to be slow using for example \verb+SymPy+.\footnote{See \url{https://www.sympy.org/}.} We therefore obtained the CDF of $Z$ using Monte Carlo simulations by randomly sampling the noise vector $10,000$ times and then finding the proportion of times it falls below $z_{\vtr{w}, \vtr{x}_1}$ or $z_{\vtr{x}_1, \vtr{x}_2}$. This can be done by sampling $R$ via the Gamma distribution $f_\text{G}$, and then sampling $K$ by checking the cosine of the angle of the noise vector against a fixed vector, e.g., $\hat{\vtr{e}}_1 = (1, 0, 0, \ldots, 0)$. 

As is evident from the figure, $F_Z(z_{\vtr{w}, \vtr{x}_1})$ dominates when $\epsilon > 1$. This means, that the probability of choosing the original word increases, and from Eq.~\eqref{eq:condition-w}, the probability that the nearest neighbor would be output decreases. Furthermore, around $\epsilon = 10$, where the original word is still not entirely certain to be output, the probability $F_Z(z_{\vtr{x}_1, \vtr{x}_{101}})$ is not overwhelming enough (not plotted) to ensure that the nearest neighbor would be output more than the 101th neighbor. Hence far away neighbors are still being output at these $\epsilon$ values, until the original word completely dominates as we further increase $\epsilon$.

% \begin{figure}[!ht]
%     \centering
%     \begin{subfigure}[t]{0.49\linewidth}
%         \includegraphics[width=\linewidth]{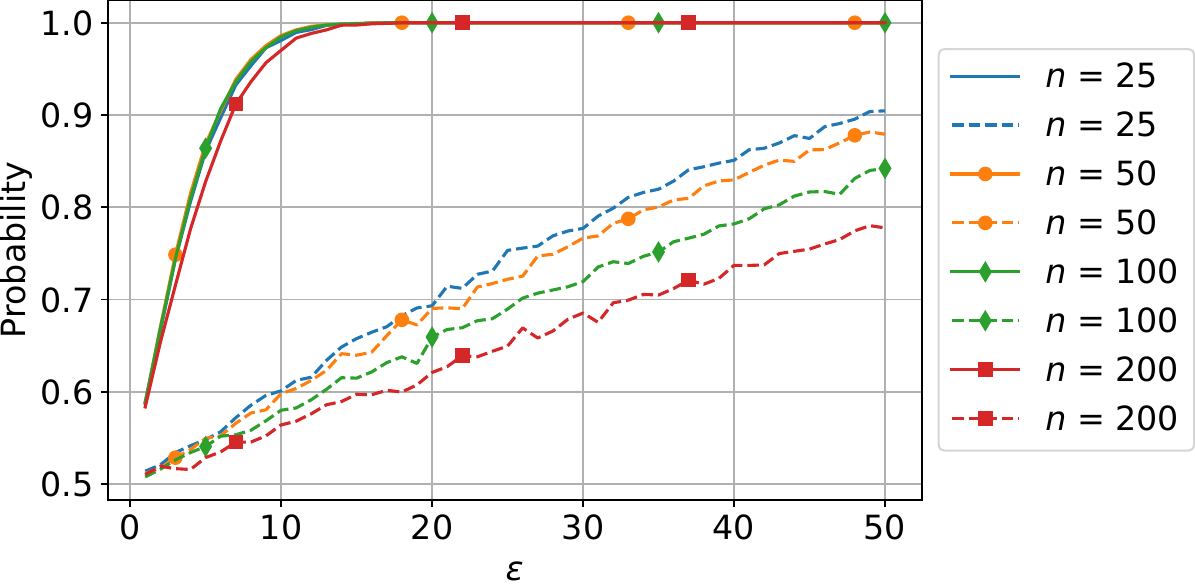}
%         \caption{\texttt{GloVe-Twitter}}
%         \label{subfig:glove-twitter-cdf}
%     \end{subfigure}
%     % \hspace{10mm}%
%     \hfill
%     \begin{subfigure}[t]{0.49\linewidth}
%         \includegraphics[width=\linewidth]{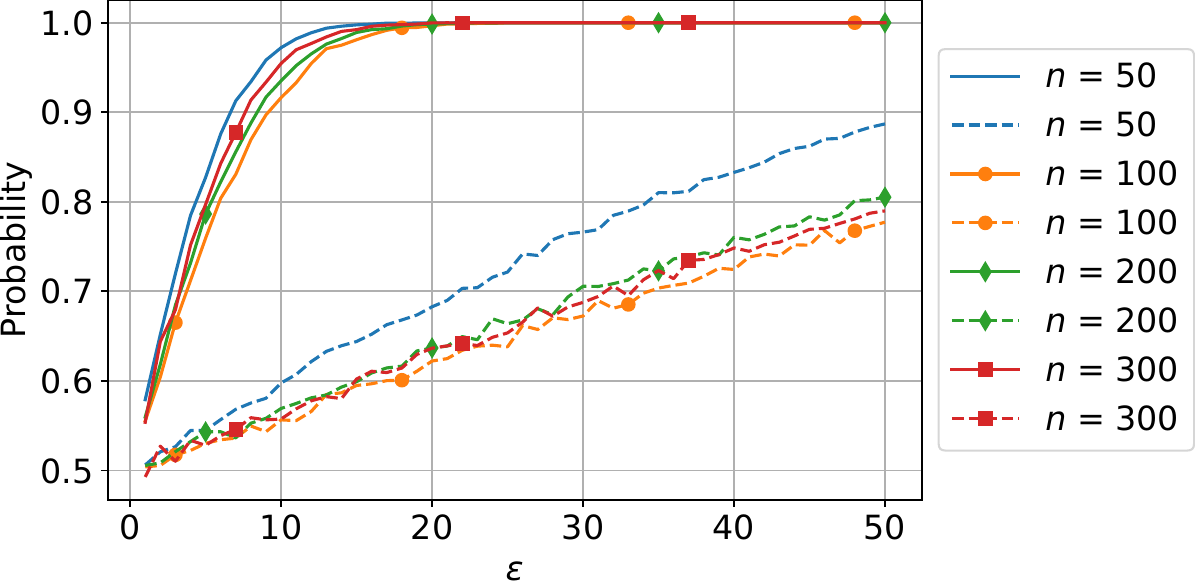}
%         \caption{\texttt{GloVe-Wiki}}
%         \label{subfig:glove-wiki-cdf}
%     \end{subfigure}
%     \caption{The probabilities $F_Z(z_{\vtr{w}, \vtr{x}_1})$ (solid) and $F_Z(z_{\vtr{x}_1, \vtr{x}_2})$ (dashed) where $z_{\vtr{w}, \vtr{x}_1}$ and $z_{\vtr{x}_1, \vtr{x}_2}$ are as given in Table~\ref{tab:rhs-quantities} for the two \texttt{GloVe} vocabularies.} 
%     \label{fig:vocab-cdf}
% \end{figure}

\begin{figure}[!ht]
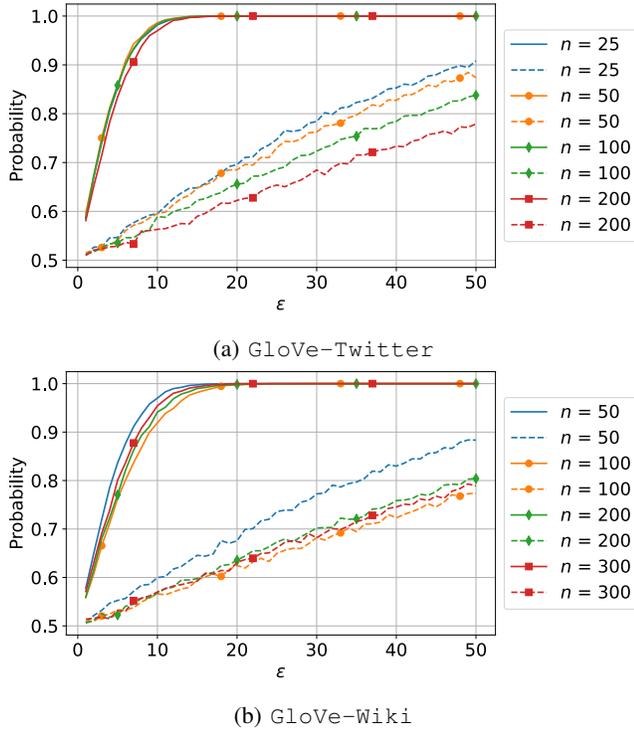

    \centering
    \begin{subfigure}[t]{\linewidth}
        \includegraphics[width=\linewidth]{figs/glove-twitter-cdf.pdf}
        \caption{\texttt{GloVe-Twitter}}
        \label{subfig:glove-twitter-cdf}
    \end{subfigure}
    % \hspace{10mm}%
    % \hfill
    \begin{subfigure}[t]{\linewidth}
        \includegraphics[width=\linewidth]{figs/glove-wiki-cdf.pdf}
        \caption{\texttt{GloVe-Wiki}}
        \label{subfig:glove-wiki-cdf}
    \end{subfigure}
    \caption{The probabilities $F_Z(z_{\vtr{w}, \vtr{x}_1})$ (solid) and $F_Z(z_{\vtr{x}_1, \vtr{x}_2})$ (dashed) where $z_{\vtr{w}, \vtr{x}_1}$ and $z_{\vtr{x}_1, \vtr{x}_2}$ are as given in Table~\ref{tab:rhs-quantities} for the two \texttt{GloVe} vocabularies.} 
    \label{fig:vocab-cdf}
\end{figure}

\section{Proposed Fix}
In light of the discussion in the previous section, if we can somehow make the distance of the original word to its nearest neighbor similar to distances between its consecutive neighbors, the issue could be resolved. To do so, one may be tempted to employ the following mechanism. Let $\vtr{w}$ be the original word. We find all nearest neighbors of $\vtr{w}$ according to the Euclidean distance, and assign the function $d_\text{NN}(\vtr{w}, \vtr{x}) = i$ if $\vtr{x}$ is the $i$th nearest neighbor of $\vtr{w}$. We have $d_\text{NN}(\vtr{w}, \vtr{w}) = 0$. We can then sample a word $\vtr{x}$ proportional to $\exp(-\epsilon d_\text{NN}(\vtr{w}, \vtr{x}))$. However, $d_\text{NN}$ is not a metric as it does not satisfy the properties of symmetry and triangle inequality as illustrated in Figure~\ref{fig:nn-not-metric}. In the figure we consider $\mathbb{R}^2$. We have $d_\text{NN}(\vtr{x}_1, \vtr{x}_2) = 2$, however, $d_\text{NN}(\vtr{x}_2, \vtr{x}_1) = 4$, violating symmetry. Furthermore, $d_\text{NN}(\vtr{x}_2, \vtr{w}) = 1$ and $d_\text{NN}(\vtr{w}, \vtr{x}_1) = 2$, implying that $d_\text{NN}(\vtr{x}_2, \vtr{x}_1) > d_\text{NN}(\vtr{x}_2, \vtr{w}) + d_\text{NN}(\vtr{w}, \vtr{x}_1)$, violating the triangle inequality.\footnote{The triangle inequality seems less of an issue, as we might be okay with $d_\text{NN}(\vtr{x}_2, \vtr{x}_1)$ being bounded by a function of $d_\text{NN}(\vtr{x}_2, \vtr{w})$ and $d_\text{NN}(\vtr{w}, \vtr{x}_1)$. See~\cite{chatzikokolakis2013broadening, fernandes2019generalised}.} Thus, the resulting mechanism cannot be $\dx$-private. 

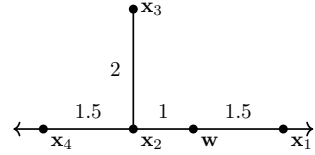
\begin{figure}[ht]
\centering
\resizebox{.5\linewidth}{!}{
\begin{tikzpicture}
\draw[thick,<->] (-3,0) -- (2,0);
\draw[thick,-] (-1,0) -- (-1,2);
\tkzDefPoint(0,0){w};
\tkzLabelPoint[below right](w){$\vtr{w}$};
\node at (w)[circle,fill,inner sep=1.5pt]{};

\tkzDefPoint(1.5,0){x1};
\tkzLabelPoint[below right](x1){$\vtr{x}_1$};
\node at (x1)[circle, fill, inner sep=1.5pt]{};

\tkzDefPoint(0.75,0.3){d1};
\node at (d1)[]{$1.5$};

\tkzDefPoint(-1,2){x3};
\tkzLabelPoint[right](x3){$\vtr{x}_3$};
\node at (x3)[circle, fill, inner sep=1.5pt]{};

\tkzDefPoint(-1.3,1){d3};
\node at (d3)[]{$2$};

\tkzDefPoint(-1,0){x2};
\tkzLabelPoint[below right](x2){$\vtr{x}_2$};
\node at (x2)[circle, fill, inner sep=1.5pt]{};

\tkzDefPoint(-0.5,0.3){d2};
\node at (d2)[]{$1$};

\tkzDefPoint(-2.5,0){x4};
\tkzLabelPoint[below right](x4){$\vtr{x}_4$};
\node at (x4)[circle, fill, inner sep=1.5pt]{};

\tkzDefPoint(-1.75,0.3){d4};
\node at (d4)[]{$1.5$};

\end{tikzpicture}
}
\caption{Example used to illustrate the fact that $d_\text{NN}$ is not a metric. The numbers show the distance of the corresponding segments.}
\label{fig:nn-not-metric}
\end{figure}

Theorem~\ref{theorem:loss-function} says that on average the nearest neighbor of the noisy embedding $\vtr{w}^*$ is the original word $\vtr{w}$, followed by $\vtr{w}$’s nearest neighbor, then its second nearest neighbor, and so on. Due to large probability differences, as shown in Figure~\ref{fig:vocab-cdf}, the original word is chosen overwhelmingly. To mitigate this, our idea is to \emph{not} select the nearest neighbor of $\vtr{w}^*$ every time, and instead occasionally sample other neighbors. This could be used as a post-processing step \emph{after} we have found the nearest word $\vtr{x}^*$ to the noisy embedding $\vtr{w}^*$ through Eq.~\eqref{eq:min-problem}. In other words, we sort the nearest neighbors of $\vtr{x}^*$ and output a neighbor proportional to $\exp(-\epsilon d_\text{NN}(\vtr{x}^*, \vtr{x}))$. More specifically any word $\vtr{x} \in \mathcal{D}$ is output with probability:
\[
\frac{\exp(- c \epsilon d_\text{NN}(\vtr{x}^*, \vtr{x}))}{\sum_{\vtr{x} \in \mathcal{D}} \exp(- c \epsilon d_\text{NN}(\vtr{x}^*, \vtr{x}))}, 
\]
where $c$ is a constant to control how many neighbors are likely to be selected. Note that $\epsilon$ is not used here for privacy protection but rather to ensure that the mechanism behaves as expected, e.g., only the original word output with very high values of $\epsilon$. A higher value such as $c > 1$ means that the mechanism will output the first few neighbors with high probability, and a lower value such as $c = 0.01$ means that more neighbors will likely to be output, of course, with probability exponentially decreasing as we move away from the original word. This is the same as the temperature variable in the softmax function. Note that we cannot use the original word $\vtr{w}$ in the above expression in place of $\vtr{x}^*$, as the resulting fix will no longer be a post-processing step (it uses the knowledge of the original word $\vtr{w}$).

Figure~\ref{fig:fix} shows the result of applying this fix to the \verb+GloVe-Wiki+ and \verb+Word2Vec+ vocabularies. The value of $c$ is chosen by trying different values and choosing the one that gives the best result in terms of the proportion of the times the original word, its 100 nearest neighbors and distant neighbors are output by the mechanism. This value can be pre-computed for each vocabulary. Compare this to Figure~\ref{fig:lap-md}, where between $\epsilon= 10$ and $20$ for \verb+GloVe-Wiki+ and between $\epsilon = 50$ and $60$ for \verb+Word2Vec+, either the original word or its distant neighbors are output by the original mechanism. Our mechanism in comparison ensures a more equitable proportion for the original word, its close and distant neighbors, in line with the one-dimensional case shown in Figure~\ref{fig:lap-1d}. A drawback of this mechanism, of course, is that the value of $c$ needs to be empirically determined for each vocabulary.

\begin{figure}[]
    \centering
    \includegraphics[width=\columnwidth]{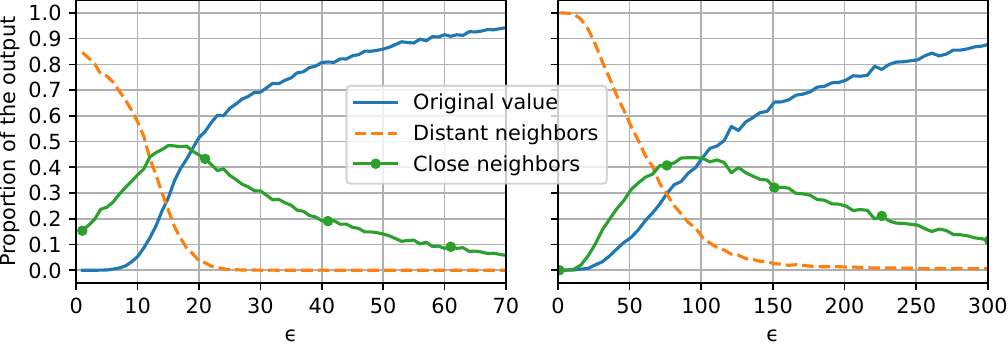}
    \caption{The proportion of times the original word, its close neighbors and distant neighbors are output by the fixed mechanism on the \texttt{GloVe-Wiki} (300 dimensions) with $c=0.04$ (left) and \texttt{Word2Vec} with $c=0.007$ (right). Close neighbors are the first 100 nearest neighbors, and all other words are distant neighbors.}
    \label{fig:fix}
\end{figure}

\begin{figure}[]
    \centering
    \includegraphics[width=\columnwidth]{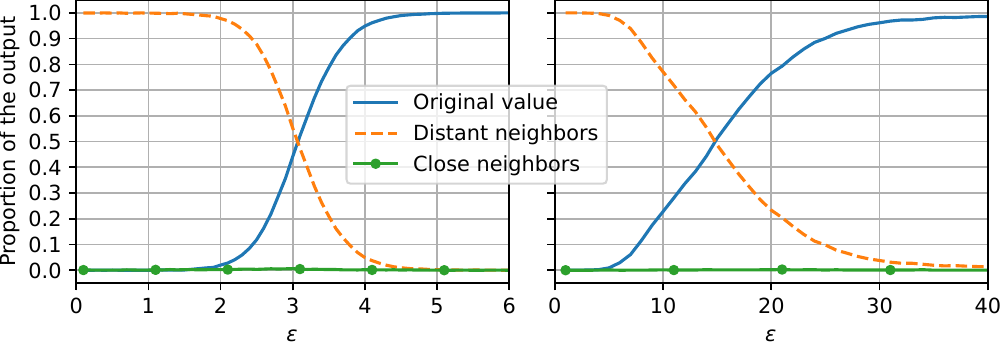}
    \caption{The proportion of times the original word, its close neighbors and distant neighbors are output by the exponential mechanism of \cite{yue2021dxprivacy} on the \texttt{GloVe-Wiki} (300 dimensions) (left) and \texttt{Word2Vec} (right). Close neighbors are the first 100 nearest neighbors, and all other words are distant neighbors.}
    \label{fig:exponentialYueEtAl}
\end{figure}

\subsubsection*{Comparison with the Exponential Mechanism}
Our fix looks similar to the exponential mechanism proposed by Yue et al.~\cite[Algorithm 1]{yue2021dxprivacy}. They pre-compute a matrix containing the probability that each token of the vocabulary is replaced by any other token. The probabilities are computed as an exponential of the Euclidean distance. More precisely, the probability that a token $\vtr{w}$ is replaced by a token $\vtr{x}$ is proportional to $\exp(-\frac{1}{2}\epsilon\cdot \norm{\vtr{w} - \vtr{x}}))$.
The authors argue that the mechanism has better performance by avoiding the nearest neighbor search in an $n$-dimensional space during sanitization. However, in Figure~\ref{fig:exponentialYueEtAl} we show that it still suffers from the problem exposed by us, namely that close neighbors are almost never sampled. The main difference between this mechanism and our fix is that we use the rank instead of a distance metric to compute probabilities. We leave it as an open problem to construct a distance metric that ``flattens out'' the distance between an embedding and its nearest neighbor, and distances between its consecutive neighbors. That is, it should behave similarly to the nearest neighbor function $d_\text{NN}$, while still satisfying the properties of a distance metric. 

\section{Related Work}
\label{sec:rw}
In addition to the multidimensional Laplace mechanism for word-level $\dx$-privacy from~\cite{fernandes2019generalised, feyisetan2020dxprivacy, qu2021bert, yue2021dxprivacy, li2023llm-prompt} detailed in this paper, a few other word-level $\dx$-privacy methods have also been proposed in the literature. In~\cite{feyisetan2019hyperbolic}, the authors propose a $\dx$-private mechanism using a distance metric in the hyperbolic space, which according to the authors, better preserves hierarchical relationships between words. For instance, the hierarchical relationship of the city of London to England. Xu et al~\cite{xu2020mahalanobis} propose the Mahalanobis mechanism, which instead of the spherical noise via the multidimensional Laplace mechanism, samples elliptical noise. The authors note that this mechanism provides better privacy for isolated points (see Section~\ref{subsec:illustration}) since the elliptical nature of the noise results in sampling points other than the original word with higher probability. Another mechanism for word-level $\dx$-privacy is the truncated exponential mechanism (TEM)~\cite{carvalho2023tem}, proposed to remove the drawback of the multidimensional Laplace mechanism which adds the same amount of noise regardless of whether a word is in a dense or sparse region of space. Applying their mechanism to a data domain would add more noise to points in low density areas. Although the mechanisms from~\cite{carvalho2023tem} and \cite{xu2020mahalanobis} may solve the issue with isolated embeddings, they are unlikely to resolve the issue raised by us. As we show in Section~\ref{sec:loss-function}, the problem stems from the large difference in probabilities of sampling the same word versus its nearest neighbors in the nearest neighbor search. The Mahalanobis mechanism~\cite{xu2020mahalanobis} also contains  nearest neighbor search as a post-processing step, and the TEM from~\cite{carvalho2023tem} uses a distance metric to sample a word. Thus the problem is likely to persist in these mechanisms. 

Instead of word-level differential privacy, several works focus on providing differential privacy at the sentence level~\cite{mattern-limits-word-level-dp, igamberdiev2023dp-bart}. More specifically, these works calculate the conditional probabilities of the next word, given a sequence of previous words. We note that both the mechanisms in~\cite{mattern-limits-word-level-dp} and \cite{ igamberdiev2023dp-bart} are for ordinary differential privacy and not $\dx$-privacy. Finally, differential privacy mechanisms for more advanced downstream tasks such as paraphrasing have also been proposed~\cite{utpala2023document-level}, but again using ordinary differential privacy. The exponential mechanism in~\cite{mattern-limits-word-level-dp} and~\cite{utpala2023document-level} is similar to our fix for multidimensional Laplace mechanism for $\dx$-privacy with the difference being that our fix is a post-processing step.

There have been some prior works showing the weaknesses of word-level $\dx$-privacy. The aforementioned work in~\cite{mattern-limits-word-level-dp} criticizes several aspects of word-level $\dx$-privacy, including introduction of grammatical errors and only making changes to individual words, rather than changing lengths of sentences. The work in \cite{pang2024reconstruction} demonstrates attacks on word and sentence-level differential privacy on text data, by reconstructing santized prompts where the original prompts are taken from the training data of the language model. The issues and attacks mentioned in these two works are tangential to the problem addressed in our paper. 

The fact that $\dx$-privacy, in particular, the multidimensional Laplace mechanism does not provide adequate protection for isolated points in the input domain has also been highlighted in the case of location privacy~\cite{chatzikokolakis2015elastic, biswas2023privic}. We discuss here why we think these solutions are not readily applicable to fix our issue in a high-dimensional space. The works in~\cite{chatzikokolakis2015elastic, biswas2023privic} use the concept of \emph{elastic distinguishability} introduced in~\cite{chatzikokolakis2015elastic}. The idea is that in addition to the usual requirement in $\dx$-privacy that nearby points should have higher probability of being sampled than far away points, the mechanism should also sample a point proportional to its \emph{probability mass} or density. For instance, an isolated location such as an island has a lower probability mass than a location in a dense urban area~\cite{chatzikokolakis2015elastic}. The mechanisms in~\cite{chatzikokolakis2015elastic, biswas2023privic} are essentially weighted versions of the exponential mechanism, where the probability of sampling a point is also weighted by its probability mass. Location has a natural candidate for assigning a probability mass to points in space: the number of people at a given location. 

First, we emphasize that their observation is related to the (location) data distribution, with the assumption that the bulk of data is concentrated with only a few isolated points. This is unlike our finding where we show that in essence every embedding in high dimensions is isolated. Second, in the context of location data, even isolated points have a non-zero probability of being output by the mechanism, as they are valid locations. Consequently, there is no need for a nearest neighbor search. Indeed the mechanisms in ~\cite{chatzikokolakis2015elastic, biswas2023privic} do not use it. This is unlike the word embedding space, where a random vector is overwhelmingly not an actual word embedding, and hence the need to perform the nearest neighbor search which, as we saw, is what causes the issue identified in this paper. Third, if we were to apply a similar mechanism to our setting, we need to determine how to assign probability mass to regions in the embedding space. One way to do this is to divide the embedding space into equal-volume hypercubes, count the number of embeddings falling in each hypercube and assign the fractional count as the probability mass. This is analogous to how the two-dimensional location map is divided into a grid in~\cite{chatzikokolakis2015elastic, biswas2023privic}. However, dividing the embedding space in hypercubes is not trivial. On the one hand, dividing each of the $n$ dimensions into $m$ shares results in $m^n$ hypercubes which is unfit for storage (e.g., $n=300$ in some of our vocabularies). On the other hand, choosing a subset of the $n$ dimensions to be divided requires an analysis of which dimension to divide or not.
Thus it is unclear whether the notion of elastic distinguishability offers a solution let alone a feasible one.

The work in~\cite{athanasiou2025enhancingmetric} also looks at the distinguishability of isolated points in the \emph{shuffle model} of differential privacy. In the shuffle model, an intermediate shuffler permutes (local) differentially private inputs of end users before submitting them to the central aggregator. This amplifies privacy compared to the local model as the server cannot tie an output to a particular user. In the $\dx$-privacy equivalent version of the shuffle model, the server can still identify isolated inputs (even when perturbed) as they stand out from the rest. To remove this drawback, one of the techniques proposed in~\cite{athanasiou2025enhancingmetric} is to encode an input in unary and then apply differentially private noise (randomized response) to each bit before sending it to the shuffler. It is not clear how these techniques can be applied to the case of text embeddings. We would need to convert embeddings into bit vectors and then combine multiple embeddings together. Moreover, once again, the isolated point issue is endemic in high dimensional embeddings as opposed to being just a few bad apples.  

Lastly, researchers have identified issues with Euclidean distance in high dimensions. Beyer et al~\cite{beyer1999nn-useful} show that the distance of the nearest neighbor approaches that of the farthest neighbor as the number of dimensions increases, thus making nearest neighbor search using the Euclidean distance meaningless, a result known as \emph{concentration of distances}~\cite{durrant2009nn-useful-converse}. The authors in~\cite{durrant2009nn-useful-converse}, expand on this to show that the concentration of distances is not applicable as long as the number of `relevant' dimensions are on par with the actual data dimensions, and hence in these cases nearest neighbor search is still meaningful. This seems to be the case with word embedding models as is backed up by the empirical results in this paper and the fact that word embedding models are trained to ensure that Euclidean distance can be used to find similar words even with high dimensions~\cite{glove, word2vec}. For more information on this topic, we refer the reader to the survey~\cite{zimek2012survey-hd}.

\section{Conclusion}
The multidimensional Laplace mechanism for $\dx$-privacy is widely used as a method to provide privacy for sensitive text data due to its ease of implementation. We have shown that the method behaves unexpectedly under common word embedding models. More specifically, it almost never outputs semantically related words, as it is expected to do for utility. We have extensively analyzed the noise generated through this mechanism and ruled out any issues with the original mechanism. Instead, we have identified the post-processing step of the nearest neighbor search as the culprit, causing each word embedding to behave as an outlier in high dimensions. We have provided an easy-to-use fix which makes the mechanism behave more expectedly. 
%Our code is available online\hyperref[footnoteCodeRepo]{\textsuperscript{\ref{footnoteCodeRepo}}} and 
The reader is invited to investigate alternative remedies.   

\begin{acks}
This research received no specific grant from any funding agency in the public, commercial, or not-for-profit sectors.
\end{acks}
%%
%% The acknowledgments section is defined using the "acks" environment
%% (and NOT an unnumbered section). This ensures the proper
%% identification of the section in the article metadata, and the
%% consistent spelling of the heading.
% \begin{acks}
% To Robert, for the bagels and explaining CMYK and color spaces.
% \end{acks}

%%
%% The next two lines define the bibliography style to be used, and
%% the bibliography file.
\bibliographystyle{ACM-Reference-Format}
\bibliography{dx-pitfalls}

%%
%% If your work has an appendix, this is the place to put it.
\appendix

\section{Proofs}
\label{app:proof}
\subsubsection*{Proof of Theorem~\ref{theorem:dist-flip}}
\begin{proof}
    \begin{align}
       & \norm{\vtr{w}^* - \vtr{y}}^2 > \norm{\vtr{w}^* - \vtr{x}}^2 \nonumber\\
&\Rightarrow        \norm{\vtr{w} - \vtr{y} + \noise}^2 > \norm{\vtr{w} - \vtr{x} + \noise}^2 \nonumber\\
&\Rightarrow        \norm{\vtr{w} - \vtr{y}}^2 + \norm{\noise}^2 + 2 \dotprod{\vtr{w} - \vtr{y}}{\noise} \nonumber\\
&> \norm{\vtr{w} - \vtr{x}}^2 + \norm{\noise}^2 + 2 \dotprod{\vtr{w} - \vtr{x}}{\noise} \nonumber\\
&\Rightarrow      \frac{1}{2} ( \norm{\vtr{w} - \vtr{y}}^2 - \norm{\vtr{w} - \vtr{x}}^2 )  >  \dotprod{\vtr{w} - \vtr{x}}{\noise} - \dotprod{\vtr{w} - \vtr{y}}{\noise} \nonumber\\
&= \dotprod{\vtr{w}}{\noise} -  \dotprod{\vtr{x}}{\noise} - \dotprod{\vtr{w}}{\noise} + \dotprod{\vtr{y}}{\noise} \nonumber\\
&= \dotprod{\vtr{y} - \vtr{x}}{\noise} \nonumber\\
&= r \norm{\vtr{y} - \vtr{x}} \cos \theta_{\vtr{\eta, \vtr{y} - \vtr{x}}} \label{eq:cond-2}\\
&\Rightarrow \frac{1}{2} ( \norm{\vtr{w} - \vtr{y} +\vtr{x} - \vtr{x}}^2 - \norm{\vtr{w} - \vtr{x}}^2 ) > r \norm{\vtr{y} - \vtr{x}} \cos \theta_{\vtr{\eta, \vtr{y} - \vtr{x}}} \nonumber\\
&\Rightarrow \frac{1}{2} ( \norm{\vtr{w} - \vtr{x}}^2 + \norm{\vtr{y} - \vtr{x}}^2 - 2\dotprod{\vtr{w} - \vtr{x}}{\vtr{y} - \vtr{x})} - \norm{\vtr{w} - \vtr{x}}^2 ) \nonumber\\
&> r \norm{\vtr{y} - \vtr{x}} \cos \theta_{\vtr{\eta, \vtr{y} - \vtr{x}}} \nonumber\\
&\Rightarrow \frac{1}{2} ( \norm{\vtr{y} - \vtr{x}}^2 - 2\dotprod{\vtr{w} - \vtr{x}}{\vtr{y} - \vtr{x})} ) > r \norm{\vtr{y} - \vtr{x}} \cos \theta_{\vtr{\eta, \vtr{y} - \vtr{x}}} \nonumber\\
&\Rightarrow \frac{1}{2} ( \norm{\vtr{y} - \vtr{x}}^2 - 2 \norm{\vtr{w} - \vtr{x}}\norm{\vtr{y} - \vtr{x})} \cos \theta_{\vtr{w} - \vtr{x}, \vtr{y} - \vtr{x}}) \nonumber \\
&> r \norm{\vtr{y} - \vtr{x}} \cos \theta_{\vtr{\eta, \vtr{y} - \vtr{x}}} \nonumber\\
&\Rightarrow \frac{1}{2} \norm{\vtr{y} - \vtr{x}} > \norm{\vtr{w} - \vtr{x}} \cos \theta_{\vtr{w} - \vtr{x}, \vtr{y} - \vtr{x}} +  r \cos \theta_{\vtr{\noise, \vtr{y} - \vtr{x}}}\nonumber
        \end{align}
as required. 
%Rearranging the above we obtain
% \[
% \frac{1}{2} \norm{\vtr{y} - \vtr{x}}  <  \norm{\vtr{w} - \vtr{x}} \cos \theta_{\vtr{w} - \vtr{x}, \vtr{y} - \vtr{x}} + r \cos \theta_{\vtr{\noise, \vtr{y} - \vtr{x}}} 
% \]
\end{proof}

\subsubsection*{Proof of Theorem~\ref{theorem:beta-moments}}
\begin{proof}
Let $B_n = B((n-1)/2, 1/2)$. For any $n \geq 4$, we note that:
\begin{align}
    \frac{B_{n-2}}{B_n} &= \frac{\Gamma(\frac{n-3}{2})\Gamma(\frac{1}{2})}{\Gamma(\frac{n-3}{2} + \frac{1}{2})} \frac{\Gamma(\frac{n-1}{2} + \frac{1}{2})}{\Gamma(\frac{n-1}{2})\Gamma(\frac{1}{2})} \nonumber\\
    &= \frac{\Gamma(\frac{n-3}{2})}{\Gamma(\frac{n}{2} - 1)} \frac{\Gamma(\frac{n}{2} )}{\Gamma(\frac{n-1}{2})} \nonumber\\
    &= \frac{\Gamma(\frac{n-3}{2})}{\Gamma(\frac{n}{2} - 1)} \frac{\Gamma(\frac{n}{2} -1  + 1)}{\Gamma(\frac{n-1}{2} -1 + 1)}\nonumber\\
    &= \frac{\Gamma(\frac{n-3}{2})}{\Gamma(\frac{n}{2} - 1)} \frac{(\frac{n}{2} -1) \Gamma(\frac{n}{2} -1)}{(\frac{n-3}{2})\Gamma(\frac{n-3}{2})}\nonumber\\
    & = \frac{n-2}{n-3} \label{eq:beta-rel},
\end{align}
    where we have used the fact that $\Gamma(x + 1) = x \Gamma(x)$ for any real number $x > 0$. We first consider the cases of $n = 2$ and $3$. When $n = 2$, we have:
\begin{align*}
    \mu(j, 2) &= \int_{-1}^{+1} k^j \frac{1}{B_2} (1 - k^2)^{\frac{0}{2} - 1} dk\\
    &=\frac{1}{\pi} \int_{-1}^{+1} \frac{k^j}{\sqrt{1 - k^2}} dk,
\end{align*}
where we have used the fact that $B_2 = B(1/2, 1/2) = \Gamma(1/2) \Gamma(1/2)/\Gamma(1) = \sqrt{\pi} \cdot \sqrt{\pi}/(1) = \pi$. Let $k = \sin t$. Then $dk = \cos t \; dt$. Therefore,
\begin{align}
    \mu(j, 2) &= \frac{1}{\pi} \int^{\pi/2}_{-\pi/2} \frac{\sin^j t}{\sqrt{1 - \sin^2 t}} \cos t \; dt \nonumber \\
    &=  \frac{1}{\pi} \int^{\pi/2}_{-\pi/2} \frac{\sin^j t}{| \cos t|}  \cos t \; dt \nonumber\\
    &= \frac{1}{\pi} \int^{\pi/2}_{-\pi/2} {\sin^j t} \; dt \nonumber,
\end{align}
where we have used the fact that $\cos t$ is positive in the interval $(-\pi/2, \pi/2)$. With $j = 0$, it is easy to see that $\mu(0, 2) = 1$. And for $j = 1$, we see that $\mu(1, 2) = 0$ since the integral of $\sin t$ is $-\cos t$. Consider therefore $j \geq 2$. Integrating by parts, we have
\begin{align}
    \mu(j, 2) &=  \frac{1}{\pi} \int^{\pi/2}_{-\pi/2}  \sin{t} \;  {\sin^{j-1} t} \; dt, \nonumber\\
    &=  - \frac{1}{\pi} \cos t \;  {\sin^{j-1} t}  \Big|^{\pi/2}_{-\pi/2} \nonumber\\
    &- \frac{1}{\pi} \int^{\pi/2}_{-\pi/2} -\cos t \;  (j - 1) \; {\sin^{j-2} t} \;  \cos t \; dt , \nonumber\\
    &= \frac{j-1}{\pi} \int^{\pi/2}_{-\pi/2} \cos^2 t \;   {\sin^{j-2} t} \; dt \nonumber\\
    &= \frac{j-1}{\pi} \int^{\pi/2}_{-\pi/2} (1 - \sin^2 t) \;   {\sin^{j-2} t} \; dt \nonumber\\
    &= \frac{j-1}{\pi} \int^{\pi/2}_{-\pi/2} {\sin^{j-2} t}  \; dt - \frac{j-1}{\pi} \int^{\pi/2}_{-\pi/2} {\sin^{j} t} \; dt \nonumber\\
    &= (j-1)\mu(j-2, 2) - (j-1) \mu(j, 2) \nonumber \\
\Rightarrow \mu(j, 2) &= \frac{j-1}{j} \mu(j-2, 2) \label{eq:mu-j-recurrence}
\end{align}
From this recurrence relation, we see that for odd $j$ we have $\mu(j, 2) = 0$, since $\mu(1, 2) = 0$. And for even $j$, by using $\mu(0, 2) = 1$ in the recurrence relation, it is easy to show that
\[
\mu(j, 2) = \frac{(j-1)!!}{j !!},
\]
where $x !! = x(x-2)(x-4) \cdots 4 \cdot 2$ is the double factorial. Thus,
\begin{equation}
\label{eq:mu-j-1-recurrence}
    \mu(j, 2) = \begin{cases}
        1, \; \text{ if } j = 0,\\
        0, \; \text{ if } j \text{ is odd},\\
        \frac{(j-1)!!}{j !!}, \; \text{ if } j \text{ is even}
    \end{cases}
\end{equation}
Next consider $n = 3$. We have
\begin{align}
  \mu(j, 3) &= \int_{-1}^{+1} k^j \frac{1}{B_3} (1 - k^2)^{\frac{2}{2} - 1} \; dk \nonumber\\
    &=\frac{1}{2} \int_{-1}^{+1} {k^j} \; dk, \nonumber\\
     &=\frac{1}{2}  \frac{k^{j+1}}{j+1} \Big|^{+1}_{-1} \nonumber\\
     &= \frac{1 - (-1)^{j + 1}}{2(j + 1)}, \nonumber
\end{align}
where we have used the fact that $B_3 = B(2/2, 1/2) = B(1, 1/2) = \Gamma(1) \Gamma(1/2)/\Gamma(1 + 1/2) = \Gamma(1/2)/((1/2) (\Gamma(1/2)) = 2$. Thus,
\begin{equation}
\label{eq:mu-j-2-recurrence}
    \mu(j, 3) = \begin{cases}
        1, \; \text{ if } j = 0,\\
        0, \; \text{ if } j \text{ is odd},\\
        \frac{1}{j+1}, \; \text{ if } j \text{ is even}
    \end{cases}
\end{equation}
Finally, consider $n \geq 4$. We have through integration by parts:
\begin{align}
  \mu(j, n) &= \int_{-1}^{+1} k^j \frac{1}{B_n} (1 - k^2)^{\frac{n-1}{2} - 1} \; dk \nonumber\\
  &= \frac{1}{B_n} \int_{-1}^{+1} k^j  (1 - k^2)^{\frac{n-1}{2} - 1} \; dk \nonumber\\
  &= \frac{1}{B_n}  \frac{k^{j+1}}{j+1}  (1 - k^2)^{\frac{n-1}{2} - 1} \Big|^{+1}_{-1} \nonumber\\
  &- \frac{1}{B_n} \int_{-1}^{+1} \frac{k^{j+1}}{j+1} \left(\frac{n-3}{2}\right)  (1 - k^2)^{\frac{n-1}{2} - 2} (-2k) \; dk \nonumber\\
  &= \frac{n-3}{B_n}\frac{1}{j+1} \int_{-1}^{+1} {k^{j+2}}  (1 - k^2)^{\frac{n-1}{2} - 2}  \; dk \nonumber\\
  &= \frac{n-3}{B_n}\frac{1}{j+1} \int_{-1}^{+1} {k^{j}} (1 - (1 - k^2))  (1 - k^2)^{\frac{n-1}{2} - 2}  \; dk \nonumber\\
  &= \frac{n-3}{B_n}\frac{1}{j+1} \int_{-1}^{+1} {k^{j}} (1 - k^2)^{\frac{n-1}{2} - 2}  \; dk  \nonumber\\
  &- \frac{n-2}{B_n}\frac{1}{j+1} \int_{-1}^{+1} {k^{j}} (1 - k^2)^{\frac{n-1}{2} - 1}  \; dk \nonumber\\
  &= \frac{n-2}{B_{n-2}}\frac{1}{j+1} \int_{-1}^{+1} {k^{j}} (1 - k^2)^{\frac{n-1}{2} - 2}  \; dk \nonumber\\
  &- \frac{n-3}{B_n}\frac{1}{j+1} \int_{-1}^{+1} {k^{j}} (1 - k^2)^{\frac{n-1}{2} - 1}  \; dk \nonumber\\
  &= \frac{(n-2)}{(j+1)}\mu(j, n-2) - \frac{(n-3)}{(j+1)} \mu(j, n) \nonumber
\end{align}
where we have used Eq.~\eqref{eq:beta-rel}. Thus,
\begin{equation}
\label{eq:mu-j-n-recurrence}
     \mu(j, n) = \frac{n-2}{n+j-2}\mu(j, n-2)
\end{equation}
If $j = 0$, then $\mu(0, n)$ is the integral of the PDF of the distribution, and hence $\mu(0, n) = 1$, for all $n \geq 2$. Consider now, odd $j$. From Eqs.~\eqref{eq:mu-j-1-recurrence}, \eqref{eq:mu-j-2-recurrence} and \eqref{eq:mu-j-n-recurrence} we see that $\mu(j, n) = 0$, for all $n \geq 2$. Thus, consider even $j$. First consider even $n$, starting from $n = 4$. We have $\mu(j, n - 2) = \mu(j, 2)$, for $n = 4$. Putting the result from Eq.~\eqref{eq:mu-j-1-recurrence} into Eq.~\eqref{eq:mu-j-n-recurrence}, we get
\begin{align}
    \mu(j, n) &= \frac{n-2}{n+j-2}  \frac{n-4}{n+j-4}  \frac{n-6}{n+j-6} \cdots \frac{4-2}{4+j-2} \frac{(j-1)!!}{j!!} \nonumber\\
    &=\frac{n-2}{n+j-2}  \frac{n-4}{n+j-4}  \frac{n-6}{n+j-6} \cdots \frac{2}{j+2} \frac{(j-1)!!}{j!!} \nonumber\\
    &= \frac{(n-2)!! (j-1)!!}{(n-2+j)!!}. \nonumber
\end{align}
Lastly, consider odd $n$ starting from $n = 5$. We have $\mu(j, n - 2) = \mu(j, 3)$, for $n = 5$. Putting the result from Eq.~\eqref{eq:mu-j-2-recurrence} into Eq.~\eqref{eq:mu-j-n-recurrence}, we get
\begin{align}
    \mu(j, n) &= \frac{n-2}{n+j-2}  \frac{n-4}{n+j-4}  \frac{n-6}{n+j-6} \cdots \frac{5-2}{5+j-2} \frac{1}{j+1} \nonumber\\
    &=\frac{n-2}{n+j-2}  \frac{n-4}{n+j-4}  \frac{n-6}{n+j-6} \cdots \frac{3}{j+3} \frac{1}{j+1} \nonumber\\
    &=\frac{n-2}{n+j-2}  \frac{n-4}{n+j-4}  \frac{n-6}{n+j-6} \cdots \frac{3}{j+3} \frac{1}{j+1} \frac{(j-1)!!}{(j-1)!!} \nonumber \\
    &= \frac{(n-2)!! (j-1)!!}{(n-2+j)!!} \nonumber
\end{align}
Putting the results together, we get for $n \geq 2$
\begin{equation*}
%\label{eq:moments-beta}
    \mu(j, n) = \begin{cases}
        1, \; &\text{ if } j = 0,\\
        0, \; &\text{ if } j \text{ is odd},\\
        \frac{(n-2)!! (j-1)!!}{(n-2+j)!!}, \; &\text{ if } j \text{ is even}
    \end{cases}
\end{equation*}
% From this, using the fact that $K_n = J_{n-1}$ and overloading notation, we get the moments of $K_n$ for all $n \geq 2$ are
% \begin{equation}
% \label{eq:moments-beta}
%     \mu(j, n) = \begin{cases}
%         1, \; &\text{ if } j = 0,\\
%         0, \; &\text{ if } j \text{ is odd},\\
%         \frac{(n-2)!! (j-1)!!}{(n-2+j)!!}, \; &\text{ if } j \text{ is even}
%     \end{cases}
% \end{equation}
From the above equation we see that $\mathbb{E}[K] = \mu(1, n) = 0$ and $\text{Var}[K] = \mathbb{E}[K^2] - (\mathbb{E}[K] )^2 = \mathbb{E}[K^2] = \mu(2, n) = \frac{1}{n} $. 
\end{proof}

\subsubsection*{Proof of Theorem~\ref{theorem:beta-sg}}
\begin{proof}
First note that $\mu = \mu(1, n) = 0$ from Eq.~\eqref{eq:moments-beta}. Therefore, using the Taylor series expansion of the exponential function, Lemma~\ref{lemma:lin-exp} and Eq.~\eqref{eq:moments-beta}, we have:
\begin{align*}
    \mathbb{E}[e^{\lambda (K - \mu)}] &=  \mathbb{E}[e^{\lambda K }]\\
    &= \mathbb{E}\left[
    \sum_{j = 0}^\infty \frac{(\lambda K)^j}{j !} \right] \\
    &= \sum_{j = 0}^\infty  \frac{\lambda^j}{j!} \mathbb{E}[K^j] \\
    &= \sum_{j = 0, j \text{ even}}^\infty  \frac{\lambda^j}{j!} \mathbb{E}[K^j] \\
    &= \sum_{j = 0}^\infty  \frac{(\lambda^2)^j}{(2j)!} \mathbb{E}[K^{2j}] \\
    &= 1 + \sum_{j = 1}^\infty  \frac{(\lambda^2)^j}{(2j)!} \mathbb{E}[K^{2j}] \\
    &= 1 + \sum_{j = 1}^\infty  \frac{(\lambda^2)^j}{(2j)!} \frac{(n-2)!! (2j-1)!!}{(n-2+2j)!!}
\end{align*}
Now, note that 
\begin{align*}
    \frac{(2j - 1)!!}{2j!} &= \frac{(2j - 1)(2j - 3) (2j - 5) \cdots 3 \cdot 1}{2j(2j-1)(2j-2)(2j-3) \cdots 3 \cdot 2 \cdot 1} \\
    &=\frac{1}{2j!!} \\
    &= \frac{1}{2j(2j-2)(2j-4) \cdots 4 \cdot 2}\\
    &= \frac{1}{2^j \cdot j(j-1)(j-2) \cdots 2 \cdot 1}\\
    &= \frac{1}{2^j j!}
\end{align*}
Therefore, the above becomes,
\begin{align*}
    &\mathbb{E}[e^{\lambda (K - \mu)}] \\
    &= 1 + \sum_{j = 1}^\infty  \frac{(\lambda^2)^j}{2^j j!} \frac{(n-2)!!}{(n-2+2j)!!}\\
    &= 1 + \sum_{j = 1}^\infty  \frac{(\lambda^2)^j}{2^j j!} \frac{1}{(n+2j - 2)(n+2j -4) \cdots (n+2) (n)}\\
    &\leq 1 + \sum_{j = 1}^\infty  \frac{(\lambda^2)^j}{2^j j!} \frac{1}{n^j}\\
    &= 1 + \sum_{j = 1}^\infty \frac{1}{j!} \left(\frac{\lambda^2}{2n}\right)^j \\
    &= \sum_{j = 0}^\infty \frac{1}{j!} \left(\frac{\lambda^2}{2n}\right)^j \\
    &= e^{-\sigma^2 \lambda^2/2},
\end{align*}
where $\sigma^2 = \frac{1}{n} = \text{Var}[K]$.
\end{proof}

\subsection*{Proof of Theorem~\ref{theorem:gamma-tail-bound}}
\begin{proof}
Recall first that Markov's inequality states that if $X$ is a non-negative random variable and $a > 0$, then 
\[
\Pr[X \geq a ] \leq \frac{\mathbb{E}[X]}{a}
\]
Also, let $M_X(t) = \mathbb{E}[e^{tX}]$ be the moment generating function of $X$, where we assume the expectation to exist within $-h < t < h$, for some real number $h > 0$~\cite[\S 2.3]{casella2024statistical}. 

Let us define another random variable $Y = e^{tX}$. Since $e^{tx}$ is an increasing function of $x$ if $t > 0$, we have that 
\begin{equation}
\label{eq:mgf-t-pos}
    \Pr[X \geq a ] = \Pr[e^{tX} \geq e^{ta} ] \leq  \frac{\mathbb{E}[e^{tX}]}{e^{ta}} = e^{-ta}M_X(t),
\end{equation}
where we have used Markov's inequality on the random variable $Y$ with $0 < t < h$. Next $e^{tx}$ is a decreasing function of $x$ if $t < 0$. Therefore
\begin{equation}
\label{eq:mgf-t-neg}
\Pr[X \leq a ] = \Pr[e^{tX} \geq e^{ta} ] \leq  \frac{\mathbb{E}[e^{tX}]}{e^{ta}} = e^{-ta}M_X(t), 
\end{equation}
with $-h < t < 0$. The moment generating function $M_\text{G}$ of $R$ is~\cite[\S 2.3]{casella2024statistical}
\begin{equation}
\label{eq:mgf-gamma}
M_\text{G}(t) = (1 - t/\epsilon)^{-n},  \quad t < \epsilon.    
\end{equation}
Let $c > 1$, $a = cn/\epsilon$ and $R = X$, then for $0 < t < \epsilon$ through Eqs.~\eqref{eq:mgf-gamma} and \eqref{eq:mgf-t-pos} we get
\begin{align*}
    \Pr\left[R \geq \frac{cn}{\epsilon}\right] &\leq e^{-cnt/\epsilon} \left(1 - \frac{t}{\epsilon}\right)^{-n} = y.
\end{align*}
To find the minimum value of $y$ in the interval $0 < t < \epsilon$, we take the derivative of $y$, which gives
\begin{align*}
    y' &= -\frac{cn}{\epsilon} e^{-cnt/\epsilon} \left(1 - \frac{t}{\epsilon}\right)^{-n} \\
    &+ e^{-cnt/\epsilon} (-n) \left(1 - \frac{t}{\epsilon}\right)^{-n-1} \left( -\frac{1}{\epsilon} \right) \\
    &= -\frac{cn}{\epsilon} e^{-cnt/\epsilon} \left(1 - \frac{t}{\epsilon}\right)^{-n} + \frac{n}{\epsilon} e^{-cnt/n\epsilon}\left(1 - \frac{t}{\epsilon}\right)^{-n-1}  \\
    &= \frac{n}{\epsilon} e^{-cnt/\epsilon} \left(1 - \frac{t}{\epsilon}\right)^{-n-1} \left( -c \left(1 - \frac{t}{\epsilon}\right) + 1\right)  \\
    &= \frac{n}{\epsilon} e^{-cnt/\epsilon} \left(1 - \frac{t}{\epsilon}\right)^{-n-1} \left( \frac{ct}{\epsilon} - c + 1\right). 
\end{align*}
We see that the derivative is $0$ if $t = \frac{(c-1)\epsilon}{c}$, which is in the interval $(0 , \epsilon)$ if $c > 1$. If we take the second derivative of $y$, we get 
\begin{align*}
    y''= &-\frac{cn^2}{\epsilon^2} e^{-cnt/\epsilon} \left(1 - \frac{t}{\epsilon}\right)^{-n-1} \left( \frac{ct}{\epsilon} - c + 1\right) \\ &+ \frac{cn}{\epsilon} e^{-cnt/\epsilon} \left(1 - \frac{t}{\epsilon}\right)^{-n-1} \\
    &+ \frac{n(n+1)}{\epsilon^2} e^{-cnt/\epsilon} \left(1 - \frac{t}{\epsilon}\right)^{-n-2} \left( \frac{ct}{\epsilon} - c + 1\right).
\end{align*}
At $t = \frac{(c-1)\epsilon}{c}$, the first and the last term in the above expression is 0. The remaining term is positive since the exponential function is greater than 0 for any value of $t$, and $1 - t/\epsilon$ is also greater than $0$ if $0 < t < \epsilon$. Thus, $y'' > 0$. Hence the function $y$ is minimized at this value of $t$. Putting this value of $t$ into the expression for $y$, we get the first statement of the theorem. 

For the second statement, again let $c > 1$, $a = n/c\epsilon$ and $R = X$, then for $-h < t < 0$ through Eqs.~\eqref{eq:mgf-gamma} and \eqref{eq:mgf-t-neg} we get
\[
\Pr\left[R \leq \frac{n}{c\epsilon}\right] \leq e^{-nt/c\epsilon} \left(1 - \frac{t}{\epsilon}\right)^{-n} = y.
\]
Again, taking the derivative of $y$ to find the value of $t$ that minimizes $y$ in the intervale $-h < t < 0$, we get
\begin{align*}
    y' = \frac{n}{\epsilon} e^{-nt/c\epsilon} \left(1 - \frac{t}{\epsilon}\right)^{-n-1} \left( \frac{t}{c\epsilon} - \frac{1}{c} + 1\right).
\end{align*}
We see that $y' = 0$ if $t = (1-c)\epsilon$, which indeed satisfies $t < 0 < \epsilon$, with $c > 1$. Taking the second derivative of $y$, we get:
\begin{align*}
    y'' = &-\frac{n^2}{c\epsilon^2} e^{-nt/c\epsilon} \left(1 - \frac{t}{\epsilon}\right)^{-n-1} \left( \frac{t}{c\epsilon} - \frac{1}{c} + 1\right) \\ &+ \frac{n}{c\epsilon} e^{-nt/c\epsilon} \left(1 - \frac{t}{\epsilon}\right)^{-n-1} \\
    &+ \frac{n(n+1)}{\epsilon^2} e^{-nt/c\epsilon} \left(1 - \frac{t}{\epsilon}\right)^{-n-2} \left( \frac{t}{c\epsilon} - \frac{1}{c} + 1\right).
\end{align*}
Once again at $t = (1-c)\epsilon$, the first and the last term in the above expression vanishes. The middle term is positive since the exponential function is positive for any value of $t$ and $1 - t/\epsilon = c > 1$ for this value of $t$. Thus, $y'' > 0$, meaning that $y$ is minimized at this value of $t$. Putting this value of $t$ in the expression for $y$ gives us the second statement of the theorem. 
\end{proof}

\section{Noise Length Follows the Gamma Distribution}
\label{app:sec:gamma}
Let $f(r, S)$ denote the joint probability density function of the noise distribution, which samples a noise vector $\noise = r \hat{\vtr{u}}$ where $\hat{\vtr{u}}$ is sampled uniformly over the surface of the $n$-dimensional hypersphere of unit radius, and where $r = \norm{\noise}$, whose distribution we seek to determine. Let $f_R(r)$ denote the probability density function of this distribution. This is given by the marginal PDF
\begin{equation}
\label{eq:marginal-pdf}
 f_R(r) = \int_{S(r)} f(r, S) \; dS,   
\end{equation}
where the integration is over $S(r)$, i.e., the surface area of the $n$-dimensional hypersphere of radius $r$. Now, $S(r) \propto r^{n-1}$ and we want the distribution at $r$ to be proportional to $\exp(-\epsilon \norm{\noise}) = \exp(-\epsilon r)$. Therefore, $f_R(r) \propto r^{n-1} \exp(-\epsilon r)$. To make this into a probability density function, we must have:
\begin{align*}
    \int_0^{\infty} c r^{n-1} e^{-\epsilon r} \; dr = c \int_0^{\infty}  r^{n-1} e^{-\epsilon r} \; dr = 1
\end{align*}
Let $I_{n-1} = \int_0^{\infty}  r^{n-1} e^{-\epsilon r} \; dr$. Then
\begin{align*}
    I_{n-1} &= \int_0^{\infty}  r^{n-1} e^{-\epsilon r} \; dr\\
    &= \frac{e^{-\epsilon r}}{-\epsilon} r^{n-1} \Big|_0^\infty - \int_0^{\infty} \frac{e^{-\epsilon r}}{-\epsilon} (n-1) r^{n-2} \; dr \\
    &= \frac{n-1}{\epsilon} \int_0^{\infty} e^{-\epsilon r} r^{n-2} \; dr \\
    &= \frac{n-1}{\epsilon} I_{n-2}.
\end{align*}
Now, 
\[
    I_1 = \int_0^{\infty}  e^{-\epsilon r} \; dr =  \frac{e^{-\epsilon r}}{-\epsilon} \Big|_0^\infty = -\frac{1}{\epsilon}(0 - 1) = \frac{1}{\epsilon}.
\]
Therefore,
\begin{align*}
    I_{n-1} &= \frac{(n-1)}{\epsilon}  \frac{(n-2)}{\epsilon} \cdots  \frac{(n-(n-1))}{\epsilon}  \frac{1}{\epsilon} \\
    &= \frac{(n-1)!}{\epsilon^n} = \frac{\Gamma(n)}{\epsilon^n}
\end{align*}
Thus, $c = {\epsilon^n}/\Gamma(n)$, and $f_R(r) = f_\text{G}(r)$, i.e., the gamma distribution given in Eq.~\eqref{eq:gamma-dist}.

As an illustration of this, suppose we want to sample a point uniformly at random on and inside a circle with radius $\rho$. We first sample a point uniformly at random on the circumference of the circle (e.g., via the method of zero mean, unit variance Gaussians as explained in Section~\ref{sec:bg}). To now select a point uniformly at random within the circle (including its circumference), we need to find the length $r$ of the point, where $0 \leq r \leq \rho$. From Eq.~\eqref{eq:marginal-pdf} we see that at radius $r$ the marginal PDF of the random variable $R$ representing the length of the point is given by
\[
f_R(r) = \int_0^{2\pi r} f(r, S) dS,
\]
where $2\pi r$ is the circumference of the circle with radius $r$. See Figure~\ref{fig:circle}.

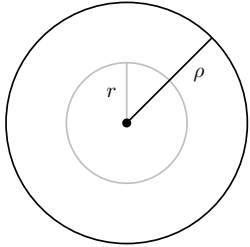
\begin{figure}[ht]
\centering
\resizebox{0.4\linewidth}{!}{
\begin{tikzpicture}

\tkzDefPoint(0,0){o};

\node[circle, thick, draw, minimum size=4cm] (c) at (0,0){};
\tkzDefPoint(1.414, 1.414){u};
%\node at (u)[circle,fill,inner sep=1.5pt]{};

\node[circle, thick, draw, lightgray, minimum size=2cm] (c) at (0,0){};

\draw[thick,-] (o) -- (u);

\tkzDefPoint(0,1){p};
\draw[thick,-, lightgray] (o) -- (p);

\node at (o)[circle,fill,inner sep=1.5pt]{};

\tkzDefPoint(1.2, 0.8){rho};
\node at (rho)[]{$\rho$};

\tkzDefPoint(-0.25,0.5){r};
\node at (r)[]{$r$};

\end{tikzpicture}
}
\caption{Sampling a point uniformly at random on and inside a circle of radius $\rho$. At radius $r$ we need to integrate over the circumference of the inner circle of length $2\pi r$.}
\label{fig:circle}
\end{figure}

Since the point needs to have a uniform distribution, we have that $f_R(r) \propto 2\pi r$. To make it into a probability density function, we get 
\[
\int_0^{\rho} c 2\pi r \; dr = 1 \Rightarrow c = \frac{1}{\pi \rho^2}.
\]
Thus $f_R(r) = \frac{2 \pi r}{\pi \rho^2} = \frac{2r}{\rho^2}$. Indeed, we see that if we let $f_Z(\vtr{z})$ denote the PDF of a point $\vtr{z}$ inside or on the circle of radius $\rho$, i.e., $\norm{\vtr{z}}^2 \leq \rho^2$, and let $f_U(\vtr{u})$ denote the uniform distribution on the circumference of the unit circle which is $1/2\pi$ if $\vtr{u}$ is on the circumference of the unit circle, and 0 otherwise, we get from Eq.~\eqref{eq:prod-ind}:
\begin{align*}
    f_Z(\vtr{z}) &= \int_0^\rho f_R(r) f_U(\vtr{z}/r)\frac{1}{r} \; dr\\
    &= \int_0^\rho \frac{2r}{\rho^2} f_U(\vtr{z}/r)\frac{1}{r} \; dr\\
    &= \frac{2}{\rho^2} \int_0^\rho  f_U(\vtr{z}/r) \; dr\\
    &= \frac{2}{\rho^2} \int_0^\rho \frac{1}{2\pi}  \delta(r - \norm{\vtr{z}}) \; dr\\
     &= \frac{1}{\pi \rho^2} \int_0^\rho \delta(r - \norm{\vtr{z}}) \; dr
    % &= \frac{1}{\pi \rho^2} \int_0^\rho \delta_{\norm{\vtr{z}}}(r) \; dr\\
    % &= \frac{1}{\pi \rho^2} \int_0^\rho \delta_{\norm{\vtr{z}}}(r) \; dr\\   
\end{align*}
where $\delta$ is the Dirac delta function~\cite[\S 9.4]{herman2015delta}. Define $t = r - \norm{\vtr{z}}$, which gives $dt = dr$. Therefore, the above becomes
\begin{align*}
    f_Z(\vtr{z}) = \frac{1}{\pi \rho^2}  \int_{-\norm{\vtr{z}}}^{\rho - \norm{\vtr{z}}} \delta(t) \; dt = \frac{1}{\pi \rho^2} \cdot 1 = \frac{1}{\pi \rho^2},
\end{align*}
where the integral is $1$ because $0 \in [-\norm{\vtr{z}}, \rho - \norm{\vtr{z}}]$~\cite[\S 9.4]{herman2015delta}. This is precisely the area of the circle with radius $\rho$, and hence the distribution is uniform as required. 
\end{document}